\newcommand{\myparagraph}[1]{\smallskip\noindent \textbf{#1}}
\spnewtheorem{assumption}{Assumption}{\itshape}{\normalshape}%
  \newcommand{\FinalOrArxiv}[2]{#1}
  \newcommand{\FinalOrArxiv}[2]{#2}
\Crefname{theorem}{Thm.}{Thm.}
\Crefname{propositioncnt}{Prop.}{Prop.}
\Crefname{lemmacnt}{Lem.}{Lem.}
\Crefname{corollarycnt}{Cor.}{Cor.}
\Crefname{factcnt}{Fact}{Fact}
\Crefname{remarkcnt}{Rem.}{Rem.}
\Crefname{notationcnt}{Notation}{Notation}
\Crefname{assumption}{Asm.}{Asm.}
\Crefname{problem}{Prob.}{Prob.}
\newcommand{\MatchAw}{Match${}_{\mathcal{A},w}$}
\newcommand{\MatchArw}{Match${}_{\mathcal{A}(r),w}$}
\newcommand{\MatchLaAt}{Match$\mathrm{\text{-}(la,at)}$}
\newcommand{\MatchLaAtAw}{Match$\mathrm{\text{-}(la,at)}_{\mathcal{A},w}$}
\newcommand{\MatchLaAtApw}{Match$\mathrm{\text{-}(la,at)}_{\mathcal{A}',w}$}
\newcommand{\MatchLaAtArw}{Match$\mathrm{\text{-}(la,at)}_{\mathcal{A}(r),w}$}
\newcommand{\DavisSLMAw}{DavisSL${}^M_{\mathcal{A}, w}$}
\newcommand{\DavisSLMzAw}{DavisSL${}^{M_0}_{\mathcal{A}, w}$}
\newcommand{\DavisSLImplMAw}{DavisSLImpl${}^M_{\mathcal{A}, w}$}
\newcommand{\MemoMAw}{Memo${}^M_{\mathcal{A}, w}$}
\newcommand{\MemoMzAw}{Memo${}^{M_0}_{\mathcal{A}, w}$}
\newcommand{\MemoExitLa}{MemoExit$\mathrm{\text{-}la}$}
\newcommand{\MemoExitLaMzAw}{MemoExit$\mathrm{\text{-}la}^{M_0}_{\mathcal{A}, w}$}
\newcommand{\MemoEnterLa}{MemoEnter$\mathrm{\text{-}la}$}
\newcommand{\MemoEnterLaMzAw}{MemoEnter$\mathrm{\text{-}la}^{M_0}_{\mathcal{A}, w}$}
\newcommand{\MemoLa}{Memo$\mathrm{\text{-}la}$}
\newcommand{\MemoLaMAw}{Memo$\mathrm{\text{-}la}^M_{\mathcal{A}, w}$}
\newcommand{\MemoLaMApw}{Memo$\mathrm{\text{-}la}^M_{\mathcal{A}', w}$}
\newcommand{\MemoLaMzAw}{Memo$\mathrm{\text{-}la}^{M_0}_{\mathcal{A}, w}$}
\newcommand{\MemoEnterAt}{MemoEnter$\mathrm{\text{-}at}$}
\newcommand{\MemoExitAt}{MemoExit$\mathrm{\text{-}at}$}
\newcommand{\MemoEnterAtMzAw}{MemoEnter$\mathrm{\text{-}at}^{M_0}_{\mathcal{A}, w}$}
\newcommand{\MemoAt}{Memo$\mathrm{\text{-}at}$}
\newcommand{\MemoAtMAzAw}{Memo$\mathrm{\text{-}at}^M_{\mathcal{A}_0, \mathcal{A}, w}$}
\newcommand{\MemoAtMAzApw}{Memo$\mathrm{\text{-}at}^M_{\mathcal{A}_0, \mathcal{A}', w}$}
\newcommand{\MemoAtMzAAw}{Memo$\mathrm{\text{-}at}^{M_0}_{\mathcal{A}, \mathcal{A}, w}$}
\newcommand{\MemoAst}{Memo$\mathrm{\text{-}\ast}^{M_0}_{\mathcal{A}, \mathcal{A}, w}$}
\newcommand{\MemoAstMAw}{Memo$\mathrm{\text{-}\ast}^M_{\mathcal{A}, w}$}
\newcommand{\MemoAstMzAw}{Memo$\mathrm{\text{-}\ast}^{M_0}_{\mathcal{A}, w}$}
\newcommand{\MemoLaAtMAzAw}{Memo$\mathrm{\text{-}(la,at)}^M_{\mathcal{A}_0, \mathcal{A}, w}$}
\newcommand{\MemoLaAtMApApw}{Memo$\mathrm{\text{-}(la,at)}^M_{\mathcal{A}', \mathcal{A}', w}$}
\newcommand{\MemoLaAtMAzApw}{Memo$\mathrm{\text{-}(la,at)}^M_{\mathcal{A}_0, \mathcal{A}', w}$}
\newcommand{\MemoLaAtMzAAw}{Memo$\mathrm{\text{-}(la,at)}^{M_0}_{\mathcal{A}, \mathcal{A}, w}$}
\begin{document}

\title{Efficient Matching with Memoization for Regexes with Look-around and Atomic Grouping\FinalOrArxiv{}{ (Extended Version)}\thanks{The authors are supported by  CREST ZT-IoT Project (No.\ JPMJCR21M3), 
 ERATO HASUO Metamathematics for Systems
Design Project (No.\ JPMJER1603), and
 ASPIRE Grant No.\ JPMJAP2301, JST.
}}
\titlerunning{Efficient Matching with Memoization for (la, at)-regexes}

%

\author{Hiroya Fujinami\inst{1,2}\orcidID{0009-0007-0794-5743} \and Ichiro Hasuo\inst{1,2}\orcidID{0000-0002-8300-4650}}
\authorrunning{H. Fujinami and I. Hasuo}

\institute{
  National Institute of Informatics, Tokyo, Japan
  \and
  SOKENDAI (The Graduate University for Advanced Studies), Kanagawa, Japan
}

\maketitle

\begin{abstract}


\emph{Regular expression (regex) matching} is fundamental in many applications, especially in web services. However, matching by \emph{backtracking}---preferred by most real-world implementations for its practical performance and backward compatibility---can suffer from so-called \emph{catastrophic backtracking}, which makes the number of backtracking super-linear and leads to the well-known ReDoS vulnerability.
Inspired by a recent algorithm by Davis et al.\ 
that runs in linear time for (non-extended) regexes, we study efficient backtracking matching for regexes with two common extensions, namely \emph{look-around} and \emph{atomic grouping}. 
We present linear-time backtracking matching algorithms for these extended regexes. Their efficiency relies on \emph{memoization}, much like the one by Davis et al.; we also strive for smaller memoization tables by carefully trimming their range.  Our experiments---we used some real-world regexes with the aforementioned extensions---confirm the performance advantage of our algorithms.

  
  \keywords{
    regular expression \and
    look-around \and
    atomic grouping \and
    pattern matching \and
    ReDoS \and
    memoization
  }
\end{abstract}
 
\section{Introduction}\label{sec:intro}

\myparagraph{Regex Matching}
\emph{Regular expressions} (\emph{regexes}) are a fundamental formalism for various pattern-matching tasks.
Many regex matching implementations, however, suffer from occasional super-linear growth of their execution time. Such excessive execution time can be exploited for DoS attacks---this is a vulnerability called \emph{regex denial of service} (\emph{ReDoS}). ReDoS is recognized as a significant security concern in many real-world systems, especially web services such as Stack Overflow and Cloudflare (see \cref{ssec:catastrophic} for more details). 

\myparagraph{Need for Efficient Backtracking Regex Matching}
The principal cause of ReDoS is \emph{catastrophic backtracking}, that is, the explosion of recursion in a backtracking-based matching algorithm. 

In regex matching, in general, a regex $r$ is converted into a non-deterministic finite automaton (NFA) $\mathcal{A}$, and the latter is executed for an input string $w$. The non-determinism of $\mathcal{A}$ can be resolved in either a \emph{depth-first} or a \emph{breadth-first} manner. The former is called \emph{backtracking regex matching}, and the latter is the \emph{on-the-fly DFA construction}. 

Catastrophic backtracking and ReDoS are phenomena unique to the former (i.e., backtracking)---as is well-known, the time complexity of the on-the-fly DFA construction is linear (i.e., $O(|w|)$). Indeed, many modern regex implementations are based on the on-the-fly DFA construction, including RE2\footnote{\url{https://github.com/google/re2}}, Go's \texttt{regexp}\footnote{\url{https://pkg.go.dev/regexp}}, and Rust's \texttt{regex}\footnote{\url{https://docs.rs/regex/latest/regex/}}.

It is practically essential, however, to make backtracking regex matching more efficient. A principal reason is \emph{consistency}. Most existing regex matching implementations use backtracking, and they return only one matching position out of many (see \cref{ssec:backtracking}). While it is possible to replace them with on-the-fly DFA matching, it is non-trivial to ensure consistency, that is, that the chosen matching position is the same as the original backtracking matching implementation.
.NET's regex implementation has a linear-time complexity backend using a derivative-based approach, which is compatible with a backtracking backend. Still, it does not support look-around and atomic grouping~\cite{DBLP:journals/pacmpl/MoseleyNRSTVWX23}.
Once the returned matching position changes, it can unexpectedly affect the behavior of all the systems (e.g., web services) that use regex matching.

Another reason for improving backtracking regex matching is its \emph{extensibility}. There are many extensions of regexes widely used---such as the ones we study, namely look-around and atomic grouping---and they are supported by few on-the-fly DFA matching implementations.

\myparagraph{Existing Work: Linear-time Backtracking Matching with Memoization}
\emph{Memoization} is a well-known technique for speeding up recursive computations.
The recent work~\cite{DBLP:conf/sp/DavisSL21} shows that memoization can be applied to backtracking regex matching with consistency in mind. Specifically, the work~\cite{DBLP:conf/sp/DavisSL21} presents a backtracking matching algorithm that runs in $O(|w|)$ time---thus, it is theoretically guaranteed to avoid catastrophic backtracking---for regexes without extensions. (They also mention application to extended regexes in~\cite{DBLP:conf/sp/DavisSL21}, but we found issues in their discussion---see \cref{rem:davis2021Err}).

\myparagraph{Our Contribution: Linear-time Backtracking Matching for Some Extended Regexes}
In this paper, we present a linear-time backtracking matching algorithm for regexes with \emph{look-around} and \emph{atomic grouping}, two real-world extensions of regexes. It uses memoization in order to achieve a linear-time complexity. We also prove that it is consistent (i.e., it chooses the same matching position as the original algorithm without memoization). 

The technical key to our algorithm is the design of suitable memoization tables. We follow the general idea in~\cite{DBLP:conf/sp/DavisSL21} of using memoization for backtracking matching, but our examination of its issues with extended regexes (\cref{rem:davis2021Err}) shows that the range---i.e., the set of possible entries---of memoization tables should be suitably extended. Specifically, the range in~\cite{DBLP:conf/sp/DavisSL21} is $\{\mathbf{false}\}$, recording only matching failures; it is extended in our algorithm to $\{ \mathsf{Failure}(j)\mid j \in \{ 0, \dots, \nu(\mathcal{A}) \} \}\cup\{\mathsf{Success}\}$.
Here, $\nu(\mathcal{A})$ is the maximum nesting depth of atomic grouping for the (extended) NFA $\mathcal{A}$,  defined in \cref{sec:memo-at}.

Our development is rigorous and systematic, based on the notion of 
NFA whose labels can themselves be NFAs. This extended notion of NFA is suggested in~\cite[Section~IX.B]{DBLP:conf/sp/DavisSL21}; in this paper, we formalize it and build its theory.

We experimentally evaluate our algorithm; the experiment results confirm its performance advantages. Additionally, we survey the usage status of look-around and atomic grouping---two regex extensions of our interest---in real-world regexes and demonstrate their wide usage (\cref{sec:exp}).

\myparagraph{Technical Contributions}
We summarize our technical contributions.
\begin{itemize}
  \item We propose a backtracking matching algorithm for regexes with look-around, proving its linear-time complexity (\cref{sec:memo-la}). This algorithm fixes the issues in the algorithm in~\cite{DBLP:conf/sp/DavisSL21} (\cref{rem:davis2021Err}) and restores correctness and linearity.

  \item We also propose a  backtracking matching algorithm for regexes with atomic grouping, proving its linear-time complexity (\cref{sec:memo-la}).
  \item We experimentally confirm the performance of our algorithms (\cref{sec:exp}).
  \item We investigate the usage status of look-around and atomic grouping in real-world regexes and confirm their wide usage (\cref{sec:exp}). 
  \item We establish a rigorous theoretical basis for our algorithms for extended regexes, namely NFAs with sub-automata (\cref{ssec:sub-automata}). 
\end{itemize}

\myparagraph{Organization}
We provide some preliminaries in \cref{sec:prelim}, such as regex extensions of our interest. Our formalization of NFAs with sub-automata is also presented here.
In \cref{sec:prev-work},
we discuss the work~\cite{DBLP:conf/sp/DavisSL21} that is closest to ours.
We present our matching algorithm for regex with look-around in \cref{sec:memo-la} and the one for regex with atomic grouping in \cref{sec:memo-at}.
Then, we discuss our implementation and experimental evaluation in \cref{sec:exp}.
We conclude in \cref{sec:conclusion}.

\FinalOrArxiv{Some additional proofs and other materials are deferred to the appendices in the extended version~\cite{FujinamiH24ESOP_arxiv_extended_ver}.}{}

\myparagraph{Related Work}
Many related works are discussed elsewhere in suitable contexts. Here, we discuss other related works.

There are many theoretical studies on look-around and atomic grouping.
The work~\cite{morihata2012} 
is a theoretical study of look-ahead operators; it shows
 how to convert them to finite automata. Another conversion based on derivatives is introduced in~\cite{DBLP:journals/jip/MiyazakiM19}.
The work~\cite{DBLP:journals/jucs/BerglundML21} conducts a fine-grained analysis of the size of DFAs obtained from converting regexes with look-ahead, improving the bounds given in~\cite{morihata2012,DBLP:journals/jip/MiyazakiM19}. 
The work~\cite{DBLP:conf/fscd/ChidaT22} 
discusses the relation between look-ahead operators and back-references in regexes.
A recent study~\cite{MamourasC24} presents a linear-time matching algorithm for regexes with look-around; it uses a memoization-like construct for efficiency. However, the compatibility with backtracking is not a concern there, unlike the current work.
%
On atomic grouping, conversion to finite automata is proposed~\cite{DBLP:conf/wia/BerglundMWW17}, where atomic grouping is simulated by look-ahead.


Another common regex extension is \emph{back-reference}.
We do not deal with this extension because
\begin{enumerate*}[label=\arabic*)]
  \item this extension is known to be non-regular (i.e., the language class defined by back-reference is beyond regular), and
  \item its matching problem is known to be NP-complete~\cite{DBLP:books/el/leeuwen90/Aho90} (thus the search for linear-time matching is doomed).
\end{enumerate*}
There are other extensions (absent operators, conditional branching, etc.), but they are  used less often (cf.~\cref{sec:exp}).

ReDoS countermeasures are an active scientific topic.
Besides efficient matching, there are two directions for them: \emph{ReDoS detection} and \emph{ReDoS repair}.
ReDoS detection is a problem that determines whether a given regex can cause catastrophic backtracking.
This can be done by finding specific structures in a transition diagram of an automaton~\cite{DBLP:conf/nss/KirrageRT13,DBLP:journals/corr/BerglundDM14,DBLP:journals/corr/RathnayakeT14,sugiyama2014,DBLP:conf/wia/WeidemanMBW16,DBLP:conf/tacas/WustholzOHD17}.
Besides, dynamic analysis, such as fuzzing~\cite{DBLP:conf/kbse/Shen000ML18}, and combinations of static and dynamic analyses~\cite{DBLP:conf/uss/LiCC0PCCC21} are studied.
ReDoS repair is a problem of modifying a given regex so that it does not cause ReDoS.
Known solutions include exploring ReDoS-free regexes using SMT solvers~\cite{DBLP:conf/kbse/LiXCCGCZ20,DBLP:conf/sp/ChidaT22} and rule-based rewriting of vulnerable regexes~\cite{DBLP:conf/uss/LiS0CLLCC0X22}.
These 
ReDoS detection and repair measures are computationally demanding, and their real-world deployment is limited.

There are other implementation-level studies on speeding up regex matching, such as Just-in-Time (JIT) compilation~\cite{DBLP:conf/cgo/Herczeg14} and FPGA~\cite{DBLP:conf/fccm/SidhuP01}.
However, these studies are not intended to prevent catastrophic backtracking. 

\section{Preliminaries}\label{sec:prelim}


We introduce preliminaries for this paper. Firstly, we present some basic concepts such as regexes, NFAs, conversion from regexes to NFAs, and backtracking matching.  We then discuss \emph{catastrophic backtracking} and the \emph{ReDoS vulnerability} that it can cause. Finally, we introduce \emph{look-around} and \emph{atomic grouping} as practical regex extensions and \emph{NFAs with sub-automata} for these extensions.


We fix a finite set $\Sigma$ as an alphabet throughout this paper.
We call sequences of elements of $\Sigma$ \emph{strings}.
The \emph{empty string} is denoted by $\varepsilon$.
For a string $w = \sigma_0 \sigma_1 \dots \sigma_{n-1}$, the \emph{length} of $w$, denoted by $|w|$, is defined as $|w| = n$.
We also write $w[i] = \sigma_i$ for $i \in \{ 0, \dots, n - 1 \}$.

We use partial functions for memoization.
For two sets $A$ and $B$, a partial function $G$ from $A$ to $B$, denoted by $G\colon A \rightharpoonup B$, is defined as a function $G\colon A \to B \cup \{ \bot \}$. Here $\bot$ is the element for  ``undefined,'' and it is assumed that $\bot\not\in B$.

Let 
$G\colon A \rightharpoonup B$ be a partial function, $a \in A$, and $b \in B$. We let $G(a) \gets b$ denote an updated partial function: it carries $a$ to $b$, and any other $x\in A$ to $G(x)$ (it is undefined if $G(x)$ is initially undefined). 

\subsection{Regexes}\label{ssec:regexes}

\emph{Regular expressions} (\emph{regexes}) are defined by the following abstract grammar.
\begin{align*}
\begin{array}{rlllllllllllllll}
   r \Coloneqq&\ 
    \sigma && \text{(a (literal) character, where $\sigma \in \Sigma$)} \quad &
    |&\ 
    \varepsilon && \text{(the empty string)} \\
    |&\ 
    r | r && \text{(an alternation)} \quad &
    |&\ 
    r\!\cdot\!r && \text{(a concatenation)} \\
    |&\ 
    r^\ast && \text{(a repetition)}
\end{array}
\end{align*}

The concatenation operator $\cdot$ may be omitted when there is no ambiguity.
The precedence of operators is as follows: repetition, concatenation, and alternation.
For example, $ab^\ast | c$ means $(a \cdot (b^\ast)) | c$.

For a regex $r$, the \emph{size of $r$}, denoted by $|r|$, is defined as follows: $|\sigma| = |\varepsilon| = 1$, $|(r_1|r_2)| = |r_1 \cdot r_2| = |r_1| + |r_2| + 1$, and $|r^\ast| = |r| + 1$.

\subsection{NFAs}\label{ssec:nfa}

A \emph{non-deterministic finite state automaton} (\emph{NFA}) is a quadruple $(Q, q_0, F, T)$, where $Q$ is a finite set of states $q_0 \in Q$ is an initial state, $F \subseteq Q$ is a set of accepting states, and $T$ is a transition function. For each $q \in Q \setminus F$, $T(q)$ can be one of the following: $T(q) = \mathsf{Eps}({q'})$, $T(q) = \mathsf{Branch}(q', q'')$, and $T(q) = \mathsf{Char}(\sigma, {q'})$ where $q', q'' \in Q$ and $\sigma \in \Sigma$.

The above definition of a transition function $T$ is tailored to our purpose of backtracking.
Compared to the common definition $\delta\colon Q \times (\{ \varepsilon \} \cup \Sigma) \to 2^Q$, it expresses general branching as combinations of certain \emph{elementary branchings}.
The latter is namely one transition by $\varepsilon$, two transitions by $\varepsilon$, and one transition by a certain character $\sigma \in \Sigma$.
This makes the description of backtracking matching easier.
Note, in particular, that the successors $q', q''$ in the branching $\mathsf{Branch}(q', q'')$ are ordered. Here, $q'$ and $q''$ are called the \emph{first} and \emph{second successors}, respectively.
This definition of transition functions is similar to the op-codes of many real-world regex-matching implementations (cf.~\cite{cox2009}).

\begin{figure}[tb]
  \centering
  \begin{minipage}[t]{0.24\textwidth}
    \centering
    \begin{tikzpicture}[initial text=, node distance=2cm, on grid,auto,scale=0.75]
      \scriptsize
      \node[state,initial] (q_0) at (0, 0) {$q_0$};
      \node[state,accepting] (q_1) at (2.25, 0) {$q_1$};
      \path[->] (q_0) edge node {$\mathsf{Char}(\sigma)$} (q_1);
    \end{tikzpicture}
    \subcaption{$\sigma\in\Sigma$}
  \end{minipage}
  \quad
  \begin{minipage}[t]{0.22\textwidth}
    \centering
    \begin{tikzpicture}[initial text=, node distance=2cm, on grid,auto,scale=0.75]
      \scriptsize
      \node[state,initial] (q_0) at (0, 0) {$q_0$};
      \node[state,accepting] (q_1) at (2, 0) {$q_1$};
      \path[->] (q_0) edge node {$\mathsf{Eps}$} (q_1);
    \end{tikzpicture}
    \subcaption{$\varepsilon$}
  \end{minipage}
  \begin{minipage}[t]{0.45\textwidth}
    \centering
    \begin{tikzpicture}[initial text=, node distance=2cm, on grid,auto,scale=0.75]
      \scriptsize
      \node[state,rectangle,minimum width=1cm] (r_1) at (1.8, 0) {$r_1$};
      \node[state,rectangle,minimum width=1cm] (r_2) at (4.1, 0) {$r_2$};
      \path[->] (0.8, 0) edge (r_1);
      \path[->] (r_1) edge node {$\mathsf{Eps}$} (r_2);
      \path[->] (r_2) edge (5.1, 0);
    \end{tikzpicture}
    \subcaption{$r_1 \cdot r_2$}
  \end{minipage}
    \begin{minipage}[t]{0.45\textwidth}
    \centering
    \begin{tikzpicture}[initial text=, node distance=2cm, on grid,auto,scale=0.75]
      \scriptsize
      \node[state,initial] (q_0) at (0, 0) {$q_0$};
      \node[state,rectangle,minimum width=1cm] (r_1) at (3, 1) {$r_1$};
      \node[state,rectangle,minimum width=1cm] (r_2) at (3, -1) {$r_2$};
      \node[state,accepting] (q_1) at (4.5, 0) {$q_1$};
      \path[-] (q_0) edge node {$\mathsf{Branch}$} (1.5, 0);
      \path[->] (1.5, 0) edge (r_1.west);
      \path[->] (1.5, 0) edge (r_2.west);
      \path[->] (r_1.east) edge (q_1);
      \path[->] (r_2.east) edge (q_1);
    \end{tikzpicture}
    \subcaption{$r_1 | r_2$}
  \end{minipage}
  \begin{minipage}[t]{0.45\textwidth}
    \centering
    \begin{tikzpicture}[initial text=, node distance=2cm, on grid,auto,scale=0.75]
      \scriptsize
      \node[state,initial] (q_0) at (0, 0) {$q_0$};
      \node[state,rectangle,minimum width=1cm] (r) at (3, 1) {$r$};
      \node[state,accepting] (q_1) at (2.5, -1) {$q_1$};
      \path[-] (q_0) edge node {$\mathsf{Branch}$} (1.5, 0);
      \path[->] (1.5, 0) edge (r.west);
      \path[-] (r.east) edge (4.5, 1);
      \path[->] (4.5, 1) edge [bend right=4cm] node [above] {$\mathsf{Eps}$} (q_0.north);
      \path[->] (1.5, 0) edge (q_1);
    \end{tikzpicture}
    \subcaption{$r^\ast$}\label{subfig:star}
  \end{minipage}
  \caption{a conversion from regexes to NFAs}\label{fig:nfa:conversion}
\end{figure}
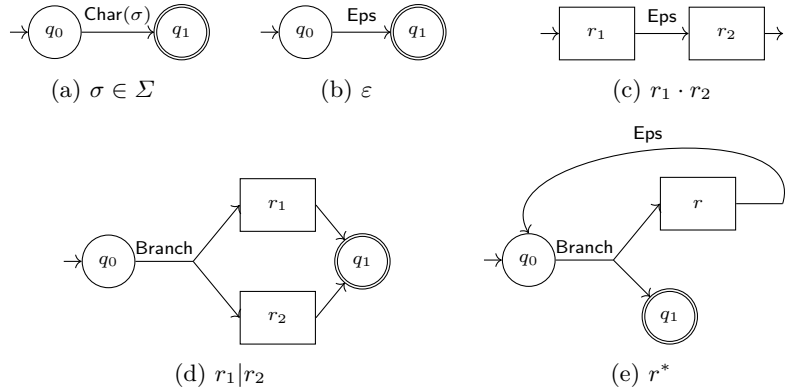

We present a conversion from regexes to NFAs (see \cref{fig:nfa:conversion}); it is similar to the Thompson--McNaughton--Yamada construction~\cite{DBLP:journals/tc/McNaughtonY60,DBLP:journals/cacm/Thompson68}.
For a regex $r$, 
 $\mathcal{A}(r)$ denotes
the \emph{NFA $\mathcal{A}$ converted from $r$}.
In the figure, labels on arrows show kinds of transitions.
In a $\mathsf{Branch}$ transition, the top arrow points to the first successor, and the bottom points to the second successor.
Rectangles indicate that the conversion is applied to sub-expressions inductively.
Because each case of this construction introduces at most two new states, for a regex $r$ and the NFA $\mathcal{A}(r) = (Q, q_0, F, T)$, we have $|Q| = O(|r|)$.

We collectively call $\mathsf{Eps}$ and $\mathsf{Branch}$ transitions \emph{$\varepsilon$-transitions}.
Later in this paper, if there are consecutive $\varepsilon$-transitions, they may be shown as a single transition in a figure.
When a certain state returns to itself by $\varepsilon$-transitions, such a sequence of $\varepsilon$-transitions is called an \emph{$\varepsilon$-loop}.
$\varepsilon$-loops are problematic in matching because they cause infinite loops in matching.

An $\varepsilon$-loop can be detected during matching by recording a position on an input string when a state is visited.
When an $\varepsilon$-loop is detected, several solutions exist to deal with it (see, e.g.,~\cite{DBLP:journals/japll/SakumaMV12}),
such as treating an $\varepsilon$-loop as a failure (e.g., JavaScript and RE2) or treating it as a success but escaping it (e.g., Perl).
These solutions can be easily adapted to our algorithms; therefore, for the simplicity of presentation, we introduce the following assumption.

\begin{assumption}[no $\varepsilon$-loops]\label{assum:no-eps-loops}
NFAs do not contain $\varepsilon$-loops.
\end{assumption}

\subsection{Backtracking Matching}\label{ssec:backtracking}

\begin{algorithm}[tb]
  \caption{a partial backtracking matching algorithm for NFAs}\label{alg:matching}
  \begin{algorithmic}[1]
    \Function{\MatchAw}{$q, i$}
     \Statex\hspace*{\algorithmicindent}
       \begin{tabular}{rl} 
        \textbf{Parameters}: & an NFA $\mathcal{A}$, and an input string $w$ \\
             \textbf{Input}: & a current state $q$, and a current position $i$ \\
            \textbf{Output}: & returns $\mathsf{SuccessAt}(i')$ if the matching succeeds, or \\
                             & returns $\mathsf{Failure}$ if the matching fails
       \end{tabular}
      \State$(Q, q_0, F, T) = \mathcal{A}$
      \If{$q \in F$}\label{line:matching-trans-begin}
        \Return$\mathsf{SuccessAt}(i)$
      \ElsIf{$T(q) = \mathsf{Eps}(q')$}
        \Return$\Call{\MatchAw}{q', i}$
      \ElsIf{$T(q) = \mathsf{Branch}(q', q'')$}\label{line:matching-branch-1}
        \State$\mathsf{result} \leftarrow \Call{\MatchAw}{q', i}$\label{line:matching-branch-2}
        \If{$\mathsf{result} = \mathsf{Failure}$}
          \Return$\Call{\MatchAw}{q'', i}$\label{line:matching-branch-3}
        \Else\ 
          \Return$\mathsf{result}$\label{line:matching-branch-4}
        \EndIf%
      \ElsIf{$T(q) = \mathsf{Char}(\sigma, q')$}
        \If{$i < |w|$ \textbf{and} $w[i] = \sigma$}
          \Return$\Call{\MatchAw}{q', i + 1}$
        \Else\ 
          \Return$\mathsf{Failure}$
        \EndIf%
      \EndIf\label{line:matching-trans-end}
    \EndFunction%
  \end{algorithmic}
\end{algorithm}

We present a basic backtracking matching algorithm for NFAs in \cref{alg:matching}. It serves as a basis for optimization by memoization, both in~\cite{DBLP:conf/sp/DavisSL21} and in the current work.

The function $\textproc{\MatchAw}$ is recursively called in this algorithm, but it must terminate on \cref{assum:no-eps-loops}.
It takes two parameters: $\mathcal{A}$ is an NFA, and $w$ is an input string.
It also takes two arguments: $q \in Q$ is the current state, and $i \in \{ 0, \dots, |w| \}$ is the current position on $w$.
$\Call{\MatchAw}{q_0, i}$ for an NFA $\mathcal{A} = (Q, q_0, F, T)$ returns $\mathsf{SuccessAt}(i')$ with the matching position $i' \in \{ 0, \dots, |w| \}$ if the matching with $\mathcal{A}$ succeeds from $i$ to $i'$ on $w$, or returns $\mathsf{Failure}$ if the matching fails.


The $\textproc{Match}$ function implements \emph{partial matching}: given the position $i\in\{0,\dotsc,|w|\}$ of interest, one obtains, by running $\Call{\MatchAw}{q_{0}, i}$,
 one ``matching position'' $i'$ (if it exists) such that $w[i]\, w[i+1]\,\dotsc w[i']$ is accepted by $\mathcal{A}$.
 Note the difference from \emph{total matching}: given $\mathcal{A}$ and $w$, it returns $\mathbf{true}$ if (the whole) $w$ is accepted by $\mathcal{A}$ and $\mathbf{false}$ otherwise.
The practical relevance of partial matching must be clear, as we can use it for text search and replacement.

\crefrange{line:matching-branch-1}{line:matching-branch-4} in \cref{alg:matching} perform matching for $\mathsf{Branch}$ transitions.
Here, the algorithm first tries matching from the first successor $q'$, and if that fails, it tries matching from the second successor $q''$ with the same position. 
This behavior is called \emph{backtracking}.


We define the \emph{regex partial matching} problem using the function $\textproc{Match}$.

\begin{problem}[regex partial matching]
\label{prob:regexParMatch}
\par\noindent
  \begin{tabular}{rl}
    \textbf{Input}: & a regex $r$, an input string $w$, and a starting position $i \in \{ 0, \dots, |w| \}$ \\
    \textbf{Output}: & returns $\Call{\MatchArw}{q_0, i}$ where $\mathcal{A}(r) = (Q, q_0, F, T)$.
  \end{tabular}
\end{problem}

\begin{remark}\label{rem:problemFormulation}
  One can say that the problem formulation is a bit strange.
  It requires, as output, a specific matching position chosen by a specific algorithm $\textproc{Match}$, while a usual formulation would require an arbitrary matching position.
  We take this formulation since we aim to show that our optimization by memoization not only solves partial matching but also is \emph{consistent} with an existing backtracking matching algorithm, in the sense we discussed in \cref{sec:intro}.
  We formulate consistency as correctness with respect to \cref{prob:regexParMatch}, that is, preserving the solution chosen by the specific algorithm $\textproc{Match}$.
  We also note that the algorithm $\textproc{Match}$ mirrors many existing implementations of regex matching (cf.\ \cref{ssec:nfa}).
\end{remark}

\subsection{Catastrophic Backtracking and ReDoS}\label{ssec:catastrophic}

In the execution of the $\textproc{Match}$ function (\cref{alg:matching}), depending on an NFA $\mathcal{A}$ and an input string $w$, the number of recursive calls for the $\textproc{Match}$ function may increase explosively, resulting in a very long matching time, as we will see in \cref{exa:catastrophic}.
This explosive increase in matching time is called \emph{catastrophic backtracking}.

\begin{figure}[tb]
  \centering
\scalebox{.7}{  \begin{tikzpicture}[initial text=, node distance=2cm, on grid,auto,scale=0.8]
    \scriptsize
    \node[state,initial] (q_0) at (0, 0) {$q_0$};
    \node[state] (q_1) at (3, 1) {$q_1$};
    \node[state] (q_2) at (6, 2) {$q_2$};
    \node[state] (q_3) at (8, 2) {$q_3$};
    \node[state] (q_4) at (6, 0) {$q_4$};
    \node[state] (q_5) at (8, 0) {$q_5$};
    \node[state] (q_6) at (10, 1) {$q_6$};
    \node[state] (q_7) at (3, -1) {$q_7$};
    \node[state,accepting] (q_8) at (5, -1) {$q_8$};
    \path[-] (q_0) edge node {$\mathsf{Branch}$} (1.5, 0);
    \path[->] (1.5, 0) edge (q_1.west);
    \path[-] (q_1) edge node {$\mathsf{Branch}$} (4.5, 1);
    \path[->] (4.5, 1) edge (q_2.west);
    \path[->] (4.5, 1) edge (q_4.west);
    \path[->] (q_2) edge node {$\mathsf{Char}(a)$} (q_3);
    \path[->] (q_4) edge node {$\mathsf{Char}(a)$} (q_5);
    \path[->] (q_3.east) edge node [above] {$\mathsf{Eps}$} (q_6);
    \path[->] (q_5.east) edge node [below] {$\mathsf{Eps}$} (q_6);
    \path[-] (q_6.east) edge (11, 1);
    \path[->] (11, 1) edge [bend right=4cm] node [below] {$\mathsf{Eps}$} (q_0.north);
    \path[->] (1.5, 0) edge (q_7);
    \path[->] (q_7) edge node {$\mathsf{Char}(b)$} (q_8);
  \end{tikzpicture}}
  \caption{the NFA $\mathcal{A}({(a|a)}^\ast b)$}\label{fig:nfa:aorastar}
\end{figure}
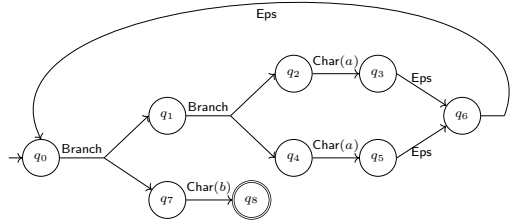

\begin{example}[catastrophic backtracking]\label{exa:catastrophic}
  Consider the NFA $\mathcal{A} = \mathcal{A}({(a|a)}^\ast b) = (Q, q_0, F, T)$ shown in \cref{fig:nfa:aorastar},  and let $w = \texttt{"$a^n c$"}$ (the string repeating $a$ of $n$ times and ending with $c$) be an input string.
  $\Call{\MatchAw}{q_0, 0}$ invokes recursive calls $O(2^n)$ times until returning $\mathsf{Failure}$.
  The reason for this recursive call explosion is to try all combinations on $q_2$ to $q_3$ and $q_4$ to $q_5$ transitions for each $a$ in $w$ during the matching.
\end{example}


\emph{Regexes denial of service} (\emph{ReDoS}) is a security vulnerability caused by catastrophic backtracking.
In ReDoS,
catastrophic backtracking causes a huge load on servers, making them unable to respond in a timely manner.
There are cases of service outages due to ReDoS at Stack Overflow in 2016~\cite{stackexchange2016} and at Cloudflare in 2019~\cite{cloudflare2019}.
Additionally, a 2018 study~\cite{DBLP:conf/uss/StaicuP18} reported that over 300 web services have potential ReDoS vulnerabilities.
Thus, ReDoS is a widespread problem in the real world, and there is a great need for countermeasures.

According to a 2019 study~\cite{DBLP:conf/kbse/MichaelDDLS19}, only 38\% of developers are aware of ReDoS.
This study also found that many developers find it difficult not only to read regexes but also to find and validate regexes to match their desires.
It is mentioned in~\cite{DBLP:conf/kbse/MichaelDDLS19} that developers use Internet resources such as Stack Overflow to find regexes.
In recent years, it has also become common to use generative AIs such as ChatGPT for such a purpose.
However, when the authors asked, ``Please suggest 10 regexes for validating email addresses'' to ChatGPT,\footnote{We used ChatGPT 3.5 (September 25, 2023 version).} 2 of the 10 suggested regexes would cause ReDoS (see \cref{tab:chatgpt-regexes}).
Developers may unknowingly use such vulnerable regexes.
For this reason, it is important to develop ReDoS countermeasures that can be achieved without the developer being aware of them. 

\begin{table}[tb]
  \caption{the regexes given by ChatGPT for the question ``Please suggest 10 regexes for validating email addresses''.\protect\footnotemark}\label{tab:chatgpt-regexes}
  \centering
  \footnotesize
  \begin{tabular}{lc}\toprule
    suggested regex (ChatGPT's comment) & vulnerable? \\
    \midrule
    \begin{tabular}{l}
      \EscVerb{^[a-zA-Z0-9._\%+-]+@[a-zA-Z0-9.-]+\\.[a-zA-Z]{2,}\$} \\ 
      \scriptsize\qquad(Basic Email Validation)
    \end{tabular} &  \\
    \begin{tabular}{l}
      \EscVerb{^[a-zA-Z0-9._\%+-]+@[a-zA-Z0-9.-]+\\.[a-zA-Z]{2,4}\$} \\ 
      \scriptsize\qquad(Basic Email Validation with TLD)
    \end{tabular} &  \\
    \begin{tabular}{l}
      \EscVerb{^[a-zA-Z0-9._\%+-]+@[a-zA-Z0-9.-]+\\.[a-zA-Z]{2,4}\$} \\
      \scriptsize\qquad(Strict Email Validation)
    \end{tabular} &  \\
    \begin{tabular}{l}
      \EscVerb{^[a-zA-Z0-9._\%+-]+@[a-zA-Z0-9.-]+\\.[a-zA-Z]{2,}(?:\\.[a-z} \\
      \EscVerb{A-Z]\{2,\})?\$} \\
      \scriptsize\qquad(Email Validation Allowing for Subdomains)
    \end{tabular} &  \\
    \begin{tabular}{l}
      \EscVerb{^[a-zA-Z0-9._\%+-]+@[a-zA-Z0-9.-]+\\.[a-zA-Z]{1,}\$} \\
      \scriptsize\qquad(Email Validation Allowing Single-Character Domain Name)
    \end{tabular} &  \\
    \begin{tabular}{l}
      \EscVerb{^[a-zA-Z0-9._\%+-]+@[a-zA-Z0-9.-]+\\.[a-zA-Z]{2,}(?:\\.[\\p\{} \\
      \EscVerb{L\}\\p{N}]{2,})?\$} \\
      \scriptsize\qquad(Email Validation Allowing Internationalized Domain Names (IDNs))
    \end{tabular} &  \\
    \begin{tabular}{l}
      \EscVerb{^(?:"[\\w\\s]+")?([a-zA-Z0-9._\%+-]+)@[a-zA-Z0-9.-]+\\.[a-zA} \\
      \EscVerb{-Z]{2,}\$} \\
      \scriptsize\qquad(Email Validation with Optional Quoted Local Part)
    \end{tabular} &  \\
    \begin{tabular}{l}
      \textcolor{red}{\EscVerb{^(?:\\([^()]*\\)|[\\w\\s]+)?([a-zA-Z0-9._\%+-]+)@[a-zA-Z0-9.-}} \\
      \textcolor{red}{\EscVerb{]+\\.[a-zA-Z]{2,}\$}} \\
      \scriptsize\qquad\textcolor{red}{(Email Validation with Optional Comments)}
    \end{tabular} & \textcolor{red}{vulnerable} \\
    \begin{tabular}{l}
      \EscVerb{^[a-zA-Z0-9._\%+-]+@[a-zA-Z0-9.-]+\\.[a-zA-Z]{2,4}\$i} \\
      \scriptsize\qquad(Email Validation Allowing for Case-Insensitive Domain)
    \end{tabular} &  \\
    \begin{tabular}{l}
      \textcolor{red}{\EscVerb{^(?:(?:[a-zA-Z0-9._\%+-]+@[a-zA-Z0-9.-]+\\.[a-zA-Z]{2,})|(}} \\
      \textcolor{red}{\EscVerb{[a-zA-Z0-9._\%+-]+)\\+[^@]+@[a-zA-Z0-9.-]+\\.[a-zA-Z]{2,})\$}} \\
      \scriptsize\qquad\textcolor{red}{(Email Validation with Support for Subaddressing)}
    \end{tabular} & \textcolor{red}{vulnerable} \\ \bottomrule
  \end{tabular}
\end{table}

Matching speed-up is a way to avoid causing ReDoS by ensuring that matching is linear in time to the length of an input string, freeing developers from worrying about ReDoS.
A popular method for matching speed-up is using breadth-first search for non-deterministic transition instead of backtracking (depth-first search); it is called the \emph{on-the-fly DFA construction}~\cite{cox2007,DBLP:journals/pacmpl/MoseleyNRSTVWX23}.
However, since look-around and atomic grouping are extensions based on backtracking (see \cref{ssec:sub-automata}), it is not obvious that they can be supported by the on-the-fly DFA construction.

\emph{Memoization} is another approach to ensuring linear-time backtracking matching; we pursue it in this paper.

\subsection{Regex Extensions: Look-around and Atomic Grouping}\label{ssec:practical-regex}



Many real-world regexes come with various extensions for enhanced expressivity~\cite{DBLP:books/daglib/0016809}. In this paper, we are interested in two classes of extensions, namely
\emph{look-around} and \emph{atomic grouping}.

\footnotetext{The second and third regexes are the same;  they are the actual output of ChatGPT.}

\myparagraph{Look-around}
Look-around
is a regex extension that allows constraints on 
 strings around a certain position.
It is also called \emph{zero-width assertion} (e.g., in~\cite{DBLP:conf/sp/DavisSL21}) because it does not consume any characters.
Look-around consists of four types: \emph{positive} or \emph{negative}, and \emph{look-ahead} or \emph{look-behind}.

Positive look-ahead is typically represented by the syntax \texttt{(?=$r$)}; its matching succeeds when, reading ahead from the current position of the input string, the matching of the inner regex $r$ succeeds. Note that the position for the overall matching does not change by the inner matching of $r$.
For example, the regex \texttt{/(?=bc)/} matches the string \texttt{"abc"} from position $1$ (i.e., after the first character \texttt{a}) without consuming any characters.

The matching of a negative look-ahead \texttt{(?!$r$)} succeeds when
the inner regex $r$ is \emph{not} matched. 

Positive or negative \emph{look-behind}---denoted by \texttt{(?<=$r$)} or \texttt{(?<!$r$)}, respectively---is similar to the above, with the difference that the inner matching of $r$ is performed \emph{backward}, i.e., from right to left.
For example, the regex \texttt{/(?<=ab)/} matches the string \texttt{"abc"} from position $2$ (i.e., before the last character \texttt{c}) without consuming any characters.

A typical use of look-around is to put a look-behind before (or a look-ahead after) a regex $r$. This is useful when one wants to perform a search or replacement of $r$ for only those occurrences that are in a certain context.
For example, the regex \EscVerb{/(?<=<p>)[^<]*(?=<\\/p>)/} matches only contents of the HTML \texttt{<p>} tag.
As another example, common assertions such as 
 \EscVerb{\\A} 
(this matches the beginning of a string) 
and  \EscVerb{\\z} (this matches the end) can be expressed using look-around, namely $\text{\EscVerb{\\A}} = \text{\EscVerb{(?<!.)}}$ and $\text{\EscVerb{\\z}} = \text{\EscVerb{(?!.)}}$.

\myparagraph{Atomic Grouping}
Atomic grouping
is a regex extension that controls backtracking behaviors.
It is designed to manually avoid problems caused by backtracking, such as catastrophic backtracking (\cref{ssec:catastrophic}).

Atomic grouping is represented by the syntax \texttt{(?>$r$)}; once the matching of the inner regex $r$ succeeds, the remaining branches in potential backtracking for matching $r$ are discarded.
For example, the regex \texttt{/(a|ab)c/} matches the string \texttt{"abc"}, but the regex \texttt{/(?>(a|ab))c/} using atomic grouping does \emph{not} match it.
This is because, once \texttt{a} in the atomic grouping matches the first character \texttt{a} of \texttt{"abc"}, the remaining branch \texttt{ab} (in \texttt{a|ab}) is discarded, and one is left with the regex \texttt{c} and the string \texttt{"bc"}.

Atomic grouping is often used for the purpose of preventing catastrophic backtracking.
In that case, it is used in combination with the repetition syntax, e.g., \texttt{(?>($r$*))} (often abbreviated as \texttt{$r$*+}) and \texttt{(?>($r$+))}  (abbrev. as \texttt{$r$++}).
These abbreviations are called \emph{possessive quantifiers}.
The former (namely \texttt{(?>($r$*))}) is intuitively understood as \texttt{(?>($\varepsilon$|$r$|$rr$|$\dotsc$))}, with the difference that longer matching is preferred (this is because the $\mathsf{Eps}$ loop is the first successor in Figure~\ref{subfig:star}). Once a longer match is found, the remaining branches (i.e., those for shorter matches) get discarded, thus preventing catastrophic backtracking. 

One might wonder if our (linear-time and thus ReDoS-free) matching algorithm should support atomic grouping---the principal use of atomic grouping is to suppress backtracking and avoid ReDoS. We do need to support it since, as we discussed in \cref{sec:intro}, ours is meant to be a drop-in replacement for matching implementations that are currently used.


\myparagraph{Our Target Extended Regexes} Our target class, namely
\emph{regexes with look-around and atomic grouping}, is defined by the following grammar.
\begin{align*}
  r \Coloneqq&\ \dots && \text{(the same as the regexes definition, \cref{ssec:regexes})} \\
  |&\ ( \texttt{?=} r )\ |\ ( \texttt{?!} r ) && \text{(positive and negative look-ahead)} \\
  |&\ ( \texttt{?<=} r )\ |\ ( \texttt{?<!} r ) && \text{(positive and negative look-behind)} \\
  |&\ ( \texttt{?>} r ) && \text{(atomic grouping)} 
\end{align*}
For brevity, 
we sometimes
refer to regexes with look-around and atomic grouping as \emph{(la, at)-regexes}.
We also refer to regexes with look-around as \emph{la-regexes} and regexes with atomic grouping as \emph{at-regexes}.

For a (la, at)-regex $r$, the size of $r$, denoted by $|r|$, is defined as the same as the regex one except for $|( \texttt{?=} r )| = |( \texttt{?>} r) | = |r| + 1$.

Look-around is known to be \emph{regular}: they can be converted to DFA, and the language family of la-regexes is the same as the regular language.
This fact is mentioned in~\cite{morihata2012,DBLP:journals/jip/MiyazakiM19,DBLP:journals/jucs/BerglundML21}.
Atomic grouping is also known to be regular in the same sense~\cite{DBLP:conf/wia/BerglundMWW17}. 
However, it is known that look-ahead and atomic grouping can make the number of states of the corresponding DFA grow exponentially~\cite{morihata2012,DBLP:journals/jip/MiyazakiM19,DBLP:journals/jucs/BerglundML21,DBLP:conf/wia/BerglundMWW17}.

In what follows, for simplicity, we only discuss positive look-ahead in discussions of look-around.
Adaptation to other look-around operators, such as negative look-behind, is straightforward.

\subsection{NFAs with Sub-automata}\label{ssec:sub-automata}


We introduce \emph{NFAs with sub-automata} for backtracking matching algorithms for (la, at)-regexes. This extended notion of NFAs is suggested in~\cite[Section~IX.B]{DBLP:conf/sp/DavisSL21}, but it seems ours is the first formal exposition.

Roughly speaking, an NFA with sub-automata is an NFA whose transitions can be labeled with---in addition to a character $\sigma\in \Sigma$, as in usual NFAs---another NFA with sub-automata. See \cref{fig:la-at-conversion}, where transitions from $q_{0}$ to $q_{1}$ are labeled with \fbox{$r$}, the NFA with sub-automata obtained by converting $r$. We annotate these transitions further with a label ($\mathsf{pla}$ for positive look-ahead, $\mathsf{at}$ for atomic grouping, etc.) that indicates which operator they arise from. Note that NFAs with sub-automata can be nested---transitions in  \fbox{$r$} in \cref{fig:la-at-conversion} can be labeled with NFAs with sub-automata, too.

Our precise definition is as follows. There, $P$ is the set that collects all states that occur in an NFA with sub-automata $\mathcal{A}$, i.e., in \begin{enumerate*}[label=\arabic*)]\item the top-level NFA, \item its label NFAs, \item their label NFAs,\end{enumerate*} and so on.

\begin{definition}[NFAs with sub-automata]\label{def:NFAwSubautom}
  An \emph{NFA with sub-automata} $\mathcal{A}$ is a quintuple $\mathcal{A}=(P, Q, q_0, F, T)$ where $P$ is a finite set of states and $Q \subseteq P$ is a set of so-called \emph{top-level states}.
  We require that the quadruple $(Q, q_0, F, T)$ is an NFA, except that the value $T(q)$ of the transition function $T$ is either
  \begin{enumerate*}[label=\arabic*)]
    \item $\mathsf{Eps}(q')$, $\mathsf{Branch}(q',q'')$, or $\mathsf{Char}(\sigma,q')$ (as in usual NFAs,  \cref{ssec:nfa}), or
    \item $\mathsf{Sub}(k, \mathcal{A}', q')$, where  $\mathcal{A}'$ is an NFA with sub-automata, $q'$ is a successor state, and $k$ is a \emph{kind label} where $k\in\{\mathsf{pla}, \mathsf{nla}, \mathsf{plb}, \mathsf{nlb}, \mathsf{at}\}$.
  \end{enumerate*}

  We further impose the following requirements.
  Firstly, we require all NFAs with sub-automata in $\mathcal{A}$ to have disjoint state spaces.
  That is, for any distinct top-level states $q,q''\in Q$ in $\mathcal{A}$, if $T(q) = \mathsf{Sub}(k, \mathcal{A}', q')$ and $T(q'') = \mathsf{Sub}(k', \mathcal{A}'', q''')$, then we must have $P' \cap P'' = \emptyset$, $Q \cap P' = \emptyset$ and  $Q \cap P'' = \emptyset$, where $\mathcal{A}' = (P', \dotsc)$ and $\mathcal{A}'' = (P'', \dotsc)$.
  Secondly, we require that the set $P$ in $\mathcal{A}=(P,\dotsc)$ is the (disjoint) union of all states that occur within $\mathcal{A}$, that is, $P = Q \cup \bigcup_{q\in Q, T(q) = \mathsf{Sub}(k, \mathcal{A}', q'), \mathcal{A}' = (P', \dotsc)} P'$.
\end{definition}

The kind label $k$
in $\mathsf{Sub}(k, \mathcal{A}', q'')$ indicates how the sub-automaton $\mathcal{A}'$ should be used (cf.\ \cref{alg:la-at-matching}). If every kind label occurring in $\mathcal{A}$ (including its sub-automata) is either 
$\mathsf{pla}, \mathsf{nla}, \mathsf{plb}$, or $\mathsf{nlb}$, then $\mathcal{A}$ is called a \emph{la-NFA}. Similarly, if every kind label is $\mathsf{at}$, $\mathcal{A}$ is called an \emph{at-NFA}. Following this convention, general NFAs with sub-automata are called \emph{(la, at)-NFAs}.

Note that the definition is recursive. Non-well-founded nesting is prohibited, however, by the finiteness of $P$. 
By the definition, if $P = Q$, then $\mathcal{A}$ does not contain any transitions labeled with sub-automata. 

In addition to $\mathsf{Eps}$ and $\mathsf{Branch}$ transitions, we refer to $\mathsf{Sub}$ transitions with a label $k \in \{ \mathsf{pla}, \mathsf{nla}, \mathsf{plb}, \mathsf{nlb} \}$ as $\varepsilon$-transitions too.
We also assume the following, similarly to \cref{assum:no-eps-loops}.

\begin{assumption}\label{assum:la-at-no-eps-loops}
  (la, at)-NFAs do not contain $\varepsilon$-loops.
\end{assumption}


For (la, at)-regexes, their conversion to (la, at)-NFAs is described by the constructions in \cref{fig:la-at-conversion}---using transitions labeled with sub-automata---in addition to the conversion for regexes in \cref{ssec:nfa}.
Note that we have $|P| = O(|r|)$ in these constructions.

\begin{figure}[tb]
  \centering
  \begin{minipage}[t]{0.45\textwidth}
    \centering
    \begin{tikzpicture}[initial text=, node distance=2cm, on grid,auto,scale=0.75]
      \scriptsize
      \node[state,initial] (q_0) at (0, 0) {$q_0$};
      \node[state,accepting] (q_1) at (3, 0) {$q_1$};
      \path[->] (q_0) edge node {$\mathsf{Sub}(\mathsf{pla}, \boxed{\rule{0pt}{1.6ex}\hspace*{0.5ex}r\hspace*{0.5ex}})$} (q_1);
    \end{tikzpicture}
    \subcaption{$( \texttt{?=} r )$ (positive look-ahead)}
  \end{minipage}
  \begin{minipage}[t]{0.45\textwidth}
    \centering
    \begin{tikzpicture}[initial text=, node distance=2cm, on grid,auto,scale=0.75]
      \scriptsize
      \node[state,initial] (q_0) at (0, 0) {$q_0$};
      \node[state,accepting] (q_1) at (3, 0) {$q_1$};
      \path[->] (q_0) edge node {$\mathsf{Sub}(\mathsf{at}, \boxed{\rule{0pt}{1.6ex}\hspace*{0.5ex}r\hspace*{0.5ex}})$} (q_1);
    \end{tikzpicture}
    \subcaption{$( \texttt{?>} r )$ (atomic grouping)}
  \end{minipage}
  \caption{a conversion from (la, at)-regexes to (la, at)-NFAs. For negative look-ahead, we use the corresponding kind label $\mathsf{nla}$. For positive and negative look-behind, besides using the kind labels $\mathsf{plb}$ and $\mathsf{nlb}$, we suitably reverse \fbox{$r$}.}\label{fig:la-at-conversion}
\end{figure}
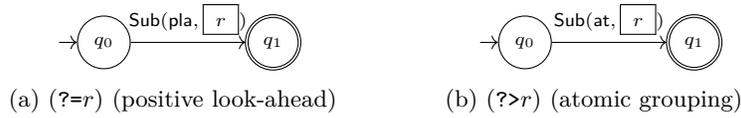

\begin{algorithm}[tb]
  \caption{a partial backtracking matching algorithm for (la, at)-NFAs}\label{alg:la-at-matching}
  \begin{algorithmic}[1]
    \Function{\MatchLaAtAw}{$q, i$}
      \Statex\hspace*{\algorithmicindent}
      \begin{tabular}{rl} 
       \textbf{Parameters}: & a (la, at)-NFA $\mathcal{A}$, and an input string $w$ \\
            \textbf{Input}: & a current state $q$, and a current position $i$ \\
           \textbf{Output}: & returns $\mathsf{SuccessAt}(i')$ if the matching succeeds, or \\
                            & returns $\mathsf{Failure}$ if the matching fails
      \end{tabular}
      \State$(P, Q, q_0, F, T) = \mathcal{A}$
      \If{$q \in F$}
        \Statex\hspace*{\algorithmicindent}$\vdots$ \Comment{the same as \cref{alg:matching}}    
      \setcounter{ALG@line}{11}
      \ElsIf{$T(q) = \mathsf{Sub}(\mathsf{pla}, \mathcal{A}', q')$}\label{line:pla-matching-1}  
     \Statex\Comment{positive look-ahead; other look-around ops.\ are similar}
        \State$(P', Q', q_0', F', T') = \mathcal{A}'$\label{line:pla-matching-2}
        \State$\mathsf{result} \gets \Call{\MatchLaAtApw}{q_0', i}$\label{line:pla-matching-3}
        \If{$\mathsf{result} = \mathsf{SuccessAt}(i')$}\label{line:pla-matching-4}
          \Return$\Call{\MatchLaAtAw}{q', i}$
        \Else\ \label{line:pla-matching-5}
          \Return$r$
        \EndIf%
      \ElsIf{$T(q) = \mathsf{Sub}(\mathsf{at}, \mathcal{A}', q')$}\label{line:at-matching-1}
         \Comment{atomic grouping}
        \State$(P', Q', q_0', F', T') = \mathcal{A}'$\label{line:at-matching-2}
        \State$\mathsf{result} \gets \Call{\MatchLaAtApw}{q_0', i}$\label{line:at-matching-3}
        \If{$\mathsf{result} = \mathsf{SuccessAt}(i')$}\label{line:at-matching-4}
          \Return$\Call{\MatchLaAtAw}{q', i'}$
        \Else\ \label{line:at-matching-5}
          \Return$r$
        \EndIf%
      \EndIf%
    \EndFunction%
  \end{algorithmic}
\end{algorithm}

The backtracking matching algorithm in~\cref{alg:matching} can be naturally extended to (la, at)-NFAs; it is shown in \cref{alg:la-at-matching}.
The clauses for positive look-ahead (\crefrange{line:pla-matching-1}{line:pla-matching-5}) and atomic grouping (\crefrange{line:at-matching-1}{line:at-matching-5}) are similar to each other, conducting matching for sub-automata. Note that their difference is in the ``return position'' ($i$ in \cref{line:pla-matching-4}; $i'$ in \cref{line:at-matching-4}).

The clauses for other look-around operators are similar to the ones for positive look-around. For look-behind, we can suitably use an additional parameter $d \in \{ -1, +1 \}$ for indicating a matching direction.

Using the extended backtracking matching algorithm (\cref{alg:la-at-matching}),
we define the  \emph{partial matching problem for (la, at)-regexes} in the same way as for regexes without extensions (\cref{prob:regexParMatch}).

\begin{problem}[(la, at)-regex partial matching]\label{prob:laAtRegexMatch}\par\noindent
  \begin{tabular}{rl}
    \textbf{Input}: & a (la, at)-regex $r$, an input string $w$, and a starting position $i$ \\
    \textbf{Output}: & returns $\Call{\MatchLaAtArw}{q_0, i}$ where $\mathcal{A}(r) = (P, Q, q_0, F, T)$.
  \end{tabular}
\end{problem}

\section{Previous Works on Regex Matching with Memoization}\label{sec:prev-work}


This section introduces an existing work~\cite{DBLP:conf/sp/DavisSL21} on regex matching with memoization, paving the way for our algorithms for (la, at)-regexes in \cref{sec:memo-la,sec:memo-at}. 




\emph{Memoization} is a programming technique that makes recursive computations more efficient by
\begin{enumerate*}[label=\arabic*)]
  \item recording arguments of a function and the corresponding return values and
  \item reusing them when the function is called with the recorded arguments.
\end{enumerate*}


As we described in \cref{ssec:backtracking}, regex matching is conducted by backtracking matching. 
It is implemented by recursive functions (see \cref{alg:matching,alg:la-at-matching}); thus, it is a natural idea to apply memoization.
Since Java 14, Java's regex implementation has indeed used memoization for optimization.
However, this optimization is not enough to completely prevent ReDoS;\@ see, e.g.,~\cite{DBLP:conf/wia/MerweMLB21}.

The work that inspires the current work the most is~\cite{DBLP:conf/sp/DavisSL21}, whose main novelty is linear-time backtracking regex matching (much like the current work). Its contributions are as follows. 
\begin{enumerate}
  \item\label{davis21Contrib1}
    Focusing on (non-extended) regexes (see~\cref{ssec:regexes}), they introduce a backtracking matching algorithm that uses memoization.
    It achieves a linear-time complexity: for an input string $w$, its runtime is $O(|w|)$. 
  \item\label{davis21Contrib2}
    They introduce \emph{selective memoization}, by which they reduce the domain of the memoization table from $Q\times \mathbb{N}$ to $Q_{\textsf{sel}}\times \mathbb{N}$.
    Here $Q_{\textsf{sel}}$ is a subset of $Q$ that is often much smaller.
  \item
    They introduce a memory-efficient compression method---based on run-length encoding (RLE)---for memoization tables. 
   \item\label{davis21Contrib4}
     Finally, they discuss adaptations of the above method to extended regexes, namely REWZWA (the extension by look-around; look-around is called \emph{zero-width assertion} in~\cite{DBLP:conf/sp/DavisSL21}) and REWBR (the extension by \emph{back-reference}). 
\end{enumerate}
We will mainly discuss the above item~\ref{davis21Contrib1}; it serves as a basis for our algorithms in \cref{sec:memo-la,sec:memo-at}. The technique in the item~\ref{davis21Contrib2} is potentially very relevant: we expect that it can be combined with the current work; doing so is future work. The content of the item~\ref{davis21Contrib2} is reviewed in \FinalOrArxiv{\cite[Appendix~A]{FujinamiH24ESOP_arxiv_extended_ver}}{\cref{appendix:selectiveMemo}} for the record.

\begin{remark}\label{rem:davis2021Err}
  On the above item~\ref{davis21Contrib4}, the work~\cite{DBLP:conf/sp/DavisSL21} claims that the time complexity of their algorithm is linear also for REWZWA ($O(|w|)$ for an input string $w$).
  However, we believe that this claim comes with the following problems.
  \begin{itemize}
    \item
      The description of an algorithm for REZWA in~\cite{DBLP:conf/sp/DavisSL21} is abstract and leaves room for interpretation. 
      The description is to ``preserve the memo functions of the sub-automata throughout the simulation of the top-level M-NFA, remembering the results from sub-simulations that begin at different indices $i$ of $w$'' \cite[Section~IX-B]{DBLP:conf/sp/DavisSL21}. For example, it is not explicit what the ``results'' are---they can mean (complete) matching results or mere success/failure.
    \item
      Moreover, the part ``that begin at different indices $i$ of $w$'' is problematic; we believe that remembering these results  does not lead to linear-time complexity. 
      This point is discussed later in \cref{rem:davis2021-err-la}.
    \item
      Besides, there is a gap between the algorithm described in the paper~\cite{DBLP:conf/sp/DavisSL21} and its prototype implementation~\cite{davis2021impl}, even for (non-extended) regexes. See \cref{rem:memo-timing}.
    \item
       Because of this gap, the implementation~\cite{davis2021impl} works in linear time for all regexes, including REZWA, but can lead to erroneous results for REZWA. See \cref{rem:davis2021-err-la}. 

  \end{itemize}
Our contribution includes a correct memoization algorithm for look-around (REZWA) that resolves the above problems.

\end{remark}



\subsection{Linear-time Backtracking Matching with Memoization}\label{ssec:davisAlgo}

\begin{algorithm}[tbp]
  \caption{a total matching algorithm with memoization for NFAs without $\varepsilon$-transitions~\cite[Listing~2]{DBLP:conf/sp/DavisSL21}.
  }\label{alg:davis-matching}
  \begin{algorithmic}[1]
    \Function{\DavisSLMAw}{$q, i$}
    \Statex\hspace*{\algorithmicindent}
    \begin{tabular}{rl} 
     \textbf{Parameters}: & an NFA $\mathcal{A}$ without $\varepsilon$-transitions, an input string $w$, and \\
                          & a memoization table $M\colon Q \times \mathbb{N} \rightharpoonup \{ \textbf{false} \}$ \\
          \textbf{Input}: & a current state $q$, and a current position $i$ \\
         \textbf{Output}: & returns \textbf{true} if the matching succeeds, or \\
                          & returns \textbf{false} if the matching fails
    \end{tabular}
      \State$(Q, q_0, F, \delta) = \mathcal{A}$
      \If{$i = |w|$}
        \Return whether $q \in F$
      \EndIf%
      \If{$M(q, i) \ne \bot$}
        \Return$M(q, i)$
      \EndIf%
      \For{$q' \in \delta(q, w[i])$}
        \If{\Call{\DavisSLMAw}{$q', i + 1$}}
          \Return$\textbf{true}$
        \EndIf%
      \EndFor%
      \State$M(q, i) \leftarrow \textbf{false}$ \label{line:algo3MemoiFalse}
      \State\Return$\textbf{false}$
    \EndFunction%
  \end{algorithmic}
\end{algorithm}

We describe the first main contribution of the work~\cite{DBLP:conf/sp/DavisSL21} (the item~\ref{davis21Contrib1} in the above list), namely a  backtracking algorithm that achieves a linear-time complexity thanks to memoization. The algorithm~\cite[Listing~2]{DBLP:conf/sp/DavisSL21} is presented in \cref{alg:davis-matching}.

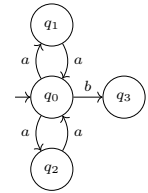
\begin{wrapfigure}[9]{r}[0pt]{3cm}
  \centering
  \vspace{-2.2em}
  \scalebox{.8}{\begin{tikzpicture}[initial text=, node distance=2cm, on grid,auto,scale=0.8]
    \scriptsize
    \node[state,initial] (q_0) at (0, 0) {$q_0$};
    \node[state] (q_1) at (0, 1.5) {$q_1$};
    \node[state] (q_2) at (0, -1.5) {$q_2$};
    \node[state] (q_3) at (1.5, 0) {$q_3$};
    \path[->] (q_0) edge [bend left] node {$a$} (q_1);
    \path[->] (q_1) edge [bend left] node {$a$} (q_0);
    \path[->] (q_0) edge [bend right] node [left] {$a$} (q_2);
    \path[->] (q_2) edge [bend right] node [right] {$a$} (q_0);
    \path[->] (q_0) edge node {$b$} (q_3);
  \end{tikzpicture}}
  \caption{the NFA $\mathcal{A}({(aa|aa)}^\ast b)$, after removing $\varepsilon$-transitions}\label{fig:nfa:aaoraastar}
\end{wrapfigure}
In this algorithm $\textproc{\DavisSLMAw}$, an NFA $\mathcal{A}$ is a quintuple $(Q, q_0, F, \delta)$ where $\delta\colon Q \times \Sigma \to 2^Q$ is an non-deterministic transition function.
An additional parameter $M\colon Q \times \mathbb{N} \rightharpoonup \{ \textbf{false} \}$ is a memoization table, which is mathematically a mutable partial function.
This algorithm implements total matching (cf.\ \cref{ssec:backtracking}).
It is notable that the memoization table records only matching \emph{failures}: a matching \emph{success} does not have to be recorded since it immediately propagates to the success of the whole problem.


This algorithm achieves a linear-time matching.
It thus prevents ReDoS.
A full proof of linear-time complexity is found in~\cite[Appendix~C]{DBLP:conf/sp/DavisSL21}, but its essence is the following (note the critical role of memoization here).
\begin{itemize}
 \item For any call $\Call{\DavisSLMAw}{q, i}$, 
 if $M(q, i)$ is defined, then the call does not invoke any further recursive calls. 
 \item When such a call returns $\mathbf{false}$, the entry $M(q,i)$ of the memoization gets defined (\cref{line:algo3MemoiFalse}).
 \item As a consequence,  the number of recursive calls of  $\textproc{\DavisSLMAw}$ is limited to $|Q|\times|w|$.
\end{itemize}

\begin{example}[matching with memoization for NFAs without $\varepsilon$-transitions]
   Let us consider the regex $(aa|aa)^\ast b$ and the corresponding NFA $\mathcal{A}((aa|aa)^\ast b)$ defined in \cref{ssec:nfa}.
   For the purpose of applying \cref{alg:davis-matching}, we manually remove its $\varepsilon$-transitions, leading to the NFA in \cref{fig:nfa:aaoraastar}. Let $w = \texttt{"$a^{2n} c$"}$ be an input string.
  \Call{\MatchAw}{$q_0, 0$} (without memoization) invokes recursive calls $O(2^n)$ times for the same reason as in \cref{exa:catastrophic}, but \Call{\DavisSLMzAw}{$q_0, 0$} (with memoization, where $M_0$ is the initial memoization table) invokes recursive calls $O(n)$ times because $M(q_0,i)$ for each position $i\in \{ 0, 2, \dots, 2n \}$ has been recorded after the first visit.
\end{example}

\begin{algorithm}[tbp]
  \caption{a variant of \cref{alg:davis-matching} implemented in their prototype~\cite{davis2021impl}}\label{alg:memoization-before-trans}
  \begin{algorithmic}[1]
    \Function{\DavisSLImplMAw}{$q, i$}
      \State$(Q, q_0, F, \delta) = \mathcal{A}$
      \If{$i = |w|$}
        \Return whether $q \in F$
      \EndIf%
      \If{$M(q, i) \ne \bot$}
        \Return$M(q, i)$
      \EndIf%
      \Statex%
            \hspace*{\algorithmicindent}\textcolor{red}{$M(q, i) \gets \mathbf{false}$}
            \Comment{$M(q,i)$ is speculatively set to $\mathbf{false}$}
      \For{$q' \in \delta(q, w[i])$}
        \Statex\hspace*{\algorithmicindent}$\vdots$
      \EndFor%
      \setcounter{ALG@line}{6}
      \State \sout{$M(q, i) \gets \mathbf{false}$}
            \Comment{moved up}
      \Statex$\vdots$
    \EndFunction%
  \end{algorithmic}
\end{algorithm}

\begin{remark}\label{rem:memo-timing}
Following the discussion in  \cref{rem:davis2021Err},
here we describe the gap between \cref{alg:davis-matching}---the algorithm described in the paper~\cite{DBLP:conf/sp/DavisSL21}---and its prototype implementation~\cite{davis2021impl}. The latter is shown in \cref{alg:memoization-before-trans}. 

The precise difference between the two algorithms is that \cref{line:algo3MemoiFalse} in \cref{alg:davis-matching} is moved up to the moment just before the for-loop, in \cref{alg:memoization-before-trans}. It is not hard to see that this modification does not affect the correctness of the algorithm: if the pair $(q, i)$ is visited again in the future, it means that the current matching from $(q, i)$ did not succeed, and backtracking occurred. Note that, in case the current matching is successful, the function call returns $\mathbf{true}$ so the memoization content $M(q, i)$ should not matter.

However, the above argument is true only when there is no look-around. (A detailed discussion is in \cref{exa:la-issue}.)
This point seems to be missed in the implementation~\cite{davis2021impl}.
\end{remark}

\subsection{Matching with Memoization Adapted to the Current Formalism}\label{ssec:memo}

In \cref{alg:memoization}, we present an adaptation of \cref{alg:davis-matching} to our formalism, especially our definition of NFA (\cref{ssec:nfa}) that offers fine-grained handling of nondeterminism. \cref{alg:memoization} has been adapted also to solve \emph{partial} matching (it returns a matching position $i'$) rather than total matching as in \cref{alg:davis-matching} (cf.\ \cref{ssec:backtracking}). \cref{alg:memoization} serves as a basis towards our extensions to look-around and atomic grouping in \crefrange{sec:memo-la}{sec:memo-at}.

The adaptation is straightforward: \cref{line:memoization-trans-begin} ensures that the algorithm solves partial matching; the rest is a natural adaptation of the for-loop of \cref{alg:davis-matching} to our definition of NFA (\cref{ssec:nfa}). The algorithm terminates thanks to \cref{assum:no-eps-loops}. 
We note that the type of memoization tables does not have to be changed compared to \cref{alg:davis-matching}.


\begin{algorithm}[tbp]
  \caption{a partial matching algorithm with memoization.  An adaptation of \cref{alg:davis-matching} from~\cite{DBLP:conf/sp/DavisSL21}, and a basis for our algorithms (\cref{alg:la-memoization,alg:at-memoization})}\label{alg:memoization}
  \begin{algorithmic}[1]
    \Function{\MemoMAw}{$q, i$}
      \Statex\hspace*{\algorithmicindent}
      \begin{tabular}{rl} 
       \textbf{Parameters}: & an NFA $\mathcal{A}$, an input string $w$, and \\
                            & a memoization table $M\colon Q \times \mathbb{N} \rightharpoonup \{ \mathsf{Failure} \}$ \\
            \textbf{Input}: & a current state $q$, and a current position $i$ \\
           \textbf{Output}: & returns $\mathsf{SuccessAt}(i')$ if the matching succeeds, or \\
                            & returns $\mathsf{Failure}$ if the matching fails
      \end{tabular}
      \State$(Q, q_0, F, T) = \mathcal{A}$
      \If{$M(q, i) \ne \bot$}\label{line:memoization-ret}
        \Return$M(q, i)$
      \EndIf%
      \State$\mathsf{result} \gets \bot$
      \If{$q \in F$}\label{line:memoization-trans-begin}
          $\mathsf{result} \gets \mathsf{SuccessAt}(i)$
      \ElsIf{$T(q) = \mathsf{Eps}({q'})$}
          $\mathsf{result} \gets \Call{\MemoMAw}{q', i}$
      \ElsIf{$T(q) = \mathsf{Branch}({q'}, q'')$}
        \State$\mathsf{result} \leftarrow \Call{\MemoMAw}{q', i}$
        \If{$\mathsf{result} = \mathsf{Failure}$}
            $\mathsf{result} \gets \Call{\MemoMAw}{q'', i}$
        \EndIf%
      \ElsIf{$T(q) = \mathsf{Char}(\sigma, q')$}
        \If{$i < |w|$ \textbf{and} $w[i] = \sigma$}
            $\mathsf{result} \gets \Call{\MemoMAw}{q', i + 1}$
        \Else\ 
            $\mathsf{result} \gets \mathsf{Failure}$
        \EndIf%
      \EndIf%
      \label{line:memoization-trans-end}
      \If{$\mathsf{result} = \mathsf{Failure}$}\label{line:memoization-record}
        $M(q, i) \gets \mathsf{Failure}$
      \EndIf%
      \State\Return$\mathsf{result}$
      \Comment{$\mathsf{result}\neq\bot$, as one can easily see}
    \EndFunction%
  \end{algorithmic}
\end{algorithm}

\cref{alg:memoization} exhibits the same desired properties as \cref{alg:davis-matching}, namely correctness (with respect to \cref{prob:regexParMatch}) and linear-time complexity. We formally state these properties for the record; here,  $M_0\colon Q \times \mathbb{N} \rightharpoonup \{ \mathsf{Failure} \}$ is the initial memoization table (its entry is anywhere $\bot$). 

\noindent
\begin{minipage}{\textwidth}
 \begin{theorem}[linear-time complexity of \cref{alg:memoization}]\label{thm:algo-lin}
  For an NFA $\mathcal{A} = (Q, q_0, F, T)$, an input string $w$, and an position $i \in \{ 0, \dots, |w| \}$, $\Call{\MemoMzAw}{q_0, i}$ terminates with $O(|w|)$ recursive calls.
 \end{theorem}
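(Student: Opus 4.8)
The plan is to mimic the linearity argument sketched in the excerpt for \cref{alg:davis-matching}, adapted to the finer-grained transitions of \cref{alg:memoization}. The key invariant I would establish is: \emph{for any call $\Call{\MemoMzAw}{q,i}$ occurring during the top-level execution, if $M(q,i)\neq\bot$ at the moment the call begins, then the call returns immediately (\cref{line:memoization-ret}) without invoking any recursive calls.} This is immediate from the structure of the algorithm. The second ingredient is: \emph{whenever a call $\Call{\MemoMzAw}{q,i}$ returns $\mathsf{Failure}$, the entry $M(q,i)$ is defined (to $\mathsf{Failure}$) by the time the call returns} (\cref{line:memoization-record}); and once defined, a table entry is never reset to $\bot$ (the algorithm only ever writes $\mathsf{Failure}$ into $M$, never $\bot$).

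Next I would set up a charging/potential argument to bound the total number of calls. Partition the calls into two groups. \emph{Memo-hit calls} are those that find $M(q,i)\neq\bot$ on entry; by the invariant above each such call is a leaf of the recursion tree and does no further work, so it suffices to bound them by the number of calls that spawn them, i.e.\ by (a constant times) the number of non-memo-hit calls. \emph{Non-memo-hit calls}: I claim each pair $(q,i)\in Q\times\{0,\dots,|w|\}$ can be the argument of at most one non-memo-hit call that \emph{returns} $\mathsf{Failure}$ — because such a call writes $M(q,i)\gets\mathsf{Failure}$ before returning, so any later call on $(q,i)$ is a memo-hit. The remaining non-memo-hit calls are those that return $\mathsf{SuccessAt}(i')$ (or that have not yet returned, i.e.\ are ancestors on the current call stack). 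I would argue that the set of calls on the current stack forms a simple path in the "configuration graph" of $\mathcal{A}$ whose $\mathsf{Char}$-edges strictly increase $i$ and whose $\varepsilon$-edges ($\mathsf{Eps}$, $\mathsf{Branch}$) cannot repeat a state at a fixed $i$ (by \cref{assum:no-eps-loops}); hence the stack depth is $O(|Q|\cdot|w|)=O(|w|)$. For the successful calls that have already returned: a call returns $\mathsf{SuccessAt}(i')$ only by propagating a success up from a descendant, and success propagation happens along a single chain — once a recursive subcall returns a success, the parent returns immediately without trying further branches — so the successful calls that ever occur all lie on the eventual "winning path" (the path from $(q_0,i)$ to an accepting configuration, if it exists) plus the failed subtrees hanging off it, which are already counted. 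Putting these together, the total number of calls is $O(|Q|\cdot|w|)$, and since $|Q|=O(|r|)$ is a constant independent of $w$, this is $O(|w|)$.

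The main obstacle I anticipate is making the "successful calls lie on the winning path" claim airtight, because success does not get memoized, so naively a $(q,i)$ pair could be revisited by successful calls many times. The resolution is that a call on $(q,i)$ that returns a success is only ever issued from the stack as part of exploring a branch, and once \emph{any} branch from a given ancestor succeeds, that ancestor stops exploring — so at any point in time there is at most one "live" chain of in-progress calls, and completed successful calls were on some earlier such chain; one must check that no $(q,i)$ is revisited by distinct successful-returning chains. The clean way to see this is: consider the recursion tree of the whole execution. Along any root-to-node path, the $i$-component is non-decreasing and increases at every $\mathsf{Char}$ step; between two consecutive $\mathsf{Char}$ steps the states visited are distinct by \cref{assum:no-eps-loops}. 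A node that returns $\mathsf{Failure}$ memoizes its $(q,i)$, so no two failure-returning nodes share $(q,i)$. A node that returns a success has all but one child returning $\mathsf{Failure}$ (the $\mathsf{Branch}$ case tries the first successor, and only recurses on the second if the first failed), so the success-returning nodes form a single path in the recursion tree from the root; its length is again $O(|Q|\cdot|w|)$. Every node is therefore either on this single success path, or a failure-returning node (at most one per $(q,i)$), or a memo-hit leaf (charged to its parent). Hence the bound $O(|w|)$ on the number of recursive calls, and termination follows a fortiori.
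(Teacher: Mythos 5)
Your proposal is correct and follows essentially the same route as the paper's proof: both rest on the observations that a failure-returning call memoizes its $(q,i)$ so each such pair is genuinely explored at most once, that success-returning calls form a single chain propagating up to the root, and that memo-hit calls are leaves whose number is bounded by the calls that spawn them (you charge them to their parents; the paper bounds them per $(q,i)$ by the in-degree $\#_\mathsf{in}(q)$ --- the same count organized differently). The paper merely packages this bookkeeping in the formal notion of a \emph{run} (a sequence of call/return events) rather than the recursion tree, arriving at the same $O(\#_\mathsf{in}(\mathcal{A})\times|Q|\times|w|)$ bound.
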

\end{minipage}

\noindent
\begin{minipage}{\textwidth}
\begin{theorem}[correctness of \cref{alg:memoization}]\label{thm:algo-cor}
  For an NFA $\mathcal{A} = (Q, q_0, F, T)$, an input string $w$, and an position $i \in \{ 0, \dots, |w| \}$, $\Call{\MatchAw}{q_0, i} = \Call{\MemoMzAw}{q_0, i}$.
\end{theorem}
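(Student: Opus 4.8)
The plan is to prove a table-parametrized strengthening by well-founded induction and then instantiate it with the empty table $M_0$. The first observation is that $\textproc{Match}$ (\cref{alg:matching}) never reads or writes a memoization table, so $\Call{\MatchAw}{q,i}$ is a well-defined value depending only on $(q,i)$; call it the \emph{correct answer at $(q,i)$}. I would introduce an invariant $\mathrm{Inv}(M)$ on memoization tables $M\colon Q\times\mathbb{N}\rightharpoonup\{\mathsf{Failure}\}$, stating that for every $(q,i)$ with $M(q,i)\neq\bot$ we have $M(q,i)=\mathsf{Failure}$ and $\Call{\MatchAw}{q,i}=\mathsf{Failure}$ --- i.e., the table records only genuine failures. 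The key lemma is: \emph{if $\mathrm{Inv}(M)$ holds when $\Call{\MemoMAw}{q,i}$ is called, then the call terminates, returns the correct answer at $(q,i)$, and leaves the (possibly mutated) table again satisfying $\mathrm{Inv}$.} Taking $M=M_0$, which vacuously satisfies $\mathrm{Inv}$, immediately yields $\Call{\MemoMzAw}{q_0,i}=\Call{\MatchAw}{q_0,i}$, i.e., the theorem.

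For the lemma I would argue by induction on a termination measure that strictly decreases along every recursive call of $\textproc{Memo}$: since there are no $\varepsilon$-loops (\cref{assum:no-eps-loops}) and every $\mathsf{Char}$-transition strictly increases the position $i$, which is bounded by $|w|$, the pair $\bigl(|w|-i,\ \ell(q)\bigr)$ ordered lexicographically works, where $\ell(q)$ is the length of the longest chain of $\varepsilon$-transitions starting at $q$ (finite by \cref{assum:no-eps-loops}); this is the same measure underlying termination of $\textproc{Match}$. The cache-hit case ($M(q,i)\neq\bot$) is immediate from $\mathrm{Inv}(M)$: the returned value is $\mathsf{Failure}$, which equals the correct answer, and $M$ is untouched. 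The case $q\in F$ is also immediate: both algorithms return $\mathsf{SuccessAt}(i)$, and since the result is a success, \cref{line:memoization-record} does not fire.

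The remaining cases mirror the transition branches of $\textproc{Match}$. For $T(q)=\mathsf{Eps}(q')$, the induction hypothesis gives $\Call{\MemoMAw}{q',i}=\Call{\MatchAw}{q',i}$ with $\mathrm{Inv}$ preserved, and $\Call{\MatchAw}{q,i}=\Call{\MatchAw}{q',i}$, so the returned value is correct. For $T(q)=\mathsf{Char}(\sigma,q')$ the argument is the same when $i<|w|$ and $w[i]=\sigma$ (here $|w|-i$ strictly decreases), and otherwise both algorithms return $\mathsf{Failure}$. The delicate case is $T(q)=\mathsf{Branch}(q',q'')$: $\textproc{Memo}$ first evaluates $\Call{\MemoMAw}{q',i}$, which by the induction hypothesis equals $\Call{\MatchAw}{q',i}$ and re-establishes $\mathrm{Inv}$ for the mutated table; if that is a success, both algorithms return it; if it is $\mathsf{Failure}$, $\textproc{Memo}$ then evaluates $\Call{\MemoMAw}{q'',i}$ \emph{against the already-mutated table}, so the induction hypothesis must be invoked for a table that differs from the one present at the start of the call --- this is precisely why the lemma is stated for an arbitrary $\mathrm{Inv}$-table rather than a fixed one. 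Since the mutated table still satisfies $\mathrm{Inv}$, the second recursive call returns $\Call{\MatchAw}{q'',i}$, matching what $\textproc{Match}$ returns in this branch.

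Finally, in every case where $\mathsf{result}=\mathsf{Failure}$ and \cref{line:memoization-record} performs $M(q,i)\gets\mathsf{Failure}$, preservation of $\mathrm{Inv}$ for the new entry holds precisely because we have just shown $\mathsf{result}=\Call{\MatchAw}{q,i}=\mathsf{Failure}$, while all other entries are inherited from the $\mathrm{Inv}$-table produced by the last recursive sub-call. I expect the only genuine subtlety to be the bookkeeping around this table mutation in the $\mathsf{Branch}$ case (together with the initial observation that the output of $\textproc{Match}$ is table-independent, which licenses speaking of ``the correct answer''); everything else is a routine case analysis following the structure of \cref{alg:memoization}.
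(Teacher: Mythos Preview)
Your argument is correct, and it takes a genuinely different route from the paper's. The paper does not carry an invariant on the memoization table through a well-founded induction; instead it introduces an explicit notion of \emph{run} (a trace of calls and returns) for both $\textproc{Match}$ and $\textproc{Memo}$, defines a substitution operation $\mathsf{sub}^\ast$ that replaces each memoization hit $\mathsf{M}(q,i)$ in a $\textproc{Memo}$-run by the sub-trace of the earlier call $\mathsf{C}(q,i)$, and shows that $\mathsf{sub}^\ast$ maps the $\textproc{Memo}$-run to the $\textproc{Match}$-run while preserving the final element (hence the return value). Your invariant-and-induction approach is more elementary and self-contained for this theorem: it avoids the trace machinery entirely and handles the table mutation in the $\mathsf{Branch}$ case cleanly by quantifying the lemma over all $\mathrm{Inv}$-tables. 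The paper's approach, by contrast, buys reusable infrastructure: the same runs are used to prove the linear-time bound (\cref{thm:algo-lin}), and the substitution idea is what gets refined (with a more elaborate $\mathsf{sub\text{-}at}$) for the atomic-grouping algorithm, where return values carry extra data and a plain ``table records only correct answers'' invariant would be harder to state.
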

\end{minipage}

\vspace{.3em}
The proofs can be found \FinalOrArxiv{in~\cite[Appendix~B.1]{FujinamiH24ESOP_arxiv_extended_ver}}{in \cref{appendix:memo-proof}}. Here is their outline.

We first introduce the notion of run for $\textproc{Match}$ and $\textproc{Memo}$; it records \emph{recursive calls} of the function itself, as well as \emph{invocations} of the memoization table, together with their return values.

For linear time complexity (\cref{thm:algo-lin}), we show that \begin{enumerate*}[label=\arabic*)]\item a recursive call with the same argument $(q,i)$ appears at most once in a run, and that \item the number of invocations of the memoization table with the same key $(q,i)$ is bounded by the (graph-theoretic) in-degree.\end{enumerate*}
Linear-time complexity then follows easily.

For correctness (\cref{thm:algo-cor}), we introduce a conversion from runs of $\textproc{Memo}$ to runs of $\textproc{Match}$. By showing that \begin{enumerate*}[label=\arabic*)]\item the result is indeed a valid run of $\textproc{Match}$ and \item the conversion preserves return values\end{enumerate*}, we show the coincidence of the return values of the two algorithms, i.e., correctness.

\section{Memoization for Regexes with Look-around}\label{sec:memo-la}

We describe our first main technical contribution, namely a backtracking matching algorithm for la-NFAs with memoization (\cref{alg:la-memoization}). We prove that it is correct (\cref{thm:la-cor}) and that its time complexity is linear ($O(|w|)$, \cref{thm:la-lin}).  

The key ingredient of our algorithm is the type of memoization tables, where their range is extended from $\{\mathsf{Failure}\}$ to $\{\mathsf{Failure}, \mathsf{Success}\}$.
We motivate this extension through two problematic algorithms $\textproc{\MemoExitLa}$ and $\textproc{\MemoEnterLa}$;
$\textproc{\MemoExitLa}$ is obtained by naively extending \cref{alg:memoization} ($\textproc{Memo}$) with adding the processing of sub-automaton transitions with $\mathsf{pla}$ (positive look-ahead) done in \cref{alg:la-at-matching} (\crefrange{line:pla-matching-1}{line:pla-matching-5}), and $\textproc{\MemoEnterLa}$ is similar to $\textproc{\MemoExitLa}$, but this records to the memoization table at the same timing as \cref{alg:memoization-before-trans} ($\textproc{DavisSLImpl}$).
In particular, their memoization tables only record $\mathsf{false}$.

The example below shows the problems with the two naive algorithms. Specifically, $\textproc{\MemoExitLa}$ is not linear and $\textproc{\MemoEnterLa}$ is not correct.

\begin{example}\label{exa:la-issue}
  \begin{figure}[tbp]
    \centering
\scalebox{.7}{    \begin{tikzpicture}[initial text=, node distance=2cm, on grid,auto,scale=0.6]
      \tiny
      \node[state,initial] (q_0) at (0, 0) {$q_0$};
      \node[state] (q_1) at (2.5, 1) {$q_1$};
      \node[state] (q_2) at (5.5, 1) {$q_2$};
      \node[state] (q_3) at (7.5, 1) {$q_3$};
      \node[state,accepting] (q_4) at (2.5, -1) {$q_4$};
      \path[-] (q_0) edge node {$\mathsf{Branch}$} (1.5, 0);
      \path[->] (1.5, 0) edge (q_1);
      \path[->] (1.5, 0) edge (q_4);
      \path[->] (q_1) edge node {$\mathsf{Sub}(\mathsf{pla}, \mathcal{A}')$} (q_2);
      \path[->] (q_2) edge node {$\mathsf{Char}(a)$} (q_3);
      \path[-] (q_3) edge (8.5, 1);
      \path[->] (8.5, 1) edge [bend right=2cm] node {$\mathsf{Eps}$} (q_0);
      \begin{scope}[shift={(+7.75, +2.5)}]
        \node at (3.5, 0) {$\mathcal{A}'$};
        \draw (3.1, 0.4) rectangle (10.5, -4.2);
        \node[state,initial] (q_5) at (4.5, -2.5) {$q_5$};
        \node[state] (q_6) at (7, -1.5) {$q_6$};
        \node[state] (q_7) at (9, -1.5) {$q_7$};
        \node[state,accepting] (q_8) at (7, -3.5) {$q_8$};
        \path[-] (q_5) edge node {$\mathsf{Branch}$} (6, -2.5);
        \path[->] (6, -2.5) edge (q_6);
        \path[->] (6, -2.5) edge (q_8);
        \path[->] (q_6) edge node {$\mathsf{Char}(a)$} (q_7);
        \path[-] (q_7) edge (10, -1.5);
        \path[->] (10, -1.5) edge [bend right=2cm] node {$\mathsf{Eps}$} (q_5);
      \end{scope}
    \end{tikzpicture}
}    \caption{the la-NFA $\mathcal{A}(((\texttt{?=} a^\ast) a)^\ast)$}\label{fig:la-example}

\scalebox{.7}{    \begin{tikzpicture}[initial text=, node distance=2cm, on grid,auto,scale=0.6]
      \tiny
      \node[state,initial] (q_0) at (0, 0) {$q_0$};
      \node[state] (q_1) at (2.5, 1) {$q_1$};
      \node[state] (q_2) at (4.5, 1) {$q_2$};
      \node[state] (q_3) at (2.5, -1) {$q_3$};
      \node[state] (q_4) at (5.5, -1) {$q_4$};
      \node[state] (q_5) at (7.5, -1) {$q_5$};
      \node[state,accepting] (q_6) at (9.5, -1) {$q_6$};
      \path[-] (q_0) edge node {$\mathsf{Branch}$} (1.5, 0);
      \path[->] (1.5, 0) edge (q_1);
      \path[->] (1.5, 0) edge (q_3);
      \path[->] (q_1) edge node {$\mathsf{Char}(a)$} (q_2);
      \path[-] (q_2) edge (5.5, 1);
      \path[->] (5.5, 1) edge [bend right=2cm] node {$\mathsf{Eps}$} (q_0);
      \path[->] (q_3) edge node {$\mathsf{Sub}(\mathsf{at}, \mathcal{A
      }')$} (q_4);
      \path[->] (q_4) edge node {$\mathsf{Char}(a)$} (q_5);
      \path[->] (q_5) edge node {$\mathsf{Char}(b)$} (q_6);
      \begin{scope}[shift={(+7.75, +2.5)}]
        \node at (3.5, 0) {$\mathcal{A}'$};
        \draw (3.1, 0.4) rectangle (10.5, -4.2);
        \node[state,initial] (q_7) at (4.5, -2.5) {$q_7$};
        \node[state] (q_8) at (7, -1.5) {$q_8$};
        \node[state] (q_9) at (9, -1.5) {$q_9$};
        \node[state,accepting] (q_10) at (7, -3.5) {$q_{10}$};
        \path[-] (q_7) edge node {$\mathsf{Branch}$} (6, -2.5);
        \path[->] (6, -2.5) edge (q_8);
        \path[->] (6, -2.5) edge (q_10);
        \path[->] (q_8) edge node {$\mathsf{Char}(a)$} (q_9);
        \path[-] (q_9) edge (10, -1.5);
        \path[->] (10, -1.5) edge [bend right=2cm] node {$\mathsf{Eps}$} (q_7);
      \end{scope}
    \end{tikzpicture}}
    \caption{the at-NFA $\mathcal{A}(a^\ast (\texttt{?>} a^\ast) ab)$}\label{fig:at-example-1}
  \end{figure}

 Consider the la-NFA $\mathcal{A} = \mathcal{A}(((\texttt{?=} a^\ast) a)^\ast) = (P, Q, q_0, F, T)$ shown in \cref{fig:la-example}, and let $w = \texttt{"$a^n$"}$ be an input string. 

 $\Call{\MemoExitLaMzAw}{q_0, 0}$ invokes recursive calls $O(|w|^2)$ times---in the same way as $\textproc{\MatchLaAt}$---because there are no matching failures in $\mathcal{A}'$ that contribute to memoization.

We also see $\textproc{\MemoEnterLa}$ is not correct: $\Call{\MatchLaAtAw}{q_0 ,0}$ returns $\mathsf{SuccessAt}(n)$, but $\Call{\MemoEnterLaMzAw}{q_0, 0}$ returns $\mathsf{SuccessAt}(1)$ because $M(q_5, 1)=\mathbf{false}$ is recorded during the first loop and interpreted as a matching failure.
\end{example}

In \cref{exa:la-issue}, a natural solution to the non-linearity issues with $\textproc{\MemoExitLa}$ is to enrich memoization so that it also records previous successes of look-around.
Furthermore, since matching positions do not matter in look-around, the type of memoization tables should be $M\colon P \times \mathbb{N} \rightharpoonup \{ \mathsf{Failure}, \mathsf{Success} \}$.

\begin{remark}\label{rem:davis2021-err-la}
  The work~\cite[Section~IX-B]{DBLP:conf/sp/DavisSL21} proposes an adaptation of their memoization algorithm to REZWA. Its description in~\cite[Section~IX-B]{DBLP:conf/sp/DavisSL21}  (to ``preserve the memo functions$\dotsc$''; see \cref{rem:davis2021Err}) consists of the following two points:
  \begin{enumerate}
    \item preserving the memoization tables of the sub-automata throughout the whole matching, and
    \item recording the results of sub-automata matching from different start positions $i$ of $w$.
  \end{enumerate}
  The naive algorithm 
  $\textproc{\MemoExitLa}$ we discussed above implements the first point. We can further add the second point (that is essentially ``memoization for sub-automaton matching'') to $\textproc{\MemoExitLa}$. 

  However, we find that this is not enough to ensure linear-time complexity. The problem is that the ``memoization for sub-automaton matching'' is used too infrequently. For example, in \cref{exa:la-issue}, the start positions of sub-automaton matching are different each time; thus, the memoized results are never used.

  Our algorithm (\cref{alg:la-memoization}) resolves this problem by letting the memoization tables (for sub-automaton matching) record results not only for \emph{starting} positions but also for non-starting positions. 

  We also note that there is a gap between the algorithm in the paper~\cite{DBLP:conf/sp/DavisSL21} and its prototype implementation~\cite{davis2021impl}; see \cref{rem:memo-timing}. The latter is linear time but not always correct. For example, in \cref{exa:la-issue}, the correct result is $\mathsf{SuccessAt}(n)$, but the prototype~\cite{davis2021impl} returns $\mathsf{SuccessAt}(1)$, similarly to $\textproc{\MemoEnterLa}$.
\end{remark}

\begin{algorithm}[tbp]
  \caption{our partial matching algorithm with memoization for la-NFAs}\label{alg:la-memoization}
  \begin{algorithmic}[1]
    \Function{\MemoLaMAw}{$q, i$}
      \Statex\hspace*{\algorithmicindent}
      \begin{tabular}{rl} 
       \textbf{Parameters}: & a la-NFA $\mathcal{A}$, an input string $w$, and \\
                            & a memoization table $M\colon P \times \mathbb{N} \rightharpoonup \{ \mathsf{Failure}, \mathsf{Success} \}$ \\
            \textbf{Input}: & a current state $q$, and a current position $i$ \\
           \textbf{Output}: & returns $\mathsf{SuccessAt}(i')$ if the matching succeeds, \\
                            & returns $\mathsf{Success}$ if a matching success is in $M$ (cf.\ \cref{lem:success}), or \\
                            & returns $\mathsf{Failure}$ if the matching fails
      \end{tabular}
      \State$(P, Q, q_0, F, T) = \mathcal{A}$
      \If{$M(q, i) \ne \bot$}
        \Return$M(q, i)$
      \EndIf%
      \State$\mathsf{result} \gets \bot$
      \If{$q \in F$} \Comment{the same as \crefrange{line:memoization-trans-begin}{line:memoization-trans-end} of \cref{alg:memoization}}      
      \Statex\hspace*{\algorithmicindent}$\vdots$  \setcounter{ALG@line}{12}
      \ElsIf{$T(q) = \mathsf{Sub}(\mathsf{pla}, \mathcal{A}', q')$}\label{line:memo-la-pla-start}
        \State$(P', Q', {q_0}', F', T') = \mathcal{A}'$
        \State$\mathsf{result} \gets \Call{\MemoLaMApw}{{q_0}', i}$
        \If{$\mathsf{result} = \mathsf{SuccessAt}(i')$ \textbf{or} $\mathsf{Success}$}
          \State$\mathsf{result} \gets \Call{\MemoLaMAw}{q', i}$\label{line:memo-la-pla-end}
        \EndIf%
      \EndIf%
      \If{$\mathsf{result} = \mathsf{SuccessAt}(i')$ \textbf{or} $\mathsf{Success}$}
        $M(q, i) \gets \mathsf{Success}$\label{line:memo-la-success}
      \ElsIf{$\mathsf{result} = \mathsf{Failure}$}\ 
        $M(q, i) \gets \mathsf{Failure}$
      \EndIf%
      \State\Return$\mathsf{result}$
    \EndFunction%
  \end{algorithmic}
\end{algorithm}


\cref{alg:la-memoization} is the matching algorithm for la-NFAs that we propose. It adopts the above extended type of $M$.
In \cref{line:memo-la-success}, $\mathsf{Success}$ is recorded in the memoization table when the matching succeeded.
This function can return one of $\mathsf{SuccessAt}(i')$, $\mathsf{Failure}$, and $\mathsf{Success}$.
We first prove the following lemma to see that the algorithm indeed solves the partial matching problem (\cref{prob:laAtRegexMatch}).


\vspace{.3em}
\noindent
\begin{minipage}{\textwidth}
\begin{lemma}\label{lem:success}
  For a la-NFA $\mathcal{A} = (P, Q, q_0, F, T)$, an input string $w$, and a position $i \in \{ 0, \dots, |w| \}$, $\Call{\MemoLaMzAw}{q_0, i}$ returns either $\mathsf{SuccessAt}(i')$ for $i' \in \{ 0, \dots, |w| \}$ or $\mathsf{Failure}$ (it does not return $\mathsf{Success}$).
\end{lemma}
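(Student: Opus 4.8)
The plan is to prove the statement by induction on the structure of the recursion of $\textproc{\MemoLaMzAw}$, tracking a suitable invariant on the memoization table $M$. The key observation is that $\mathsf{Success}$ can only ever be \emph{returned} by $\textproc{\MemoLa}$ in two situations: (i) it is read off directly from $M$ via the early-return line \texttt{if $M(q,i)\neq\bot$ return $M(q,i)$}, or (ii) it is propagated upward from a recursive call whose $\mathsf{result}$ was $\mathsf{Success}$ (the $\mathsf{pla}$-branch and, after filling in the elided lines, the $\mathsf{Eps}$ and $\mathsf{Branch}$ branches all merely pass a sub-result through). Therefore, to show the top-level call never returns $\mathsf{Success}$, it suffices to show that, starting from the empty table $M_0$, the entry $M(q,i)=\mathsf{Success}$ is \emph{never written} for any $(q,i)$ that is subsequently read — equivalently, that whenever the function writes $M(q,i)\gets\mathsf{Success}$ in \cref{line:memo-la-success}, the value $\mathsf{result}$ it is about to return is actually of the form $\mathsf{SuccessAt}(i')$, not bare $\mathsf{Success}$.

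First I would set up the induction: process recursive calls in the (well-founded, by \cref{assum:la-at-no-eps-loops}) call tree, and prove simultaneously that (a) every call $\textproc{\MemoLaMAw}(q,i)$ returns something in $\{\mathsf{SuccessAt}(i')\mid i'\}\cup\{\mathsf{Failure}\}$ — never $\mathsf{Success}$ — and (b) consequently $M$ only ever acquires entries in $\{\mathsf{Failure},\mathsf{SuccessAt}(i')\}$... wait — here one must be careful, since the written value in \cref{line:memo-la-success} is literally the token $\mathsf{Success}$, not $\mathsf{result}$. So the correct invariant is: $M(q,i)=\mathsf{Success}$ is only ever written when the matching from $(q,i)$ genuinely succeeded, hence on the \emph{next} read of that entry the returned $\mathsf{Success}$ token is harmless for \emph{sub}-automaton calls but — crucially — such reads only occur inside the $\mathsf{pla}$-branch condition "$\mathsf{result}=\mathsf{SuccessAt}(i')$ \textbf{or} $\mathsf{Success}$", which treats both uniformly. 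The point to nail down is that a read returning $\mathsf{Success}$ can propagate to the \emph{top level} only through a chain that stays outside any $\mathsf{pla}$ sub-automaton; but then I must argue such a chain cannot exist. The cleanest way: strengthen the claim to "$M(q,i)=\mathsf{Success}$ is only set for states $q\in P'$ belonging to some \emph{strictly nested} sub-automaton $\mathcal{A}'$, never for top-level states $q\in Q$." Indeed, $M(q,i)\gets\mathsf{Success}$ is reached only via the $\mathsf{pla}$-branch ending in \cref{line:memo-la-pla-end}, i.e. for $q$ such that $T(q)=\mathsf{Sub}(\mathsf{pla},\dots)$, or — after expanding the elided $\mathsf{Eps}$/$\mathsf{Branch}$ cases — for $q$ lying on an $\varepsilon$-path feeding into such a state. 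By the structure of $\mathcal{A}(r)$ for an la-regex, a top-level run never returns $\mathsf{Success}$ because the outermost automaton's accepting transitions produce $\mathsf{SuccessAt}$, and $\mathsf{Success}$ is born only \emph{inside} a $\mathsf{Sub}$ call $\textproc{\MemoLaMApw}$, where it is immediately consumed by the enclosing test and replaced by either $\textproc{\MemoLaMAw}(q',i)$'s result or $\mathsf{Failure}$.

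Concretely, the induction would go: for any call on a state of the top-level automaton $\mathcal{A}$ (i.e. $q\in Q$), show the returned value is $\mathsf{SuccessAt}(i')$ or $\mathsf{Failure}$. The base cases ($q\in F$ gives $\mathsf{SuccessAt}(i)$; $\mathsf{Char}$ gives either $\mathsf{Failure}$ or a recursive call on $q'\in Q$) and the inductive cases ($\mathsf{Eps}$, $\mathsf{Branch}$: recurse on top-level successors, apply IH) are immediate. The only interesting case is $T(q)=\mathsf{Sub}(\mathsf{pla},\mathcal{A}',q')$: the call $\textproc{\MemoLaMApw}(q_0',i)$ runs on the \emph{sub}-automaton, so by a parallel induction its result is in $\{\mathsf{SuccessAt}(i'),\mathsf{Success},\mathsf{Failure}\}$; in either success sub-case the function then returns $\textproc{\MemoLaMAw}(q',i)$ with $q'\in Q$, to which the outer IH applies and yields $\mathsf{SuccessAt}$ or $\mathsf{Failure}$; in the failure sub-case it returns $\mathsf{Failure}$. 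So $q\in Q$ never yields $\mathsf{Success}$. Since the top-level call is $\textproc{\MemoLaMzAw}(q_0,i)$ with $q_0\in Q$, we are done. The main obstacle I anticipate is bookkeeping around the memoization table in this inductive argument: an entry written as $\mathsf{Success}$ during a sub-automaton call could, a priori, be read back by a \emph{later} call on the same state; I must verify that such a state is always a sub-automaton state (never in $Q$), so that the $\mathsf{Success}$ read never escapes to the top level — this requires using the disjoint-state-space requirement of \cref{def:NFAwSubautom} together with the fact that $M(q,i)\gets\mathsf{Success}$ in \cref{line:memo-la-success} is executed only on the $\mathsf{pla}$-branch (plus the $\varepsilon$-closure absorbed into the elided lines), all of whose triggering states sit strictly below a $\mathsf{Sub}$ transition.
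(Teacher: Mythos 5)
Your overall strategy---induct over the call tree and show that any call on a top-level state $q\in Q$ returns $\mathsf{SuccessAt}(i')$ or $\mathsf{Failure}$, with $\mathsf{Success}$ being born only from memoization-table reads and consumed at the $\mathsf{pla}$ boundary---is sound, and your handling of the $\mathsf{Eps}$, $\mathsf{Branch}$, $\mathsf{Char}$ and $\mathsf{Sub}(\mathsf{pla},\dots)$ cases is what is needed. The gap is in how you close the one case you yourself flag as the ``main obstacle'', namely a call on $(q,i)$ with $q\in Q$ that hits the early return on $M(q,i)\neq\bot$. You propose to resolve it by arguing that $M(q,i)\gets\mathsf{Success}$ ``is executed only on the $\mathsf{pla}$-branch \dots all of whose triggering states sit strictly below a $\mathsf{Sub}$ transition'', i.e.\ that $\mathsf{Success}$ entries are only ever written for sub-automaton states. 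That claim is false: \cref{line:memo-la-success} sits \emph{after} the entire case analysis and fires for every state whose matching succeeds, including $q\in F$ and top-level states reached via $\mathsf{Eps}$, $\mathsf{Branch}$ or $\mathsf{Char}$. (The paper makes this explicit immediately after the lemma, remarking that one \emph{could} shrink the table by not recording $\mathsf{Success}$ for top-level states---which only makes sense because the algorithm as written does record it for them.)

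The argument that actually closes this case---and the one the paper uses---is temporal rather than structural: when $M(q,i)$ is set to $\mathsf{Success}$ for a top-level state $q$, the value being returned at that moment is a genuine $\mathsf{SuccessAt}(i')$, and every handler of the top-level automaton, upon receiving a success from its recursive call, returns it immediately without issuing any further recursive call ($\mathsf{Branch}$ skips its second successor, and $\mathsf{Sub}(\mathsf{pla},\dots)$ only recurses after a \emph{sub-automaton} success, before the top-level success exists). Hence the success unwinds straight to the root and the algorithm terminates (with \cref{assum:la-at-no-eps-loops} guaranteeing well-foundedness), so a $\mathsf{Success}$ entry for a top-level state is never subsequently read. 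Replacing your structural claim with this observation repairs the proof; the rest of your induction then goes through.
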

\end{minipage}
\begin{minipage}{\textwidth}
\begin{proof}
When we obtain $\mathsf{Success}$ as a return value, it must be via an entry $M(q, i)$ of the memoization table. However, due to \cref{assum:la-at-no-eps-loops}, when $M(q, i)$ is set to $\mathsf{Success}$ for a state $q$ of the top-level automaton of $\mathcal{A}$, the matching is already finished and returns $\mathsf{SuccessAt}(i')$.
\qed%
\end{proof}
\end{minipage}

\vspace{.3em}
As a consequence of the lemma, we can further shrink the memoization tables in \cref{alg:la-memoization} by not recording $\mathsf{Success}$ for $M(q,i)$ where $q$ is a state of the top-level automaton.



\cref{alg:la-memoization} exhibits the desired properties, namely correctness (with respect to \cref{prob:laAtRegexMatch}) and linear-time complexity.

\vspace{.3em}
\noindent
\begin{minipage}{\textwidth}
\begin{theorem}[linear-time complexity of \cref{alg:la-memoization}]\label{thm:la-lin}
  For a la-NFA $\mathcal{A} = (P, Q, q_0, F, T)$, an input string $w$, and a position $i \in \{ 0, \dots, |w| \}$, $\Call{\MemoLaMzAw}{q_0, i}$ terminates with $O(|w|)$ recursive calls.
\end{theorem}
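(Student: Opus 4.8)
I would mirror the proof of \cref{thm:algo-lin}, adapting it to $\mathsf{Sub}$-transitions. First introduce the notion of a \emph{run} of $\textproc{\MemoLa}$: the tree (equivalently, the sequence of its nodes in call order) recording every invocation of the function, its argument $(q,i)$, whether the invocation is a \emph{memo hit} (it returns at the table-lookup line because $M(q,i)\neq\bot$) or a \emph{genuine} invocation, and the returned value; the quantity to bound is the number of nodes. The heart of the argument is the structural lemma: for every pair $(q,i)$ with $q\in P$ and $i\in\{0,\dots,|w|\}$, the run contains \emph{at most one} genuine invocation with argument $(q,i)$. This is where \cref{alg:la-memoization} is actually \emph{easier} than \cref{alg:memoization}: there, only failures are memoized, so the analogous claim needs the ``a success propagates up to the top-level call'' argument; here, a genuine invocation of $\textproc{\MemoLa}(q,i)$ always terminates with $\mathsf{result}\in\{\mathsf{SuccessAt}(i'),\mathsf{Success},\mathsf{Failure}\}$ — a finite check over $q\in F$ and the four transition shapes, using for $\mathsf{Sub}(\mathsf{pla},\dots)$ that the inner call likewise returns one of these three — and in every case it writes $M(q,i)$ before returning ($\mathsf{Success}$ via \cref{line:memo-la-success} on the success path, $\mathsf{Failure}$ otherwise). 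Hence any later invocation with the same argument is a memo hit. To see that ``later'' exhausts all cases, I would use \cref{assum:la-at-no-eps-loops} together with the disjointness condition of \cref{def:NFAwSubautom}: a genuine invocation of $\textproc{\MemoLa}(q,i)$ cannot re-encounter $(q,i)$ within its own recursion, since the sub-matching of $\mathcal{A}'$ only touches states of $P'$ (disjoint from $q$) and the top-level continuation from $q'$ back to $q$ at the same position $i$ would be an $\varepsilon$-loop (all $\mathsf{Sub}$-with-la-kind, $\mathsf{Eps}$, $\mathsf{Branch}$ edges are $\varepsilon$-transitions, and $\mathsf{Char}$ strictly advances the position).

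\textbf{Counting.} With the lemma, the number of genuine invocations is at most $|P|\cdot(|w|+1)$. Every genuine invocation spawns at most two child invocations ($\mathsf{Branch}$ makes two recursive calls; $\mathsf{Sub}(\mathsf{pla},\dots)$ makes at most two, one to ${q_0}'$ and, on success, one to $q'$; $\mathsf{Eps}$ and $\mathsf{Char}$ make one; an accepting state makes none), while memo hits are leaves. Therefore the run has at most $1+2|P|(|w|+1)$ nodes, and since $|P|=O(|r|)$ is a constant for a fixed regex this is $O(|w|)$. The remaining look-around kinds $\mathsf{nla},\mathsf{plb},\mathsf{nlb}$ are handled uniformly — for look-behind the only change is the direction parameter in the position update, which does not affect the count. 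Termination is then immediate (a finite run), or alternatively follows from \cref{assum:la-at-no-eps-loops} exactly as for \cref{alg:la-at-matching}, memoization only pruning recursion.

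\textbf{Main obstacle.} I expect the delicate part to be making the structural lemma fully rigorous in the presence of nested sub-automaton recursion: one must verify that no control path through the $\mathsf{Sub}$ clause bypasses the ``write $M(q,i)$ before returning'' discipline and that no genuine invocation can leave $\mathsf{result}=\bot$, and one must carefully invoke the disjoint-state-space requirement of \cref{def:NFAwSubautom} to rule out a sub-matching of $\mathcal{A}'$ re-entering an ancestor state, so that the index set $P\times\{0,\dots,|w|\}$ really does bound all genuine calls. Once that is pinned down, the counting is routine. It is also worth stating explicitly why enriching the range from $\{\mathsf{Failure}\}$ to $\{\mathsf{Failure},\mathsf{Success}\}$ is exactly what rescues linearity: it lets a previously-successful sub-automaton match be reused on a later visit even when its \emph{start} position differs each time — precisely the situation in \cref{exa:la-issue} that makes $\textproc{\MemoExitLa}$ quadratic.
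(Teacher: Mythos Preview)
Your argument is correct and follows the same overall shape as the paper's: define runs, prove the key structural lemma that each pair $(q,i)$ is genuinely invoked at most once, and then count. Your proof of the structural lemma is in fact cleaner than the paper's: the paper simply asserts that the analogue of \cref{lem:exec-once} holds ``because of the same proofs,'' whereas you observe explicitly that in \cref{alg:la-memoization} \emph{every} branch writes $M(q,i)$ before returning (success included), so the ``success-propagates-to-top'' step needed for \cref{alg:memoization} is no longer required.

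The one genuine difference is the counting step. The paper bounds memo-hits pointwise via the in-degree (\cref{def:in-deg-la-at}, \cref{lem:memo-in-la}): each $\mathsf{M}(q,i)$ appears at most $\#_{\mathsf{in}}(q)-1$ times, giving $|s|=O(\#_{\mathsf{in}}(\mathcal{A})\cdot|P|\cdot|w|)$. You instead bound the total number of invocations by fan-out: every non-root invocation is a child of a genuine one, and each genuine invocation spawns at most two children, yielding at most $1+2|P|(|w|+1)$ invocations. Your argument is shorter and avoids introducing the extended in-degree for sub-automata; the paper's buys a slightly more structural constant and reuses the machinery already set up for \cref{thm:algo-lin}. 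Both are valid, and both absorb the constant into the $O(\cdot)$ since $|P|$ and $\#_{\mathsf{in}}(\mathcal{A})$ depend only on the fixed regex.
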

\end{minipage}
\begin{minipage}{\textwidth}
\begin{theorem}[correctness of \cref{alg:la-memoization}]\label{thm:la-cor}
  For a la-NFA $\mathcal{A} = (P, Q, q_0, F, T)$, an input string $w$, and a position $i \in \{ 0, \dots, |w| \}$, $\Call{\MatchLaAt}{q_0, i} = \Call{\MemoLaMzAw}{q_0, i}$.
\end{theorem}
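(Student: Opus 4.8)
\noindent
The plan is to recycle the run-based argument behind \cref{thm:algo-cor}, adding the two ingredients that la-NFAs force on us: the enlarged memoization range $\{\mathsf{Failure},\mathsf{Success}\}$ and the $\mathsf{Sub}(\mathsf{pla},\dotsc)$ clause. I would first record, for a given execution, the \emph{run} of $\textproc{\MemoLa}$: the tree of recursive calls decorated with the memoization look-ups and updates, each node carrying its return value. I would then convert such a run into a run of $\textproc{\MatchLaAt}$ by replacing every memo-\emph{hit} leaf $\Call{\MemoLaMAw}{q,i}$ returning $v$ by a genuine $\textproc{\MatchLaAt}$ subtree that computes $\Call{\MatchLaAtAw}{q,i}$ --- finite and well-defined by \cref{assum:la-at-no-eps-loops} --- and leaving all other nodes untouched. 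The only nontrivial fact needed to see that this conversion yields a valid run of $\textproc{\MatchLaAt}$ and preserves return values is a \emph{memo-soundness invariant}: at every moment during $\Call{\MemoLaMzAw}{q_0,i}$, every defined entry satisfies $M(q,i)=\mathsf{Failure}\Rightarrow\Call{\MatchLaAtAw}{q,i}=\mathsf{Failure}$ and $M(q,i)=\mathsf{Success}\Rightarrow\Call{\MatchLaAtAw}{q,i}=\mathsf{SuccessAt}(i')$ for some $i'$, where $\mathcal{A}$ here denotes the (sub-)automaton to which $q$ belongs. Since $M$ only ever grows and the right-hand sides do not mention $M$, it is enough to check that each individual update respects the invariant.

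I would establish the invariant together with a per-call statement, by induction on the run (equivalently, on the number of memoization updates already performed): every call $\Call{\MemoLaMAw}{q,i}$ that occurs returns $\mathsf{Failure}$ iff $\Call{\MatchLaAtAw}{q,i}=\mathsf{Failure}$, returns $\mathsf{SuccessAt}(i')$ iff $\Call{\MatchLaAtAw}{q,i}=\mathsf{SuccessAt}(i')$ with the \emph{same} $i'$, and returns $\mathsf{Success}$ only when $\Call{\MatchLaAtAw}{q,i}$ succeeds. A memo hit is settled directly by the invariant. The cases $q\in F$, $\mathsf{Eps}$ and $\mathsf{Char}$ are the routine bookkeeping already present in \cref{thm:algo-cor}; the $\mathsf{Branch}$ case uses that \cref{alg:la-memoization} retries the second successor only on $\mathsf{Failure}$ and that $\textproc{\MatchLaAt}$ never returns $\mathsf{Success}$; and the $\mathsf{Sub}(\mathsf{pla},\mathcal{A}',q')$ case applies the induction hypothesis first to the nested sub-run $\Call{\MemoLaMApw}{{q_0}',i}$ and then to $\Call{\MemoLaMAw}{q',i}$, using the crucial point that in both \cref{alg:la-memoization} and \cref{alg:la-at-matching} a look-ahead continues from the \emph{outer} position $i$ and branches only on whether the inner matching succeeded. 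Each update is then justified: $M(q,i)\gets\mathsf{Failure}$ is written only when the just-returned value is $\mathsf{Failure}$, hence $\Call{\MatchLaAtAw}{q,i}=\mathsf{Failure}$; and $M(q,i)\gets\mathsf{Success}$ only when it is $\mathsf{SuccessAt}(\cdot)$ or $\mathsf{Success}$, hence $\Call{\MatchLaAtAw}{q,i}$ succeeds.

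Given the invariant, the converted tree is a legal run of $\textproc{\MatchLaAt}$ with the same value at its root, so $\Call{\MemoLaMzAw}{q_0,i}$ and $\Call{\MatchLaAtAw}{q_0,i}$ coincide, except that the former could a priori be $\mathsf{Success}$ while the latter is some $\mathsf{SuccessAt}(i')$. That remaining case is ruled out by \cref{lem:success}, which says the top-level call never returns $\mathsf{Success}$. Hence $\Call{\MatchLaAt}{q_0,i}=\Call{\MemoLaMzAw}{q_0,i}$, as claimed.

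I expect the $\mathsf{Success}$ token to be the source of the real difficulty, in two ways. First, the memo-hit sub-case of the per-call statement appeals to the invariant, which is itself derived from the per-call statement; this apparent circularity has to be broken by the induction order, arranged so that only updates (and per-call facts) strictly earlier in the execution are invoked. Second, one has to argue that replacing an inner match \emph{position} by the opaque token $\mathsf{Success}$ in the table never alters an outcome: for a la-NFA this is true exactly because the value produced by a $\mathsf{Sub}$ transition is consumed only as a boolean, by both \cref{alg:la-memoization} and \cref{alg:la-at-matching}; I would isolate this as a small lemma describing how \cref{alg:la-memoization} uses a memo entry for a non-top-level state. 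It is precisely this property that fails for atomic grouping --- there \cref{alg:la-at-matching} \emph{does} use the inner position --- which is why the correctness proof in \cref{sec:memo-at} must instead record how far a failed match reached rather than relying on the present argument.
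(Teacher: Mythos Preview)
Your proposal is correct and follows essentially the same route as the paper: the paper likewise defines runs (as sequences rather than call trees), introduces a substitution $\mathsf{sub\text{-}la}^\ast$ that replaces each memo hit $\mathsf{M}(q,i)$ by the subsequence of the first $\mathsf{C}(q,i)$ call, and shows that applying it to a $\textproc{\MemoLa}$ run yields the corresponding $\textproc{\MatchLaAt}$ run while preserving the final value. Your explicit memo-soundness invariant is the semantic content that justifies this syntactic substitution; the paper leaves it implicit and simply invokes the argument for \cref{thm:algo-cor}, together with \cref{lem:success} for the top-level $\mathsf{Success}$ case, exactly as you do.
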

\end{minipage}

\noindent
 \cref{thm:la-lin,thm:la-cor} can be shown similarly to
 \cref{thm:algo-lin,thm:algo-cor}; \FinalOrArxiv{see~\cite[Appendix~B.2]{FujinamiH24ESOP_arxiv_extended_ver}}{see \cref{appendix:memo-la-proof}}. The in-degree for sub-automata requires some additional care.

\section{Memoization for Regexes with Atomic Grouping}\label{sec:memo-at}

We describe our second main technical contribution, namely a backtracking matching algorithm for at-NFAs with memoization (\cref{alg:at-memoization}).
We prove that it is correct (\cref{thm:at-cor}) and that its time complexity is linear ($O(|w|)$, \cref{thm:at-lin}).

The key ingredient of our algorithm is the type of memoization tables, where their range is extended from $\{\mathsf{Failure}\}$ to $\{ \mathsf{Failure}(j)\mid j \in \{ 0, \dots, \nu(\mathcal{A}_0) \} \}$; the latter records a \emph{depth} $j$ of atomic grouping in order to distinguish failures of different depths.
We motivate this extension through two problematic algorithms $\textproc{\MemoExitAt}$ and $\textproc{\MemoEnterAt}$. Much like in \cref{sec:memo-la}, $\textproc{\MemoExitAt}$  naively extends \cref{alg:memoization} ($\textproc{Memo}$) by adding the processing of sub-automaton transitions with $\mathsf{at}$ done in \cref{alg:la-at-matching} (\crefrange{line:at-matching-1}{line:at-matching-5}), and $\textproc{\MemoEnterLa}$ is similar to $\textproc{\MemoExitAt}$, but records to the memoization table at the same timing as \cref{alg:memoization-before-trans} ($\textproc{DavisSLImpl}$).

Firstly, we observe that $\textproc{\MemoExitAt}$ is not linear for a reason similar to \cref{exa:la-issue}.
(A concrete example is given by \cref{exa:at-issue-1}.)
Therefore, we turn to the other candidate, namely $\textproc{\MemoEnterAt}$.

We find, however, that $\textproc{\MemoEnterAt}$ is also problematic. It is not correct.

\begin{example}\label{exa:at-issue-1}
  \begin{figure}[tbp]
    \centering
  \end{figure}
  
  Consider the at-NFA $\mathcal{A} = \mathcal{A}(a^\ast (\texttt{?>} a^\ast) ab) = (P, Q, q_0, F, T)$ shown in \cref{fig:at-example-1}, and let $w = \texttt{"$a^n b$"}$ be an input string.
  $\Call{\MatchLaAtAw}{q_0, 0}$ returns $\mathsf{Failure}$---the atomic grouping $(\texttt{?>} a^\ast)$ consumes all  $a$'s in $w$ and no $a$ is left for the final $ab$ pattern---but $\Call{\MemoEnterAtMzAw}{q_0, 0}$ returns $\mathsf{SuccessAt}(n+1)$. 
Thus $\textproc{\MemoEnterAt}$ is wrong.

For both algorithms,
  the state $q_7$ in the $\mathsf{at}$ transition is first reached at position $i=n$, and then backtracking is conducted, leading to the state
  $q_7$ again at $i=n-1$. 
 The execution of $\textproc{\MemoEnterAt}$ proceeds as follows.
\begin{itemize}
 \item The first execution path consumes all $a$'s in the loop from $q_0$ to $q_2$, reaches $q_7$ with $i=n$, eventually leading to failure at $q_4$ and thus to backtracking. Speculative memoization ($M(q, i) \gets \mathbf{false}$ in \cref{alg:memoization-before-trans}) is conducted in its course; in particular,  $M(q_7, n) = \mathbf{false}$ is recorded.
 \item After backtracking, the second execution path reaches $q_7$ with $i=n-1$; it then visits $q_8$ once and reaches $q_7$ with $i=n$. Now it uses the memoized value $M(q_7, n) = \mathbf{false}$ (cf.\ Line~4 of \cref{alg:memoization-before-trans}), leading to backtracking to $q_7$ with $i=n-1$. It then takes the branch to $q_{10}$, and the matching for $\mathcal{A}'$ succeeds. Therefore, the execution reaches $q_4$ with $i=n-1$, and the whole matching succeeds.
\end{itemize}
\end{example}

The last example shows the challenge we are facing, namely the need of \emph{distinguishing failures of different depths}. Specifically, in the previous example, the memoized value $M(q_7, n) = \mathbf{false}$ comes from the failure of matching for ambient $\mathcal{A}$; still, it is used to control backtracking in the sub-automaton $\mathcal{A}'$. This fact is problematic in an atomic grouping where, roughly speaking, backtracking in an ambient automaton should not cause backtracking in a sub-automaton. Atomic grouping can be nested, so we must track at which depth failure has happened.

\begin{definition}[nesting depth of atomic grouping]\label{def:nesting-depth}
For an at-NFA $\mathcal{A} = (P, Q, q_0, F, T)$ and a state $q \in P$, the \emph{nesting depth of atomic grouping for $q$}, denoted by $\nu_\mathcal{A}(q)$, is
\begin{equation*}\small
\begin{array}{l}
   \nu_\mathcal{A}(q) = \begin{cases}
    0 & \text{if $q \in Q$} \\
    1 + \nu_{\mathcal{A}'}(q) & \text{where $\mathcal{A}' = (P', Q', q_0', F', T')$} \\
    & \text{s.t. $T(q') = \mathsf{Sub}(\mathsf{at}, \mathcal{A}', q'')$ and $q \in P'$.}
  \end{cases}
\end{array}
\end{equation*}
We also define the \emph{maximum nesting depth of atomic grouping for $\mathcal{A}$}, denoted by $\nu(\mathcal{A})$, as $\nu(\mathcal{A}) = \max_{q \in P} \nu_\mathcal{A}(q)$.
\end{definition}

\begin{algorithm}[tb]
  \caption{our partial matching algorithm with memoization for at-NFAs}\label{alg:at-memoization}
  \begin{algorithmic}[1]
    \Function{\MemoAtMAzAw}{$q, i$}
      \Statex\hspace*{\algorithmicindent}
      \begin{tabular}{rl} 
       \textbf{Parameters}: & an at-NFA $\mathcal{A}_0$, a sub-automaton $\mathcal{A}$ of $\mathcal{A}_0$ (it can be $\mathcal{A}_0$ itself), \\
       	                    & an input string $w$, and a memoization table $M\colon P \times \mathbb{N} \rightharpoonup$ \\
                            & \quad\qquad $\{ \mathsf{Failure}(j)\mid j \in \{ 0, \dots, \nu(\mathcal{A}_0) \} \}$ \\
            \textbf{Input}: & a current state $q$, and a current position $i$ \\
           \textbf{Output}: & returns $\mathsf{SuccessAt}(i', K)$ if the matching succeeds, or \\
                            & returns $\mathsf{Failure}(j)$ if the matching fails
      \end{tabular}
      \State$(P, Q, q_0, F, T) = \mathcal{A}$
      \If{$M(q, i) \ne \bot$}
        \Return$M(q, i)$
      \EndIf%
      \State$\mathsf{result} \gets \bot$
      \If{$q \in F$}
        $\mathsf{result} \gets \mathsf{Success}(i, \emptyset)$
      \ElsIf{$T(q) = \mathsf{Eps}(q')$}
        $\mathsf{result} \gets \Call{\MemoAtMAzAw}{q', i}$
      \ElsIf{$T(q) = \mathsf{Branch}(q', q'')$}
        \State$\mathsf{result} \gets \Call{\MemoAtMAzAw}{q', i}$
        \If{$\mathsf{result} = \mathsf{Failure}(j)$ \textbf{and} $j = \nu_{\mathcal{A}_0}(q)$}\label{line:at-memoization-branch}
          \State$\mathsf{result} \gets \Call{\MemoAtMAzAw}{q'', i}$
          \If{$\mathsf{result} = \mathsf{Failure}(j')$}
            $\mathsf{result} \gets \mathsf{Failure}(\min(j, j'))$
          \EndIf%
        \EndIf%
      \ElsIf{$T(q) = \mathsf{Char}(\sigma, q')$}
        \If{$i < |w|$ \textbf{and} $w[i] = \sigma$}
          $\mathsf{result} \gets \Call{\MemoAtMAzAw}{q', i + 1}$
        \Else\ 
          $\mathsf{result} \gets \mathsf{Failure}(\nu_{\mathcal{A}_0}(q))$
        \EndIf%
      \ElsIf{$T(q) = \mathsf{Sub}(\mathsf{at}, \mathcal{A}', q')$}
        \State$(P', Q', {q_0}', F', T') = \mathcal{A}'$
        \State$\mathsf{result} \gets \Call{\MemoAtMAzApw}{q_0', i}$\label{line:at-memoization-call-1}
        \If{$\mathsf{result} = \mathsf{SuccessAt}(i', K')$}
          \State$\mathsf{result} \gets \Call{\MemoAtMAzAw}{q', i'}$\label{line:at-memoization-call-2}
          \If{$\mathsf{result} = \mathsf{SuccessAt}(i'', K'')$}
            $\mathsf{result} \gets \mathsf{SuccessAt}(i'', K' \cup K'')$
          \ElsIf{$\mathsf{result} = \mathsf{Failure}(j)$}
            \For{$k \in K'$}\label{line:at-memoization-batch}
              $M(k) \gets \mathsf{Failure}(j)$
            \EndFor%
          \EndIf%
        \ElsIf{$\mathsf{result} = \mathsf{Failure}(j)$ \textbf{and} $j > \nu_{\mathcal{A}_0}(q)$}
          $\mathsf{result} \gets \mathsf{Failure}(\nu_{\mathcal{A}_0}(q))$\label{line:at-memoization-reset}
        \EndIf%
      \EndIf%
      \If{$\mathsf{result} = \mathsf{SuccessAt}(i', K)$}
        $\mathsf{result} \gets \mathsf{SuccessAt}(i', K \cup \{ (q, i) \})$
      \ElsIf{$\mathsf{result} = \mathsf{Failure}(j)$}\ 
        $M(q, i) \gets \mathsf{Failure}(j)$\label{line:at-memoization-update}
      \EndIf%
      \State\Return$\mathsf{result}$
    \EndFunction%
  \end{algorithmic}
\end{algorithm}

\cref{alg:at-memoization} is our  algorithm for at-NFAs; the type of its memoization tables is $M\colon P \times \mathbb{N} \rightharpoonup \{ \mathsf{Failure}(j)\mid j \in \{ 0, \dots, \nu(\mathcal{A}) \} \}$.
Some remarks are in order.

Note first that the algorithm takes, as its parameters, the whole at-NFA $\mathcal{A}_{0}$ and its sub-automaton $\mathcal{A}$ as the algorithm's current scope. The top-level call is made with $\mathcal{A}_{0}=\mathcal{A}$  (cf.\ \cref{thm:at-cor,thm:at-lin}); when an $\mathsf{at}$ transition is encountered, the scope goes to the corresponding sub-automaton ($\mathcal{A}'$ in Line~17).

In \cref{line:at-memoization-branch}, the $\mathbf{if}$ condition checks that the nesting depth of $\mathsf{Failure}$ is the depth of the current NFA, and backtracking is performed if and only if it is true. This approach is crucial for avoiding the error in \cref{exa:at-issue-1}.
The rest of the cases for $\mathsf{Eps}, \mathsf{Branch},\mathsf{Char}$ is similar to \cref{alg:memoization}.

The case for $\mathsf{Sub}$ (Lines~15--23) requires some explanation. It is an adaptation of Lines~17--21 of \cref{alg:memoization} with memoization. The apparent complication comes from the set $K$ in $\mathsf{SuccessAt}(i',K)$. The set $K$ is a set of \emph{keys} for a memoization table $M$, that is, pairs $(q,i)$ of a state and a position. The role of $K$ is to collect the set of keys of $M$ for which, once failure happens, the entry $\mathsf{Failure}(j)$ has to be recorded (this is done in a batch manner in Line~22). More specifically, once failure happens in an outer automaton (i.e., at a smaller depth $j$), this has to be recorded as $\mathsf{Failure}(j)$ for inner automata (at greater depths). The set $K$ collects those keys for which this has to be done, starting from inner automata ($\mathcal{A}'$, Line~18) and going to outer ones ($\mathcal{A}$, Lines~19--20). 

A closer inspection reveals that Line~20 is vacuous in \cref{alg:at-memoization}; however, it is needed when we combine it with look-around at the end of the section.




\cref{alg:at-memoization} exhibits the desired properties, namely correctness (with respect to \cref{prob:laAtRegexMatch}) and linear-time complexity.
In \cref{thm:at-cor}, $f$ is a function that converts results of \cref{alg:at-memoization} to results of \cref{alg:la-at-matching}; it is defined by $f(\mathsf{Failure}(j)) = \mathsf{Failure}$ and $f(\mathsf{SuccessAt}(i', K)) = \mathsf{SuccessAt}(i')$.

\vspace{.3em}
\noindent
\begin{minipage}{\textwidth}
\begin{theorem}[linear-time complexity of \cref{alg:at-memoization}]\label{thm:at-lin}
  For an at-NFA $\mathcal{A} = (P, Q, q_0, F, T)$, an input string $w$, and an position $i \in \{ 0, \dots, |w| \}$, $\Call{\MemoAtMzAAw}{q_0, i}$ terminates with $O(|w|)$ recursive calls.
\end{theorem}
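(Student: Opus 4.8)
The plan is to mirror the proofs of \cref{thm:algo-lin,thm:la-lin}. I would first set up the notion of a \emph{run} of \cref{alg:at-memoization}: a tree that records every recursive call of $\textproc{\MemoAt}$ together with its arguments and return value, and also records the reads of the memoization table (the lookup $\mathbf{If}\ M(q,i)\neq\bot$ at the top of the function) and the writes to it (the batch update on \cref{line:at-memoization-batch} and the single write on \cref{line:at-memoization-update}). Call a recursive call \emph{productive} if it proceeds past the memoization check (so it actually inspects $T(q)$); the remaining calls return immediately and spawn no further calls. Since every productive call spawns at most two direct recursive subcalls (the worst cases being $\mathsf{Branch}$ and the two calls on \cref{line:at-memoization-call-1,line:at-memoization-call-2} for $\mathsf{Sub}$), the total number of recursive calls is at most $1 + 2\cdot(\text{number of productive calls})$. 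Hence it suffices to show that the number of productive calls is $O(|w|)$; termination then follows from \cref{assum:la-at-no-eps-loops} exactly as in \cref{thm:algo-lin}.

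The key lemma --- and the step I expect to be the crux --- is that \emph{each argument $(q,i)\in P\times\{0,\dots,|w|\}$ is the argument of at most one productive call of $\textproc{\MemoAt}$ in a run}; granting this, there are at most $|P|\cdot(|w|+1)=O(|w|)$ productive calls (the constant $|P|$ depends only on $\mathcal{A}$), and we are done. For the lemma, fix $(q,i)$ and its first productive call $c$. If $c$ returns $\mathsf{Failure}(j)$, then $M(q,i)$ is written on \cref{line:at-memoization-update}, and since $M$ is never overwritten, every later call with argument $(q,i)$ takes the shortcut and is not productive --- this is the situation already handled in \cref{thm:algo-lin}. The genuinely new case is when $c$ returns $\mathsf{SuccessAt}(i',K)$: no entry of $M$ is written by $c$ itself, so a different argument is needed. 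The idea is to track the pending key set: the key $(q,i)$ is appended to the key component of the success value just before $c$ returns (the step $\mathsf{result}\gets\mathsf{SuccessAt}(i',K\cup\{(q,i)\})$), and it is never removed; success values pass through $\mathsf{Eps}$, $\mathsf{Char}$ and $\mathsf{Branch}$ verbatim, and at a $\mathsf{Sub}$ node the inner key set is unioned into the outer one on the success branch following \cref{line:at-memoization-call-2}. A success value can be turned into a failure \emph{only} at a $\mathsf{Sub}$ node whose continuation call returns $\mathsf{Failure}(j)$, and that is exactly where the batch update on \cref{line:at-memoization-batch} records $M(k)\gets\mathsf{Failure}(j)$ for every $k$ in the key set carried up to that node --- in particular for $(q,i)$. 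Therefore, following the unique success value that carries $(q,i)$ upward through the call tree, exactly one of two things occurs: it reaches the top-level call, so the overall matching succeeds and the run ends with no subsequent call at all; or it is consumed at some $\mathsf{Sub}$ continuation failure, which writes $M(q,i)$ and makes all later calls with argument $(q,i)$ unproductive. Either way $(q,i)$ has no second productive call.

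Making the success case precise requires two auxiliary inductions on the call tree: (i) along any path from $c$ upward, the key $(q,i)$ stays in the current key set until that key set is batch-recorded (the only place a key is added is the append step at the end of a productive call, and no operation deletes keys); and (ii) every productive descendant of a call returning $\mathsf{Failure}$ that itself returned $\mathsf{SuccessAt}$ has already had its key batch-recorded --- this uses that the $\mathsf{Sub}$ case is the only transition converting a $\mathsf{SuccessAt}$ into a $\mathsf{Failure}(j)$, so any internal success below a failing call was consumed, hence batch-recorded, at such a node. As a by-product of (i), each key is appended to at most one key set and each key set is batch-recorded at most once, so the cumulative cost of all batch updates on \cref{line:at-memoization-batch} is also $O(|w|)$; this is not needed for counting recursive calls but is convenient for the running-time statement. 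Finally, the depth bookkeeping of \cref{def:nesting-depth} and the $\min$/reset manipulations on \cref{line:at-memoization-branch,line:at-memoization-reset} only influence \emph{which} branch is followed, never \emph{how often} a given argument recurs, so they can be ignored throughout this argument: the delicate issue they are designed to prevent (\cref{exa:at-issue-1}) is a correctness matter, settled by \cref{thm:at-cor}, not a complexity one.
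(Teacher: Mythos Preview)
Your approach is essentially the paper's: the paper defines runs, shows each $\mathsf{C}(q,i)$ appears at most once (\cref{lem:exec-once-at}), and then bounds the memoization hits by in-degree (\cref{lem:memo-in-at}); your ``at most two children per productive call'' is the same bound phrased dually from the child side rather than the parent side. Your key-set tracking for the success case matches the paper's proof of \cref{lem:exec-once-at}, written out somewhat more carefully.

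One point deserves to be made explicit. When the first productive call $c$ at $(q,i)$ returns $\mathsf{SuccessAt}$, the batch write of $M(q,i)$ on \cref{line:at-memoization-batch} at the enclosing $\mathsf{Sub}$ node happens only \emph{after} the continuation call on \cref{line:at-memoization-call-2} has completed; during that continuation call $M(q,i)$ is still undefined. Your inductions (i) and (ii) show the key is \emph{eventually} written, but do not by themselves rule out a second productive call at $(q,i)$ sneaking in before that write. The paper singles out exactly this possibility as the crux---the event $(\ast)$ in the paragraph following \cref{thm:at-cor}---and closes it with a separate no-re-entry lemma (\cref{lem:no-eps-loop-at}): within the range of any call, the same argument cannot recur, using \cref{assum:la-at-no-eps-loops} together with the disjointness of sub-automaton state spaces in \cref{def:NFAwSubautom} (the only route back into the inner automaton is through the very $\mathsf{Sub}$ node still pending on the stack). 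You invoke \cref{assum:la-at-no-eps-loops} only for termination; it is needed again here, and that step should be spelled out rather than absorbed into ``makes all later calls with argument $(q,i)$ unproductive.''
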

\end{minipage}
\begin{minipage}{\textwidth}
\begin{theorem}[correctness of \cref{alg:at-memoization}]\label{thm:at-cor}
  For an at-NFA $\mathcal{A} = (P, Q, q_0, F, T)$, an input string $w$, and an position $i \in \{ 0, \dots, |w| \}$, $\Call{\MatchLaAt}{q_0, i} = f(\Call{\MemoAtMzAAw}{q_0, i})$.
\end{theorem}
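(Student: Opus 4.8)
The plan is to adapt the run-conversion argument of \cref{thm:algo-cor} to the three new features of \cref{alg:at-memoization}: the $\mathsf{Sub}(\mathsf{at},\cdot)$-transitions, the depth-annotated failures $\mathsf{Failure}(j)$, and the batched memo writes driven by the key sets $K$ (\cref{line:at-memoization-batch}). As before, I would first define a notion of \emph{run} for \cref{alg:la-at-matching} on at-NFAs and for \cref{alg:at-memoization}, recording the tree of recursive calls with their arguments and return values and, for \cref{alg:at-memoization}, every read and write of the memoization table $M$; runs are finite by \cref{assum:la-at-no-eps-loops}, so all the inductions below are well-founded. The theorem is then the instance at the top-level call of a correctness lemma for individual calls: there $\mathcal{A}_0=\mathcal{A}$ and $\nu_{\mathcal{A}_0}(q_0)=0$, so a returned $\mathsf{SuccessAt}(i',K)$ has $f$-image the correct $\mathsf{SuccessAt}(i')$, and a returned $\mathsf{Failure}(j)$ has $j\le 0$, hence $j=\nu_{\mathcal{A}_0}(q_0)$, which the lemma forces to mean $\Call{\MatchLaAt}{q_0, i}=\mathsf{Failure}$.

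The engine is a pair of invariants maintained while scanning the run of $\Call{\MemoAtMzAAw}{q_0, i}$ in execution order. \emph{Memo soundness}: every entry $M(q,i')=\mathsf{Failure}(j)$ has $j\le\nu_{\mathcal{A}_0}(q)$, and is either \emph{at-depth} ($j=\nu_{\mathcal{A}_0}(q)$), in which case $\textproc{Match}$ run on the sub-automaton $\mathcal{B}_q$ containing $q$ from $(q,i')$ returns $\mathsf{Failure}$, or \emph{shallow} ($j<\nu_{\mathcal{A}_0}(q)$), in which case the entry was written in a batch (\cref{line:at-memoization-batch}) for a key visited during a successful sub-automaton match whose continuation subsequently failed at depth $j$. \emph{Call correctness}: every recursive call of \cref{alg:at-memoization} within the run, in a scope $\mathcal{B}$ from $(q,i')$, returns $v$ with either $v=\mathsf{SuccessAt}(i'',K)$ and $\textproc{Match}$ on $(\mathcal{B},w)$ from $(q,i')$ returning $\mathsf{SuccessAt}(i'')$, or $v=\mathsf{Failure}(j)$ with $j\le\nu_{\mathcal{A}_0}(q)$ and, when $j=\nu_{\mathcal{A}_0}(q)$, that $\textproc{Match}$-run returning $\mathsf{Failure}$. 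Deliberately, a shallow return value asserts nothing about $\textproc{Match}$; it is a failure ``owed'' to an outer scope.

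Call correctness I would prove by structural induction on the sub-run rooted at the call, splitting on $T(q)$. The $\mathsf{Eps}$ and $\mathsf{Char}$ cases are routine, as in \cref{alg:memoization}. In the $\mathsf{Branch}$ case the depth test of \cref{line:at-memoization-branch} does the work: \cref{alg:at-memoization} retries the second successor exactly when the first produced an at-depth failure, and then the IH shows $\textproc{Match}$ backtracks there too; a shallow failure from the first successor is forwarded unchanged, which is consistent precisely because at the current depth no claim about $\textproc{Match}$ at $q$ is being made. The $\mathsf{Sub}(\mathsf{at},\cdot)$ case is the substantial one: the IH for the sub-automaton call on \cref{line:at-memoization-call-1} is position-preserving, so the continuation on \cref{line:at-memoization-call-2} starts where $\textproc{Match}$'s continuation starts; the reset on \cref{line:at-memoization-reset} turns a genuinely deeper failure of the sub-automaton into a failure at the at-state's own depth, matching the fact that $\textproc{Match}$ never re-enters a committed atomic group; and the batched write records, for each key on the successful sub-match, that committing to that sub-match is doomed. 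I would also factor out a transparency sub-claim---reading an at-depth entry keeps both algorithms in lockstep, while reading a shallow entry makes the failure escape the enclosing scope without any backtracking that re-engages the depth-first search---which is what lets the run conversion splice in failing $\textproc{Match}$ sub-runs exactly as in \cref{thm:algo-cor}.

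The main obstacle, for which \cref{exa:at-issue-1} is the template, is reconciling shallow entries. After an atomic group has been committed and its continuation has failed, \cref{alg:at-memoization} can short-circuit a later re-entry of the sub-automaton on a shallow memo hit and propagate $\mathsf{Failure}(j)$ out, whereas $\textproc{Match}$ re-runs the sub-automaton's depth-first search from scratch and may find a fresh success inside it; the two genuinely diverge \emph{within} the group and must still be shown to agree on the result of the enclosing $\mathsf{at}$-transition. The resolution is that a shallow entry at $(q'',i'')$ was created exactly because the sub-automaton's search, once at $(q'',i'')$, runs on to the same match endpoint $p$ as in the exploration that produced the entry, and the continuation from $p$ failed at depth $\nu_{\mathcal{A}_0}(q_{\mathsf{at}})$; hence $\textproc{Match}$'s fresh sub-match also ends at $p$, its continuation fails by the at-depth part of memo soundness, and the atomic grouping yields $\mathsf{Failure}$ on both sides. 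Making ``runs on to the same $p$'' rigorous---that every memo hit on the path to $(q'',i'')$ is at-depth and hence transparent, while a shallow one would have escaped before reaching $(q'',i'')$---and checking that later batched writes never invalidate an earlier entry of this form, is the delicate accounting; everything else then closes as in \cref{thm:algo-cor}.
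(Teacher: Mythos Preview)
Your invariant-based strategy is a genuine alternative to the paper's run-conversion argument, and the at-depth/shallow dichotomy is exactly the right decomposition. There is, however, a concrete gap in your memo-soundness invariant. You assert that every shallow entry $M(q,i')=\mathsf{Failure}(j)$ with $j<\nu_{\mathcal{A}_0}(q)$ was written by the batch at \cref{line:at-memoization-batch}. This is false: \cref{line:at-memoization-update} also writes $M(q,i')\gets\mathsf{Failure}(j)$ whenever the call at $(q,i')$ itself returns a shallow failure, which happens as soon as a shallow memo hit downstream propagates back through $\mathsf{Eps}$, through $\mathsf{Branch}$ with the depth test on \cref{line:at-memoization-branch} failing, or through $\mathsf{Sub}$. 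Shallow entries therefore form chains whose root is a batch write but whose later links are ordinary \cref{line:at-memoization-update} writes; for those links the original call at $(q,i')$ did \emph{not} succeed to any endpoint $p$---it escaped on a shallow hit---so your endpoint argument does not apply as stated. Closing the induction requires the invariant to record, for each shallow entry, the whole chain of committed sub-match endpoints and continuations down to the originating batch; even for a genuinely batch-written entry at depth $d{+}k$ with failure depth $d$ there are $k$ intermediate endpoints, and your single-$p$ formulation only covers $k=1$.

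The paper sidesteps this chaining by working syntactically on the run. It defines a pointer $\phi^{\mathsf{C}}_s$ that starts at the first call $\mathsf{C}(q,i)$ and, whenever that call succeeded (so $(q,i)$ lies in some $K'$), jumps to the continuation call following the enclosing sub-automaton return and recurses, stopping only at the eventual $\mathsf{F}(j)$ that the table actually stores; and a pointer $\phi^{\mathsf{M}}_s$ that, on a shallow hit, walks outward through enclosing $\mathsf{at}$-scopes until the current depth equals $j$. The substitution $\mathsf{sub\text{-}at}$ splices the slice $\mathsf{C}(q,i)\,s[\ell'{+}1\,..\,\phi^{\mathsf{C}}_s(\ell')]$ in place of the hit and resumes at $\phi^{\mathsf{M}}_s(\ell){+}1$; iterating to a fixed point produces (after erasing depth annotations) the $\textproc{\MatchLaAt}$ run. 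Because $\phi^{\mathsf{C}}_s$ already traverses the full chain of continuations, multi-level nesting and line-\ref{line:at-memoization-update} shallow writes are absorbed in a single mechanism. Your approach can be completed, but the invariant must internalise what $\phi^{\mathsf{C}}_s$ computes rather than the single-endpoint property you stated.
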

\end{minipage}

\vspace{.3em}
\noindent
 \cref{thm:at-lin,thm:at-cor} are proved similarly to \cref{thm:algo-lin,thm:algo-cor}; \FinalOrArxiv{see~\cite[Appendix~B.3]{FujinamiH24ESOP_arxiv_extended_ver}}{see \cref{appendix:memo-at-proof}}. The following points require some extra care. 

Firstly, for linear-time complexity (\cref{thm:at-lin}), there is another recursive call (Line~19) before the return value of a recursive call (Line~17) is memoized (Line~22). If the second recursive call (Line~19) eventually leads to (the same call as) the first call (Line~17) (let's call this event $(\ast)$), then this can nullify the effect of memoization. We prove, as a lemma, that $(\ast)$ never happens.

Secondly, for correctness (\cref{thm:at-cor}), our conversion of runs should replace an invocation of the memoization table---if it returns a failure with a shallower depth---with not only the corresponding run (as before) but also the run of the second recursive call (Line~19). \FinalOrArxiv{See~\cite[Appendix~B.3]{FujinamiH24ESOP_arxiv_extended_ver}}{See \cref{appendix:memo-at-proof}} for details.

\myparagraph{Combination with Look-around}\label{subsec:combWithLA}
It is also possible to combine with \cref{alg:la-memoization} (for look-around) and \cref{alg:at-memoization} (for atomic grouping).
In this case, the type of memoization tables becomes $M\colon P \times \mathbb{N} \rightharpoonup \{ \mathsf{Failure}(j)\mid j \in \{ 0, \dots, \nu(\mathcal{A}) \} \} \cup \{ \mathsf{Success} \}$ and nesting depths of the atomic group are reset by look-around operators.
A complete algorithm can be found \FinalOrArxiv{in~\cite[Appendix~C]{FujinamiH24ESOP_arxiv_extended_ver}}{in \cref{appendix:la-at-memoization}}; it also exhibits the desired properties.

\section{Experiments and Evaluation}\label{sec:exp}

\myparagraph{Implementation}
We implemented the algorithm proposed in this paper for evaluation.
We call our implementation \texttt{memo-regex}.
It is written in 1368 lines of Scala.

\texttt{memo-regex} supports both look-around (i.e., look-ahead and look-behind) and atomic grouping.
We implemented a regex parser ourselves.
Backtracking is implemented by managing a stack manually rather than using a recursive function to prevent stack overflow.
In this case, the memoization keys are pushed onto the stack.
Recoding these keys in a memoization table is done during backtracking.
We used the mutable \texttt{HashMap} from the Scala standard library as a data structure for memoization tables.

\texttt{memo-regex} also supports capturing sub-matchings.
However, this feature cannot be used within atomic grouping and positive look-around because sub-matching information is lost for memoization.

The code of \texttt{memo-regex}, as well as all experiment scripts, is available~\cite{fujinami_2024_10458317}.

\myparagraph{Efficiency of Our Algorithm}
We conducted experiments to assess the performance of our \texttt{memo-regex}, in particular in comparison with other existing implementations. 

\begin{table}[tb]
  \caption{our benchmark regexes and input strings}\label{tab:redos-vuln-regexes}
  \begin{tabular}{rl}\toprule
    $r_1$ &
    \begin{tabular}{l}
      \EscVerb{/^(?=^.{1,254}\$)(^(?:(?!\\.|-)([a-z0-9\\-\\*]{1,63}|([a-z0-9\\-]{1,62}[a-} \\
      \EscVerb{z0-9]))\\.)+(?:[a-z]{2,})$)$/s} \\
      \scriptsize{\qquad}input: $w_1 = \texttt{"$0.$" "$0.0a.$"${}^n$ "${\backslash}u0000$"}$, complexity: $O(2^n)$ \\
      \scriptsize{\qquad}\url{https://regexlib.com/REDetails.aspx?regexp_id=3494}
    \end{tabular} \\
    \midrule
    $r_2$ &
    \begin{tabular}{l}
      \EscVerb{/(?=(?:[^\\']*\\'[^\\']*\\')*(?![^\\']*\\'))/} \\
      \scriptsize{\qquad}input: $w_2 = \texttt{"$x$"${}^n$ "$'$"}$, complexity: $O(n^2)$ \\
      \scriptsize{\qquad}\url{https://regexlib.com/REDetails.aspx?regexp_id=938}
    \end{tabular} \\
    \midrule
    $r_3$ &
    \begin{tabular}{l}
      \EscVerb{/(?<=[\\w\\s](?:[\\.\\!\\? ]+[\\x20]*[\\x22\\xBB]*))(?:\\s+(?![\\x22\\xBB](?!\\w)} \\
      \EscVerb{))/} \\
      \scriptsize{\qquad}input: $w_3 = \texttt{"${\backslash}"$"${}^n$ "$\hspace{1em}$"}$, complexity: $O(n^2)$ \\
      \scriptsize{\qquad}\url{https://regexlib.com/REDetails.aspx?regexp_id=2355}
    \end{tabular} \\
    \midrule
    $r_4$ &
    \begin{tabular}{l}
      \EscVerb{/(?:(<)\\s*?(\\w+)(\\s*?(?>(?!=[\\/\\?]?>)(\\w+)(?:\\s*(=)\\s*)((?:\\'[^\\']*\\'} \\
      \EscVerb{|\\"[^\\"]*\\"|[^ >]+))))\\s*?([\\/\\?]?>))/} \\
      \scriptsize{\qquad}input: $w_4 = \texttt{"$<$" "$aaa$"${}^n$ "$>$"}$, complexity: $O(n^2)$ \\
      \scriptsize{\qquad}\url{https://regexlib.com/REDetails.aspx?regexp_id=373}
    \end{tabular} \\
    \bottomrule
  \end{tabular}
\end{table}

As target regexes, we looked for those with look-around and/or atomic grouping in the real-world regexes posted on \url{regexlib.com}. We then identified---by manual inspection---four regexes $r_1,\dotsc, r_4$ that are subject to potential catastrophic backtracking. These regexes are shown in \cref{tab:redos-vuln-regexes}. We then crafted input strings $w_1, \dotsc, w_4$, respectively, so that they cause catastrophic backtracking.
Specifically, $r_1$ contains positive look-ahead and negative look-ahead;
this positive look-ahead is used for restricting the length of input strings.
The regexes $r_2$ and $r_3$ are themselves positive look-ahead and look-behind, respectively; both include negative look-ahead, too.
The regex $r_4$ includes atomic grouping and negative look-ahead.

For these regexes, we measured matching time using \texttt{memo-regex} on OpenJDK 21.0.1. We compared it with the following implementations: Node.js 20.5.0, Ruby 3.1.4, and  PCRE2 10.42 (used by PHP 8.3.1, w/ or w/o JIT).
All of these implementations use backtracking; Ruby and PCRE2 have restrictions on regexes inside look-behind and Node.js does not support atomic grouping.
The experiments were performed 10 times and the average was adopted.
Furthermore, for \texttt{memo-regex}, we measured the size of its memoization table by the memory usage, using jamm.\footnote{\url{https://github.com/jbellis/jamm}}
The experiments were conducted on MacBook Pro 2021 (Apple M1 Pro, RAM: 32 GB).


\begin{figure}[tb]
  \centering
  \begin{minipage}[t]{0.3\textwidth}
    \includegraphics[width={\hsize}]{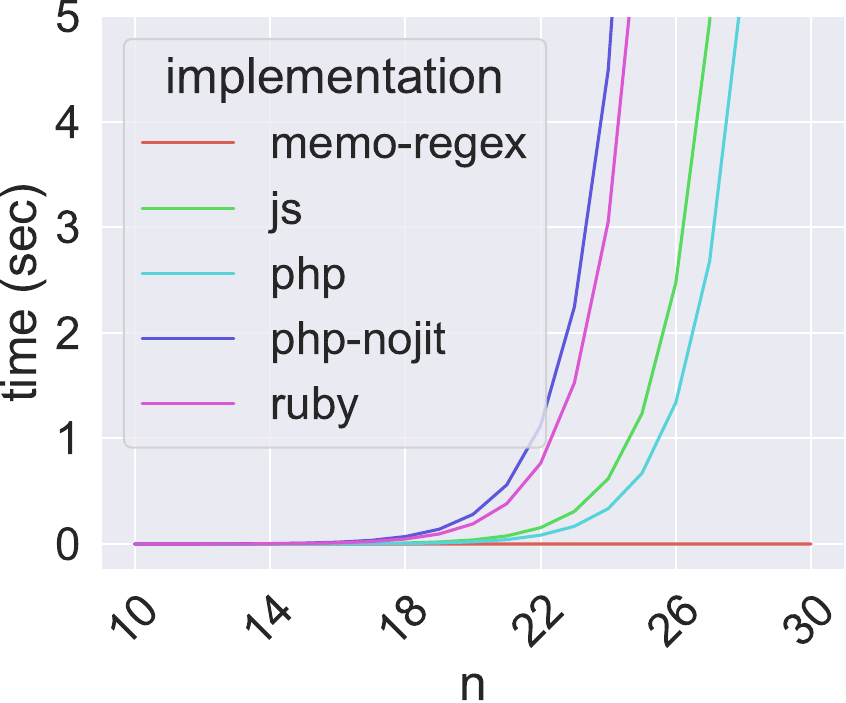}
    \subcaption{comparison to others}
  \end{minipage}
  \begin{minipage}[t]{0.3\textwidth}
    \includegraphics[width={\hsize}]{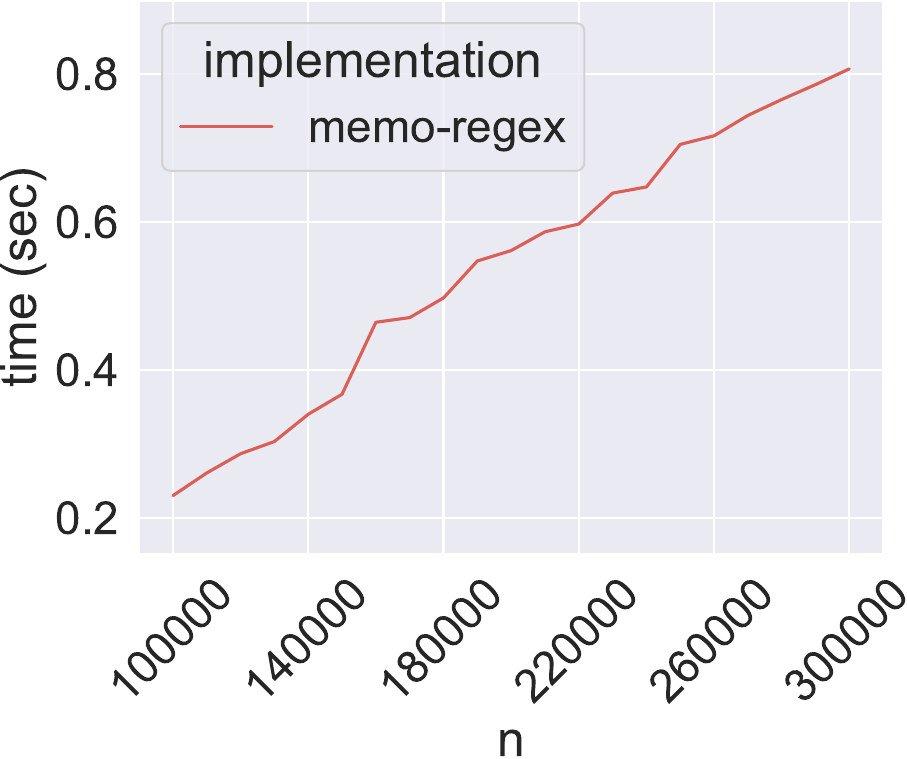}
    \subcaption{\texttt{memo-regex} only}
  \end{minipage}
  \begin{minipage}[t]{0.3\textwidth}
    \includegraphics[width={\hsize}]{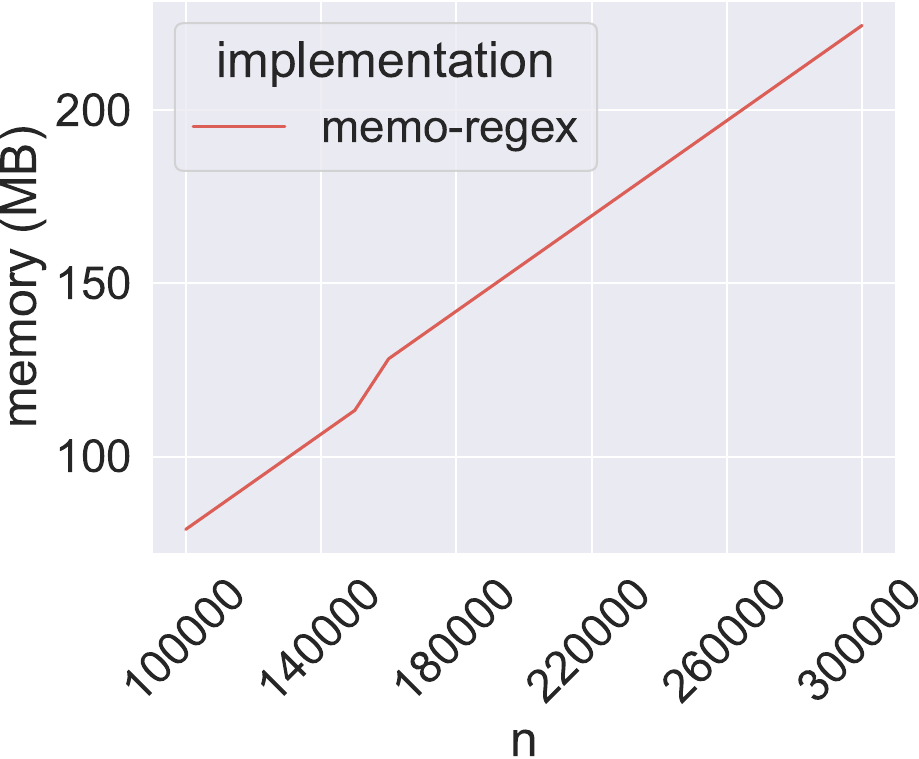}
    \subcaption{memoization table size}
  \end{minipage}
  \caption{result for $r_1$}\label{fig:matching-time-r1}
\end{figure}

\begin{figure}[tb]
  \centering
  \begin{minipage}[b]{0.3\textwidth}
    \includegraphics[width={\hsize}]{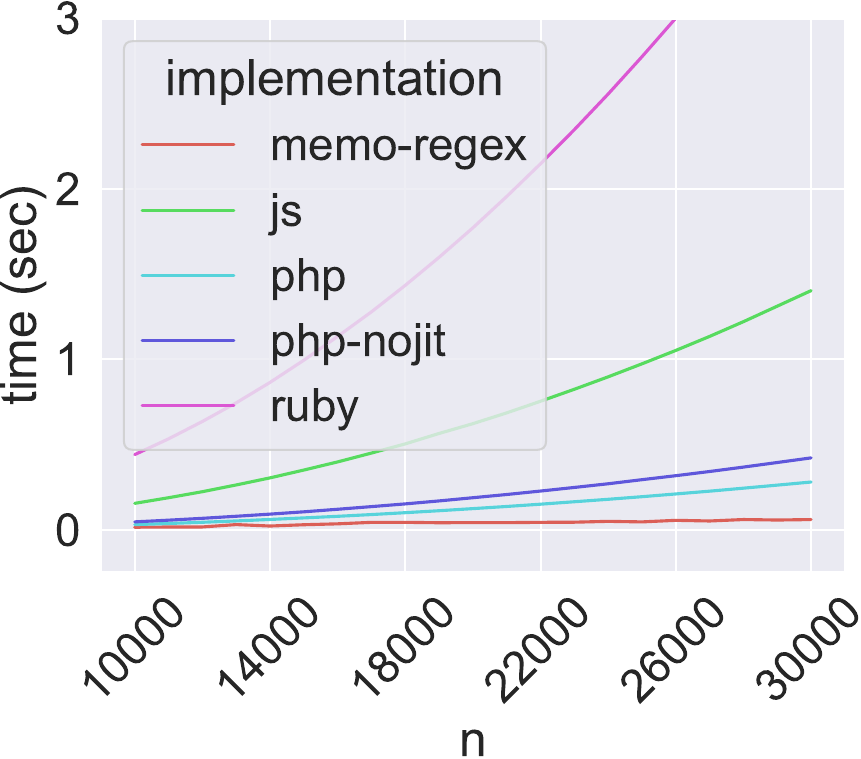}
    \subcaption{$r_2$}
  \end{minipage}
  \begin{minipage}[b]{0.3\textwidth}
    \includegraphics[width={\hsize}]{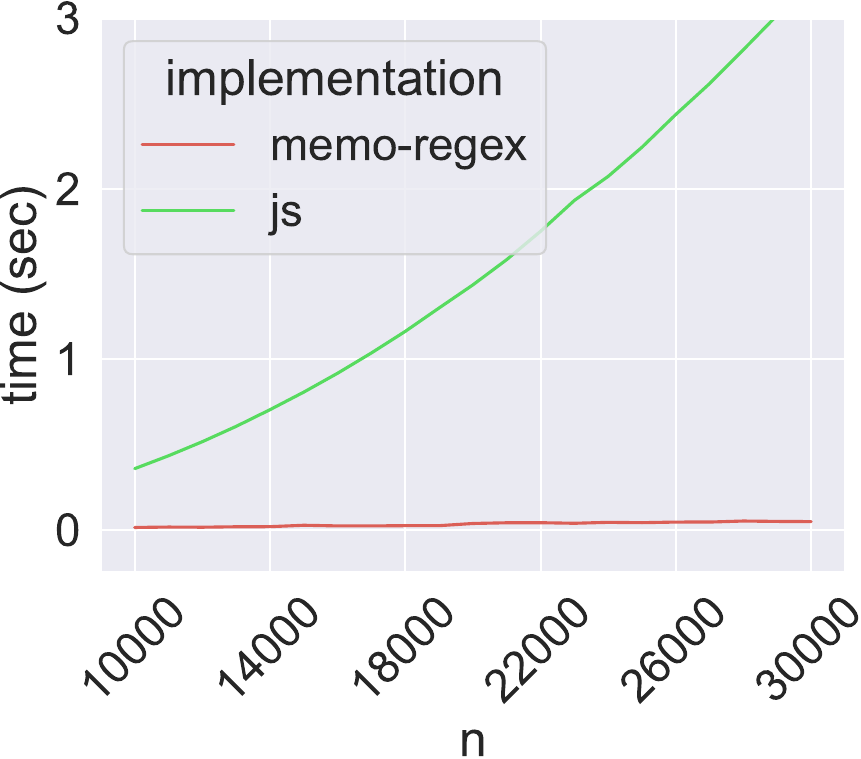}
    \subcaption{$r_3$}
  \end{minipage}
  \begin{minipage}[b]{0.3\textwidth}
    \includegraphics[width={\hsize}]{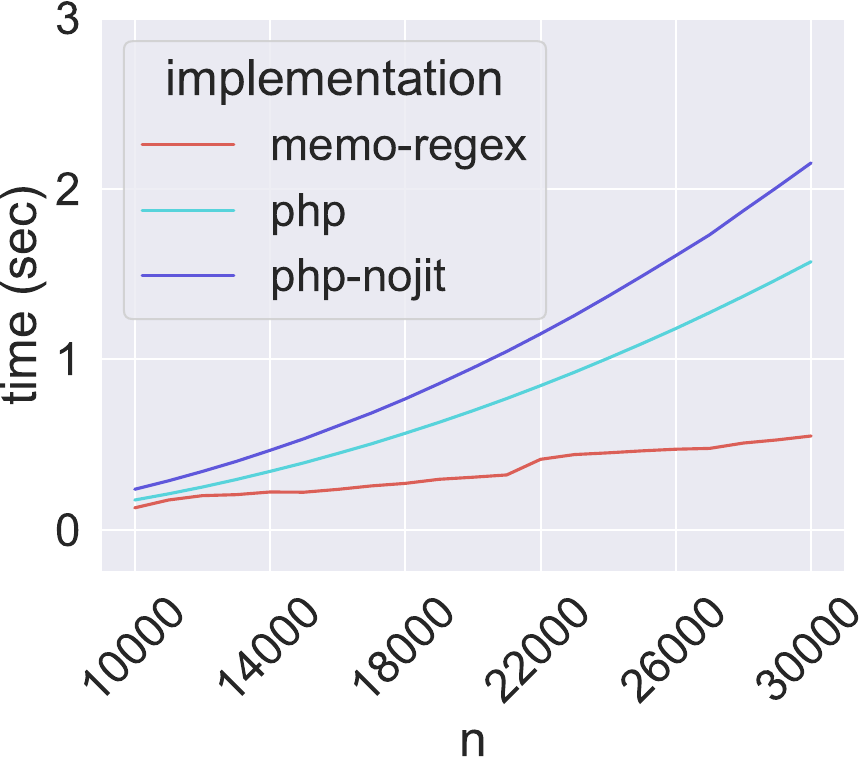}
    \subcaption{$r_4$}
  \end{minipage}
  \caption{matching time for $r_2, r_3$ and $r_4$}\label{fig:matching-time-r2-4}
\end{figure}


We show the results in \cref{fig:matching-time-r1,fig:matching-time-r2-4}.
Note that the values of $n$ are different depending on whether the matching time complexity is $O(n^2)$ or $O(2^n)$. Results for some implementations are absent for $r_{3}$ and $r_4$ because of the syntactic restrictions discussed above. 

In \cref{fig:matching-time-r1,fig:matching-time-r2-4},
we observe clear performance advantages of $\texttt{memo-regex}$. In particular, its linear-time complexity and linear memory consumption (memoization table size) are experimentally confirmed.

\myparagraph{Real-world Usage of Look-around and Atomic Grouping
}
We additionally surveyed the use of the regex extensions of our interest, in order to confirm their practical relevance.

We used a  regex dataset collected by a 2019 survey~\cite{DBLP:conf/sigsoft/DavisMCSL19}.
This dataset contains 537,806 regexes collected from the source code of real-world products.

We tallied the usage of each regex extension by parsing these regexes in the dataset with our parser in \texttt{memo-regex}.
8,679 regexes could not be parsed or compiled; this is due to back-reference for 4,360 regexes, unsupported syntax (Unicode character class, conditional branching, etc.) for 4,134 regexes, and too large or semantically invalid regexes for the other 184 regexes.
We adopted the remaining 529,127 regexes for tallying.

\begin{wraptable}[10]{r}[0pt]{5cm}
  \vspace{-2.2em}

  \caption{regex ext.\ usage}\label{tab:usage-regex-extension}
  \centering
  \scalebox{.8}{  \begin{tabular}{lr}\toprule
      feature              & \# of regexes \\
      \midrule
      (total)              & 529,127 \phantom{(0.0\%)} \\
      positive look-ahead  & 7,476 (1.4\%) \\
      negative look-ahead  & 6,917 (1.3\%) \\
      positive look-behind & 1,746 (0.3\%) \\
      negative look-behind & 3,750 (0.7\%) \\
      atomic grouping      & 1,113 (0.2\%) \\
      at least one of the above & 17,167 (3.2\%)\\\bottomrule
    \end{tabular}}
\end{wraptable}
The result is shown in \cref{tab:usage-regex-extension}.
Note that \begin{enumerate*}[label=\arabic*)]
  \item the numbers for look-ahead and look-behind do not include simple zero-width assertions such as \texttt{\^} (line-begin) or \texttt{\$} (line-end), and
  \item that of atomic grouping includes possessive quantifiers such as \texttt{*+} and \texttt{++}.
\end{enumerate*}

In \cref{tab:usage-regex-extension}, we observe that 17,167 regexes (3.2\%) in the dataset use at least one of the extensions we studied in this paper. While the ratio is not very large, the absolute number (17,167 regexes) is significant; this implies that there are a number of applications (such as web services) that rely on the regex extensions. Thereby we confirm the practical relevance of these regex extensions.

\section{Conclusions and Future Work}\label{sec:conclusion}

In this paper, we proposed a backtracking algorithm with memoization for regexes with look-around and atomic grouping.
It is the first linear-time backtracking matching algorithm for such regexes.
It also fixs problems of the memoization matching algorithm in \cite{DBLP:conf/sp/DavisSL21} for look-ahead.
We implemented the algorithm; our experimental evaluation confirms its performance advantage.

One direction of future work is to support more extensions.
Our implementation 
does not support a widely used regex extension, namely back-references.
In the recent work~\cite{DBLP:conf/sp/DavisSL21}, back-reference was supported by additionally recording captured positions in memoization tables. We expect that a similar idea is applicable to our algorithm. 

Combination with selective memoization (used in~\cite{DBLP:conf/sp/DavisSL21}; \FinalOrArxiv{see~\cite[Appendix~A]{FujinamiH24ESOP_arxiv_extended_ver}}{see \cref{appendix:selectiveMemo}}) is another direction.
We believe it is possible,
but it will require a more detailed discussion on how to handle sub-automata in the selective memoization schema.


\section*{Acknowledgments}
Thanks are due to Konstantinos Mamouras for pointing to~\cite{MamourasC24} after the dissemination of the preprint version of this paper.

\section*{Data-Availability Statement}
The data that support the findings of this study are openly available in Zenodo at \url{10.5281/zenodo.10458317}, reference number~\cite{fujinami_2024_10458317}.

\bibliographystyle{splncs04}
\bibliography{main}

\appendix
\begin{ArxivVersion}

\section{Selective Memoization}\label{appendix:selectiveMemo}
We describe the second main contribution of the work~\cite{DBLP:conf/sp/DavisSL21},
called \emph{selective memoization}. It shrinks memoization tables and thus reduces memory usage. 

\emph{Selective memoization} is a general technique in which one chooses not to record certain entries, especially in such a case that those entries are easily recoverable from other entries. A famous example is for the Fibonacci function $\mathsf{Fib}(n)$: even if one chooses not to record the values of $\mathsf{Fib}(3n+2)$ for each $n$, it is easily computed by the recorded values $\mathsf{Fib}(3n), \mathsf{Fib}(3n+1)$. The work~\cite{DBLP:conf/sp/DavisSL21} pursued the same in the context of the regex matching algorithm in~\cref{ssec:davisAlgo}. 

Specifically, in~\cite{DBLP:conf/sp/DavisSL21}, the set $Q$ in the original memoization table $Q\times \mathbb{N}\rightharpoonup \{\mathbf{false}\}$ (cf.\ \cref{ssec:davisAlgo}) is restricted to a set $Q_{\textsf{sel}}$. For the last set $Q_{\textsf{sel}}$, the following two options are proposed.
\begin{itemize}
 \item $Q_{\textsf{sel}} = Q_{\textsf{in-deg}>1}$ is the set of states where the number of incoming transitions is at least 2.
 \item $Q_{\textsf{sel}} = Q_\textsf{anc}$ is the set of states that are ancestors of loops when states are topologically sorted by transitions from the initial state.
\end{itemize}
The choices of these restrictions are a matter of heuristics; nevertheless, 
it is established in~\cite[Section~VII]{DBLP:conf/sp/DavisSL21} that these restrictions (and the resulting selective memoization) still suffice for linear-time backtracking matching. 

The work~\cite{DBLP:conf/wia/MerweMLB21} further develops the theory by 1) showing that finding a minimal restriction of $Q$ that achieves linear-time backtracking matching in \cref{alg:davis-matching} is NP-complete, and 2) proposing more sophisticated restriction heuristics (i.e., choices of $Q_{\textsf{sel}}$) based on the domain knowledge of ReDoS.

\section{Proofs for \crefrange{thm:algo-lin}{thm:at-cor}}\label{appendix:proof}

First, recall that we are dealing with five algorithms.
\begin{itemize}
  \item $\textproc{Match}$ from \cref{alg:matching} is a matching algorithm for (non-extended) NFAs and $\textproc{\MatchLaAt}$ from \cref{alg:la-at-matching} is a matching algorithm for extended NFAs.
  \item $\textproc{Memo}$ from \cref{alg:memoization} is a matching algorithm with memoization for (non-extended) NFAs.
  \item $\textproc{\MemoLa}$ from \cref{alg:la-memoization} is a matching algorithm with memoization for la-NFAs and $\textproc{\MemoAt}$ from \cref{alg:at-memoization} is a matching algorithm with memoization for at-NFAs.
\end{itemize}

In \cref{appendix:memo-proof}, we give proofs for the linear-time complexity (\cref{thm:algo-lin}) and the correctness (\cref{thm:algo-cor}) of $\textproc{Memo}$.
Next, in \cref{appendix:memo-la-proof}, we also give proofs for the linear-time complexity (\cref{thm:la-lin}) and the correctness (\cref{thm:la-cor}) of $\textproc{\MemoLa}$, and in \cref{appendix:memo-at-proof}, give proofs for the linear-time complexity (\cref{thm:at-lin}) and the correctness (\cref{thm:at-cor}) of $\textproc{\MemoAt}$.

For any algorithm, our proof strategies are as follows.
\begin{itemize}
  \item We first define the notions of \emph{run} for the $\textproc{Match}$ and $\textproc{Memo}$ algorithms. A run is an execution trace that contains information about recursive calls and their result values.
  \item To show the linear-time complexity of the $\textproc{Memo}$ algorithm, we show the length of a run is bounded by $O(|w|)$, where $|w|$ is the length of the string.
  \item To show the correctness of the $\textproc{Memo}$ algorithm, we introduce a conversion method from a run of $\textproc{Memo}$ to a run of $\textproc{Match}$ while preserving the matching result.
\end{itemize}

\subsection{Proofs for the Linear-time Complexity and Correctness of $\textproc{Memo}$ from \cref{alg:memoization} (\cref{thm:algo-lin} and \cref{thm:algo-cor})}\label{appendix:memo-proof}

First, we define runs of both $\textproc{Match}$ and $\textproc{Memo}$.

\begin{definition}[a run of $\textproc{Match}$]\label{def:match-run}
  For an NFA $\mathcal{A} = (Q, q_0, F, T)$, an input string $w$, we fix a set $C = \{ \mathsf{C}(q, i) \mid q \in Q, i \in \{ 0, \dots, |w| \} \} \cup \{ \mathsf{S}(i') \mid i' \in \{ 0, \dots, |w| \} \} \cup \{ \mathsf{F} \}$.
  For a position $i_0 \in \{ 0, \dots, |w| \}$, we define a run $s$ of $\Call{\MatchAw}{q_0, i_0}$ as a sequence of $C$, which is crafted in the following manner:
  \begin{itemize}
    \item The initial value of $s$ is the empty sequence, and it grows on the execution of $\Call{\MatchAw}{q_0, i_0}$.
    \item When $\textproc{Match}$ is called with some $q \in Q, i \in \{ 0, \dots, |w| \}$, $\mathsf{C}(q, i)$ is appended to $s$.
    \item When $\textproc{Match}$ returns $\mathsf{SuccessAt}(i')$ for some $i' \in \{ 0, \dots, |w| \}$, $\mathsf{S}(i')$ is appended to $s$.
    \item When $\textproc{Match}$ returns $\mathsf{Failure}$, $\mathsf{F}$ is appended to $s$.
  \end{itemize}
\end{definition}

\begin{definition}[a run of $\textproc{Memo}$]\label{def:memo-run}
  For an NFA $\mathcal{A} = (Q, q_0, F, T)$, an input string $w$, we fix a set $C' = C \cup \{ \mathsf{M}(q, i) \mid q \in Q, i \in \{ 0, \dots, |w| \} \}$.
  For a position $i_0 \in \{ 0, \dots, |w| \}$, we define a run $s$ of $\Call{\MemoMzAw}{q_0, i_0}$ as a sequence of $C'$, which is crafted in the following manner:
  \begin{itemize}
    \item The initial value of $s$ is the empty sequence, and it grows on the execution of $\Call{\MemoMzAw}{q_0, i_0}$.
    \item When $\textproc{Memo}$ is called with some $q \in Q, i \in \{ 0, \dots, |w| \}$, but $\mathsf{M}(q, i)$ is not defined, $\mathsf{C}(q, i)$ is appended to $s$.
    \item When $\textproc{Memo}$ is called with some $q \in Q, i \in \{ 0, \dots, |w| \}$ and $\mathsf{M}(q, i)$ is defined, $\mathsf{M}(q, i)$ is appended to $s$.
    \item When $\textproc{Memo}$ returns $\mathsf{SuccessAt}(i')$ for some $i' \in \{ 0, \dots, |w| \}$, $\mathsf{S}(i')$ is appended to $s$.
    \item When $\textproc{Memo}$ returns $\mathsf{Failure}$, $\mathsf{F}$ is appended to $s$.
  \end{itemize}
\end{definition}

We can state that the number of recursive calls of $\textproc{Match}$ equals the number of $\mathsf{C}(q, i)$ that appeared in the corresponding run of $\textproc{Match}$; that is, the number of recursive calls of $\textproc{Match}$ is bounded by $O(|s|)$ with a run $s$ of $\textproc{Match}$.
Similarly, for a run $s$ of $\textproc{Memo}$, the number of recursive calls of $\textproc{Memo}$ equals the sum of the numbers of $\mathsf{C}(q, i)$ and $\mathsf{M}(q, i)$ that appeared in $s$ and is bounded by $O(|s|)$.

When a call $\mathsf{C}(q, i)$ appears in a run $s$ of $\textproc{Match}$, the return value ($\mathsf{F}$ or $(\mathsf{S}(i'))$) of that call is found later in $s$.
It is the same for a run of $\textproc{Memo}$.
Additionally, in a run of $\textproc{Memo}$, $\mathsf{M}(q, i)$ must be followed by $\mathsf{F}$ immediately.
We define $\phi_s$ as a function that finds the result index corresponding to $\mathsf{C}(q, i)$ in a run $s$.

\begin{definition}[a function that finds the result index of $\mathsf{C}(q, i)$]\label{def:find-result}
  For a run $s$ of $\textproc{Memo}$ and an index $\ell \in \{ 0, \dots, |s| - 1 \}$ such that $s[\ell] = \mathsf{C}(q, i)$, a function $\phi_s$ to find the index of the result value of the $s[\ell] = C(q, i)$ call is defined as follows
  \begin{align*}
    \phi_s(\ell) &= \phi_{s}(\ell, 0) \\
    \phi_s(\ell, n) &=
      \begin{cases}
        \ell & \text{if $n \le 1$ and $s[\ell] = \mathsf{F}$ or $\mathsf{S}(i')$} \\
        \phi_s(\ell + 1, n - 1) & \text{if $n > 1$ and $s[\ell] = \mathsf{F}$ or $\mathsf{S}(i')$} \\
        \phi_s(\ell + 1, n + 1) & \text{if $s[\ell] = \mathsf{C}(q, i)$ or $\mathsf{M}(q, i)$},
      \end{cases}
  \end{align*}
  We also define the same function $\phi_s(\ell)$ for a run $s$ of $\textproc{Match}$.
\end{definition}

We will show that for any $q \in Q$ and $i \in \{ 0, \dots, |w| \}$, no calls on the same $q$ and $i$ can occur during the execution of the matching for $q$ and $i$, i.e., the execution of \crefrange{line:memoization-trans-begin}{line:memoization-trans-end} of \cref{alg:memoization}.

\begin{lemma}\label{lem:no-eps-loop}
  For a run $s$ of $\textproc{Memo}$ and an index $\ell \in \{ 0, \dots, |s| - 1 \}$ such that $s[\ell] = \mathsf{C}(q, i)$, neither $\mathsf{C}(q, i)$ nor $\mathsf{M}(q, i)$ appear in the subsequence $s[\ell+1] \dots s[\phi_{s}(\ell)]$.
\end{lemma}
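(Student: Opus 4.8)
The plan is to argue by contradiction, reducing the statement to \cref{assum:no-eps-loops} via a ``first offending occurrence'' argument. First I would record two structural facts about runs of $\textproc{Memo}$ that come directly from inspecting \cref{alg:memoization}: (i) along any chain of nested recursive calls, the current position is non-decreasing --- it is unchanged by $\mathsf{Eps}$ and $\mathsf{Branch}$ transitions and incremented by exactly one by a successful $\mathsf{Char}$ transition; and (ii) the entry $M(q,i)$ is written (at \cref{line:memoization-record}) only at the very end of a call that was invoked with arguments $(q,i)$ and returned $\mathsf{Failure}$, and it is never erased. In run terms, fact (ii) says: if $\mathsf{M}(q,i)$ occurs at index $m$ of $s$, then some $\mathsf{C}(q,i)$ occurs at an index $m'$ with $\phi_s(m') < m$. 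I would also note the bookkeeping fact that the segment $s[\ell+1]\dots s[\phi_s(\ell)]$ is precisely the run of the nested execution triggered by the call recorded at $s[\ell]$, and that every $\mathsf{M}$-symbol is immediately followed by $\mathsf{F}$, so $\mathsf{M}$-nodes are leaves of the call tree.

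Next, suppose toward a contradiction that $\mathsf{C}(q,i)$ or $\mathsf{M}(q,i)$ occurs in $s[\ell+1]\dots s[\phi_s(\ell)]$, and let $m$ be the index of the \emph{earliest} such occurrence. I would first rule out $s[m] = \mathsf{M}(q,i)$: by fact (ii) some $\mathsf{C}(q,i)$ call would have completed strictly before index $m$; that call cannot start before $\ell$, since then $M(q,i)$ would already be defined at index $\ell$ and $s[\ell]$ would be $\mathsf{M}(q,i)$ rather than $\mathsf{C}(q,i)$; and it cannot start in $[\ell+1,m-1]$, since that would contradict the minimality of $m$. Hence $s[m] = \mathsf{C}(q,i)$.

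Finally, since $m$ lies strictly inside the run of the call at $s[\ell]$, the node $s[m]$ is a proper descendant of $s[\ell]$ in the call tree, and the downward path from $s[\ell]$ to $s[m]$ consists only of $\mathsf{C}$-nodes (an $\mathsf{M}$-node, being a leaf, cannot lie on a path to another node), with each edge corresponding to one transition of $\mathcal{A}$. This gives states $q = q^{(0)}, q^{(1)}, \dots, q^{(k)} = q$ with $k \ge 1$ and positions $i = i^{(0)} \le i^{(1)} \le \dots \le i^{(k)} = i$, forcing all positions to equal $i$ by fact (i); hence every transition on the path fixes the position and is therefore an $\mathsf{Eps}$ or $\mathsf{Branch}$ transition, i.e., an $\varepsilon$-transition. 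Thus $q^{(0)} \to \cdots \to q^{(k)}$ is an $\varepsilon$-loop at $q$, contradicting \cref{assum:no-eps-loops}, which finishes the argument.

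The main obstacle I expect is not any single inequality but the careful bridging between the linear syntax of the run $s$ and the tree/graph structure of the execution: justifying that $s[\ell+1]\dots s[\phi_s(\ell)]$ is exactly the nested sub-run, that $\mathsf{M}$-nodes are leaves, and that $\phi_s$ picks out the matching return symbol. Once that correspondence is pinned down, the $\mathsf{C}$-case is a short graph argument and the $\mathsf{M}$-case collapses onto it precisely because $m$ is chosen minimal and $M$ is only ever written, never cleared.
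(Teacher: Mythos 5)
Your proof is correct and takes essentially the same route as the paper, whose own proof of this lemma is just a one-line appeal to \cref{assum:no-eps-loops}; your argument is a careful elaboration of that appeal, turning a repeated $(q,i)$ inside the nested sub-run into an $\varepsilon$-loop via the monotonicity of positions along the call tree, and disposing of the $\mathsf{M}(q,i)$ case via the write-once, failure-only semantics of the memoization table. The only step left implicit --- ruling out an earlier $\mathsf{C}(q,i)$ call that starts before $\ell$ but has not yet returned at $\ell$ --- follows from the proper nesting of calls, so nothing essential is missing.
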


\begin{proof}
  Since an NFA has no $\varepsilon$-loop (\cref{assum:no-eps-loops}), the lemma holds.
  \qed
\end{proof}

We show that for some $q \in Q$ and $i \in \{ 0, \dots, |w| \}$, the execution of the transition for $q$ and $i$ is performed at most once.

\begin{lemma}\label{lem:exec-once}
  For a run $s$ of \textproc{Memo}, a state $q \in Q$, and a position $i \in \{ 0, \dots, |w| \}$, $\mathsf{C}(q, i)$ appears at most once in $s$.
\end{lemma}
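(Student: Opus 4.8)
The plan is a proof by contradiction that exploits the memoization bookkeeping of \cref{alg:memoization}. Suppose $\mathsf{C}(q,i)$ occurs at two distinct indices $\ell_1 < \ell_2$ of $s$. First I would look at the call recorded by $s[\ell_1] = \mathsf{C}(q,i)$ and at the index $\phi_s(\ell_1)$ of its result. By \cref{lem:no-eps-loop}, neither $\mathsf{C}(q,i)$ nor $\mathsf{M}(q,i)$ occurs strictly between $\ell_1$ and $\phi_s(\ell_1)$; since $s[\phi_s(\ell_1)]$ is $\mathsf{F}$ or some $\mathsf{S}(i')$ and hence not $\mathsf{C}(q,i)$, this forces $\ell_2 > \phi_s(\ell_1)$. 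It then remains to derive a contradiction from a second call to $(q,i)$ occurring only \emph{after} the first one has returned.

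I would then split on the return value $s[\phi_s(\ell_1)]$. If it is $\mathsf{F}$, then just before returning, \cref{line:memoization-record} of \cref{alg:memoization} executed $M(q,i) \gets \mathsf{Failure}$; since the only update to $M$ in \cref{alg:memoization} is this setting (entries are never cleared), $M(q,i) \neq \bot$ holds at every execution step after $\phi_s(\ell_1)$, in particular at the step corresponding to $\ell_2$. But then, by \cref{def:memo-run}, that step appends $\mathsf{M}(q,i)$ rather than $\mathsf{C}(q,i)$, contradicting $s[\ell_2] = \mathsf{C}(q,i)$. If instead $s[\phi_s(\ell_1)] = \mathsf{S}(i')$, I would argue that a success is never undone: inspecting the $\mathsf{Eps}$, $\mathsf{Branch}$, and $\mathsf{Char}$ clauses of \cref{alg:memoization}, whenever a recursive call returns some $\mathsf{SuccessAt}(\cdot)$, the caller makes no further recursive call and itself returns some $\mathsf{SuccessAt}(\cdot)$. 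Hence, by induction on the depth of the call stack at the moment $\ell_1$'s call returns, every ancestor returns success immediately, so from index $\phi_s(\ell_1)$ onward $s$ grows only by $\mathsf{S}(\cdot)$ entries. Thus no further $\mathsf{C}(q,i)$ can appear, again contradicting the existence of $\ell_2$. This exhausts the cases, so $\mathsf{C}(q,i)$ occurs at most once.

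The step I expect to need the most care is the ``success propagates'' argument in the second case. Making it precise means committing to a model of the call tree underlying a run (which $\phi_s$ already implicitly recovers): one should formalize that the steps of $s$ after $\phi_s(\ell_1)$ are exactly the returns of the calls enclosing $\ell_1$'s call, and then prove, by induction over that nesting structure, that a subcall returning $\mathsf{S}(i')$ forces its enclosing call to return some $\mathsf{S}(\cdot)$ with no intervening $\mathsf{C}$ or $\mathsf{M}$ entry. The first case is routine once one notes the monotonicity of $M$; the only subtlety there is that the $M$-update happens before the $\mathsf{F}$ is appended at $\phi_s(\ell_1)$, which is immediate from \cref{alg:memoization} together with \cref{def:memo-run}.
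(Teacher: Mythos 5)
Your proposal is correct and follows essentially the same route as the paper's own proof: a case split on the return value of the first call to $(q,i)$, where the $\mathsf{F}$ case is handled by the memoization update in \cref{line:memoization-record} (so any later visit is recorded as $\mathsf{M}(q,i)$), and the $\mathsf{S}(i')$ case by the observation that a success propagates up the call stack so that only $\mathsf{S}(\cdot)$ entries follow. You merely spell out details the paper leaves implicit (the contradiction setup, the appeal to \cref{lem:no-eps-loop} to place the second occurrence after $\phi_s(\ell_1)$, and the monotonicity of $M$), so no genuinely different idea is involved.
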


\begin{proof}
  We fix $\ell \in \{ 0, \dots, |s| - 1 \}$ to be an index such that $s[\ell] = C(q, i)$.
  When $s[\phi_s(\ell)] = F$, the lemma holds in this case because the memoization table is updated after (\cref{line:memoization-record} of \cref{alg:memoization}).
  When $s[\phi_s(\ell)] = S$, for any $\ell < \ell'$, $s[\ell']$ must be $S$, so the lemma holds.
  \qed
\end{proof}

We define the \emph{in-degree} of the NFA transition function and use it to show that the number of the calls on the same $q$ and $i$ is at most the in-degree of $q$.

\begin{definition}[in-degree of NFA]\label{def:in-deg}
  For an NFA $\mathcal{A} = (Q, q_0, F, T)$ and a state $q \in Q$, the \emph{in-degree} of $q$, denoted by $\#_\mathsf{in}(q)$, is defined naturally by
  \begin{align*}
    \#_\mathsf{in}(q)
    =&\textstyle \left|\,\bigcup \{ q' \in Q\ |\ T(q') = \mathsf{Eps}(q) \}\,\right| + \left|\, \bigcup \{ q' \in Q\ |\ T(q') = \mathsf{Char}(\sigma, q) \} \,\right| \\
    +&\textstyle \left|\,\bigcup \{ q' \in Q\ |\ T(q') = \mathsf{Branch}(q, q'') \} \,\right| \\
    +&\textstyle \left|\, \bigcup \{ q' \in Q\ |\ T(q') = \mathsf{Branch}(q'', q) \} \,\right| \\
    +&\textstyle \textnormal{(\textbf{if} $q = q_0$ \textbf{then} 1 \textbf{else} 0)}
  \end{align*}
  We also define the \emph{maximum in-degree} of $\mathcal{A}$ as $\#_\mathsf{in}(\mathcal{A}) = \max_{q \in Q} \#_\mathsf{in}(q)$.
\end{definition}

\begin{lemma}\label{lem:memo-in}
  For a run $s$ of \textproc{Memo}, a state $q \in Q$, and a position $i \in \{ 0, \dots, |w| \}$, $\mathsf{M}(q, i)$ appears at most $\#_\mathsf{in}(q) - 1$ times in $s$.
\end{lemma}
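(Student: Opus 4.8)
The plan is to bound the total number of invocations of $\textproc{Memo}$ carrying the argument pair $(q,i)$ during the execution --- equivalently, the total number of occurrences of $\mathsf{C}(q,i)$ and $\mathsf{M}(q,i)$ in $s$ --- by $\#_\mathsf{in}(q)$, and then to subtract the unique occurrence of $\mathsf{C}(q,i)$ supplied by \cref{lem:exec-once}. The enabling observation is that an invocation of $\textproc{Memo}(q,i)$ recorded as $\mathsf{M}(q,i)$ returns at \cref{line:memoization-ret} of \cref{alg:memoization} \emph{without} issuing any recursive call; hence only the ``genuine'' invocations --- those recorded as $\mathsf{C}(\cdot,\cdot)$ --- can act as parents of further invocations. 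Moreover, since $M$ is initially everywhere $\bot$ and $M(q,i)$ is assigned only inside the body of a genuine $\textproc{Memo}(q,i)$ invocation (\cref{line:memoization-record}), the first invocation of $\textproc{Memo}(q,i)$ in the execution is always recorded as $\mathsf{C}(q,i)$, and by \cref{lem:exec-once} it is the only one.

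First I would organize the invocations into a tree: the root is the top-level call $\Call{\MemoMzAw}{q_0,i_0}$, and every other invocation has a unique parent, namely the genuine invocation from whose body it was issued. Then I would run a charging argument: for the fixed target pair $(q,i)$, each incoming transition of $q$ is responsible for at most one invocation of $\textproc{Memo}(q,i)$. Indeed, if $T(q')=\mathsf{Eps}(q)$, or $T(q')=\mathsf{Branch}(q,q'')$, or $T(q')=\mathsf{Branch}(q'',q)$, the only way $\textproc{Memo}(q,i)$ can be issued along that transition is from the invocation $\textproc{Memo}(q',i)$; and if $T(q')=\mathsf{Char}(\sigma,q)$, only from $\textproc{Memo}(q',i-1)$, and only when $w[i-1]=\sigma$. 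By \cref{lem:exec-once} there is at most one genuine invocation with that parent label, and within its body it issues the call along the transition in question at most once --- with the first-successor and second-successor roles of a $\mathsf{Branch}$ treated as two separate slots of $\#_\mathsf{in}(q)$, which correctly handles the degenerate case $T(q')=\mathsf{Branch}(q,q)$. Adding the single top-level slot (the ``$+1$'' when $(q,i)=(q_0,i_0)$), the number of occurrences of $\mathsf{C}(q,i)$ and $\mathsf{M}(q,i)$ together is at most $\#_\mathsf{in}(q)$.

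Finally I would conclude: if $\mathsf{M}(q,i)$ occurs in $s$ at all, then $\mathsf{C}(q,i)$ occurs exactly once, so $\mathsf{M}(q,i)$ occurs at most $\#_\mathsf{in}(q)-1$ times; and if $\mathsf{M}(q,i)$ does not occur, the bound is immediate, since any state reachable at some position has $\#_\mathsf{in}(q)\ge 1$ while unreachable states contribute nothing to $s$. I expect the only delicate point to be the bookkeeping in the charging step: one must ensure that a single parent state at a single position cannot spawn more copies of $\textproc{Memo}(q,i)$ than $\#_\mathsf{in}(q)$ already budgets for, and this is precisely where \cref{lem:exec-once} together with the separate accounting of the two successors of a $\mathsf{Branch}$ does the work.
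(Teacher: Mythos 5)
Your proposal is correct and follows essentially the same route as the paper's proof: the paper argues by contraposition that $\#_\mathsf{in}(q)$ occurrences of $\mathsf{M}(q,i)$ (plus the one $\mathsf{C}(q,i)$) would force some predecessor call $\mathsf{C}(q',i')$ to appear twice, contradicting \cref{lem:exec-once}, which is exactly your charging argument read backwards. Your write-up just makes the bookkeeping (one slot per incoming transition, separate slots for the two successors of a $\mathsf{Branch}$, plus the top-level slot) explicit where the paper leaves it implicit.
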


\begin{proof}
  If $\mathsf{M}(q, i)$ appears $\#_\mathsf{in}(q)$ times in $s$, $\mathsf{C}(q', i')$ for some $q'$ and $i'$, where $q'$ is a predecessor of $q$, appears twice or more. Then, it violates \cref{lem:exec-once} and the lemma holds.
  \qed
\end{proof}

Now, we can show the line-time complexity of \cref{alg:memoization} (\cref{thm:algo-lin}).

\begin{proof}[for \cref{thm:algo-lin}]
  For an NFA $\mathcal{A} = (Q, q_0, F, T)$, an input string $w$, and a position $i \in \{ 0, \dots, |w| \}$, let $s$ be the run of $\Call{\MemoMzAw}{q_0, i}$, by \cref{lem:exec-once,lem:memo-in}, $|s|$ is bounded by $O(\#_\mathsf{in}(\mathcal{A}) \times |Q| \times |w|)$ and the theorem holds.
  \qed
\end{proof}

To show \cref{thm:algo-cor}, we define some functions that replace subsequences of a run of $\textproc{Memo}$ with subsequences of a run of $\textproc{Match}$.

\begin{definition}[$\mathsf{sub}$]\label{def:sub}
  For a run $s$ of $\textproc{Memo}$ and an index $\ell \in \{ 0, \dots, |s| -1 \}$ such that $s[\ell] = \mathsf{M}(q, i)$, we define $\mathsf{sub}(s, \ell)$ as follows
  \begin{equation*}
    \mathsf{sub}(s, \ell) = s[0] \dots s[\ell - 1]\,\mathsf{C}(q, i)\,s[\ell'+1] \dots s[\phi_s(\ell')] s[\ell+2] \dots s[|s| - 1]
  \end{equation*}
  where $\ell' \in \{ 0, \dots, |s| - 1 \}$ is the first index on $s$ such that $s[\ell'] = \mathsf{C}(q, i)$.
\end{definition}

\begin{definition}[$\mathsf{sub}^\ast$]\label{def:sub-star}
  For a run $s$ of $\textproc{Memo}$, we define $\mathsf{sub}^\ast(s)$ as follows
  \begin{enumerate}
    \item Find the index of the first $\mathsf{M}(q, i)$ for some $q$ and $i$.
    \item If such an index is not found, then it returns $s$.
    \item Let $\ell$ be the index to be found in step 1.
    \item $s \gets \mathsf{sub}(s, \ell)$.
    \item Back to step 1.
  \end{enumerate}
\end{definition}

$\mathsf{sub}$ and $\mathsf{sub}^\ast$ preserve the last result of the run.
Furthermore, we can show the following lemma about the value of $\mathsf{sub}^\ast(s)$.

\begin{lemma}\label{lem:sub-star}
  For an NFA $\mathcal{A} = (Q, q_0, F, T)$, an input string $w$, and a position $i \in \{ 0, \dots, |w| \}$, let $s$ be a run of $\Call{\MatchAw}{q_0, i}$ and $s'$ be a run of $\Call{\MemoMzAw}{q_0, i}$, $s = \mathsf{sub}^\ast(s')$.
\end{lemma}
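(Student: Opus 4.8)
The plan is to follow the two-step recipe announced after \cref{thm:algo-cor}: show that the rewriting $\mathsf{sub}^\ast$ terminates and does not alter the final return symbol of a run, and that the $\mathsf{M}$-free sequence it produces from the $\textproc{Memo}$ run $s'$ is in fact the (unique, since both algorithms are deterministic) run $s$ of $\Call{\MatchAw}{q_0,i}$.

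For termination and return-value preservation, observe first that a table hit $\mathsf{M}(q,i)$ can arise only after the first $(q,i)$-call has finished and recorded $\mathsf{Failure}$ (\cref{line:memoization-record}); hence, writing $\ell'$ for the first index with $s'[\ell']=\mathsf{C}(q,i)$ and $\ell$ for that of any hit $\mathsf{M}(q,i)$, we always have $\ell'<\phi_{s'}(\ell')<\ell$. This property --- ``every hit's defining $\mathsf{C}$-call returns strictly to the hit's left'' --- is satisfied by $s'$ and is preserved by $\mathsf{sub}$, because $\mathsf{sub}$ merely replaces one balanced call-block, $\mathsf{M}(q,i)\,\mathsf{F}$, by another balanced call-block, $\mathsf{C}(q,i)$ followed by the body $s'[\ell'+1]\dots s'[\phi_{s'}(\ell')]$ (which, since $\ell$ is chosen as the leftmost $\mathsf{M}$-symbol and $\phi_{s'}(\ell')<\ell$, contains no $\mathsf{M}$-symbol). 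Consequently each $\mathsf{sub}$-step strictly decreases the number of $\mathsf{M}$-symbols, so $\mathsf{sub}^\ast$ halts; and inspecting \cref{def:sub} shows that each step leaves the last symbol of the sequence unchanged, so the normal form ends with the return value of $\Call{\MemoMzAw}{q_0,i}$.

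The core is that this normal form is a run of $\Call{\MatchAw}{q_0,i}$. I would prove, by induction on the finite recursion tree of $\textproc{Memo}$ (finite by \cref{assum:no-eps-loops}), the following strengthening. Call a memoization table $M$ \emph{sound} if $M(q,i)=\mathsf{Failure}$ implies $\Call{\MatchAw}{q,i}=\mathsf{Failure}$. Then for every sound $M$, every state $q$ and position $i$: the run of $\Call{\MemoMAw}{q,i}$, after being rewritten by the evident variant of $\mathsf{sub}$ that on a hit $\mathsf{M}(q',i')$ without a preceding $\mathsf{C}(q',i')$ splices in a run of $\Call{\MatchAw}{q',i'}$ (which exists and ends in $\mathsf{F}$ by soundness), equals a run of $\Call{\MatchAw}{q,i}$, and $M$ remains sound on return. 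The base cases ($M(q,i)$ already $\mathsf{Failure}$; $q\in F$) are immediate. For $\mathsf{Eps}$, $\mathsf{Branch}$, $\mathsf{Char}$ one uses that $\textproc{Memo}$ records nothing at $(q,i)$ before its body completes, so the first recursive call still receives $M$; for $\mathsf{Branch}$, the failing first sub-call, by the induction hypothesis, records $\mathsf{Failure}$ only at keys whose $\textproc{Match}$-value is $\mathsf{Failure}$, so the table handed to the second sub-call is again sound; and by the induction hypothesis the rewritten body of each sub-call is a $\textproc{Match}$ run, so prepending $\mathsf{C}(q,i)$ and concatenating the sub-runs rebuilds exactly the $\textproc{Match}$ clause for $T(q)$. \cref{lem:no-eps-loop} is used to ensure that no hit occurring inside a sub-call's body can refer back to the enclosing $\mathsf{C}(q,i)$-call, so that rewriting commutes with this concatenation. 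Instantiating at $M=M_0$ --- where the ``no preceding $\mathsf{C}$'' branch never fires, so the variant rewriting is precisely $\mathsf{sub}^\ast$ --- and at $(q,i)=(q_0,i)$ yields $s=\mathsf{sub}^\ast(s')$.

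The step I expect to be the main obstacle is the bookkeeping forced by the mutable memoization table. A naive induction on the number of $\mathsf{M}$-symbols of $s'$ does not work, because $\mathsf{sub}(s',\ell)$ is not itself a $\textproc{Memo}^{M_0}$-run; this is what forces the generalization to arbitrary sound tables, and then one must carefully verify that soundness is a genuine invariant of $\textproc{Memo}$ and that the partial rewritings of sub-runs stay within their scope --- precisely the point at which \cref{lem:no-eps-loop}, together with the uniqueness of the first $\mathsf{C}(q,i)$-call, becomes essential.
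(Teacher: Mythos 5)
Your proof is correct and follows the same route as the paper's: both establish $s = \mathsf{sub}^\ast(s')$ by showing that each memo hit $\mathsf{M}(q,i)$ stands in for a verbatim replay of the body of the first $(q,i)$-call, which itself coincides with the corresponding $\textproc{Match}$ sub-run. The paper's terse proof justifies this by appealing to the purity of $\textproc{Match}$ (repeated calls replay identically) together with the coincidence of first-call bodies between the two algorithms, whereas you derive the same facts by an induction over the recursion tree carrying a soundness invariant on the memoization table --- a more detailed elaboration of the same idea that makes explicit the termination of $\mathsf{sub}^\ast$ and the bookkeeping for the mutable table, both of which the paper leaves implicit.
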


\begin{proof}
  The subsequence of $s$ corresponding to the first call for some $q$ and $i$ is the same as that of $s'$ by \cref{alg:matching,alg:memoization}.
  Since \cref{alg:matching} is pure, the subsequence of $s$ corresponding to the second and subsequent calls is the same as the first one.
  $\textproc{sub}$ replaces $\mathsf{M}(q, i)$ with the subsequence of the first call, and $\mathsf{sub}^\ast$ repeats it until $\mathsf{M}(q,i)$ no longer exists, so the lemma holds.
  \qed
\end{proof}

Finally, we show the correctness of $\textproc{Memo}$ with respect to \cref{prob:regexParMatch} (\cref{thm:algo-cor}).

\begin{proof}[for \cref{thm:algo-cor}]
  For an NFA $\mathcal{A} = (Q, q_0, F, T)$, an input string $w$, and a position $i_0 \in \{ 0, \dots, |w| \}$, we fix $s$ be a run of $\Call{\MatchAw}{q_0, i_0}$ and $s'$ be a run of $\Call{\MemoMzAw}{q_0, i_0}$.
  By \cref{lem:sub-star}, the last element of $s$ and $s'$ must be the same (i.e., $s[|s|-1] = s'[|s'|-1]$).
  We can retrieve the result value of the calls from the last element in the following manner; if the last element is $\mathsf{F}$, the result value is $\mathsf{Failure}$; if the last element is $\mathsf{S}(i')$, the result value is $\mathsf{SuccessAt}(i')$.
  Therefore, the theorem holds.
\end{proof}

\subsection{Proofs for the Linear-time Complexity and Correctness of $\textproc{\MemoLa}$ from \cref{alg:la-memoization} (\cref{thm:la-lin} and \cref{thm:la-cor})}\label{appendix:memo-la-proof}

First, we define runs of both $\textproc{\MatchLaAt}$ and $\textproc{\MemoLa}$.

\begin{definition}[a run of $\textproc{\MatchLaAt}$]\label{def:match-la-at-run}
  For a (la,at)-NFA $\mathcal{A} = (P, Q, q_0, F, T)$, an input string $w$, we fix a set $C = \{ \mathsf{C}(q, i) \mid q \in P, i \in \{ 0, \dots, |w| \} \} \cup \{ \mathsf{S}(i') \mid i' \in \{ 0, \dots, |w| \} \} \cup \{ \mathsf{F} \}$.
  For a position $i_0 \in \{ 0, \dots, |w| \}$, we define a run $s$ of $\Call{\MatchLaAtAw}{q_0, i_0}$ as a sequence of $C$, which is crafted in the following manner:
  \begin{itemize}
    \item The initial value of $s$ is the empty sequence, and it grows on the execution of $\Call{\MatchLaAtAw}{q_0, i_0}$.
    \item When $\textproc{\MatchLaAt}$ is called with some $q \in P, i \in \{ 0, \dots, |w| \}$, $\mathsf{C}(q, i)$ is appended to $s$.
    \item When $\textproc{\MatchLaAt}$ returns $\mathsf{SuccessAt}(i')$ for some $i' \in \{ 0, \dots, |w| \}$, $\mathsf{S}(i')$ is appended to $s$.
    \item When $\textproc{\MatchLaAt}$ returns $\mathsf{Failure}$, $\mathsf{F}$ is appended to $s$.
  \end{itemize}
\end{definition}

\begin{definition}[a run of $\textproc{\MemoLa}$]\label{def:memo-la-run}
  For a la-NFA $\mathcal{A} = (P, Q, q_0, F, T)$, an input string $w$, we fix a set $C' = C \cup \{ \mathsf{M}(q, i) \mid q \in P, i \in \{ 0, \dots, |w| \} \} \cup \{ \mathsf{S}' \}$.
  For a position $i_0 \in \{ 0, \dots, |w| \}$, we define a run $s$ of $\Call{\MemoLaMzAw}{q_0, i_0}$ as a sequence of $C'$, which is crafted in the following manner:
  \begin{itemize}
    \item The initial value of $s$ is the empty sequence, and it grows on the execution of $\Call{\MemoLaMzAw}{q_0, i_0}$.
    \item When $\textproc{\MemoLa}$ is called with some $q \in P, i \in \{ 0, \dots, |w| \}$, but $\mathsf{M}(q, i)$ is not defined, $\mathsf{C}(q, i)$ is appended to $s$.
    \item When $\textproc{\MemoLa}$ is called with some $q \in P, i \in \{ 0, \dots, |w| \}$ and $\mathsf{M}(q, i)$ is defined, $\mathsf{M}(q, i)$ is appended to $s$.
    \item When $\textproc{\MemoLa}$ returns $\mathsf{SuccessAt}(i')$ for some $i' \in \{ 0, \dots, |w| \}$, $\mathsf{S}(i')$ is appended to $s$.
    \item When $\textproc{\MemoLa}$ returns $\mathsf{Success}$, $\mathsf{S}'$ is appended to $s$.
    \item When $\textproc{\MemoLa}$ returns $\mathsf{Failure}$, $\mathsf{F}$ is appended to $s$.
  \end{itemize}
\end{definition}

Similarly to \cref{def:match-run,def:memo-run}, the sum of the numbers of $\mathsf{C}(q, i)$ and $\mathsf{M}(q, i)$ in a run is the same as the number of the recursive calls.
Also, for a run $s$, the number of the recursive calls is bounded by $O(|s|)$.

We define the similar one to $\phi_s$ (\cref{def:find-result}) for runs of $\textproc{\MatchLaAt}$ and $\textproc{\MemoLa}$.

\begin{definition}[a function that finds the result index of $\mathsf{C}(q, i)$]\label{def:find-result-la}
  For a run $s$ of $\textproc{\MemoLa}$ and an index $\ell \in \{ 0, \dots, |s| - 1 \}$ such that $s[\ell] = \mathsf{C}(q, i)$, a function $\phi_s$ to find the index of the result value of the $s[\ell] = C(q, i)$ call is defined as follows
  \begin{align*}
    \phi_s(\ell) &= \phi_{s}(\ell, 0) \\
    \phi_s(\ell, n) &=
      \begin{cases}
        \ell & \text{if $n \le 1$ and $s[\ell] = \mathsf{F}$, $\mathsf{S}(i')$ or $\mathsf{S}'$} \\
        \phi_s(\ell + 1, n - 1) & \text{if $n > 1$ and $s[\ell] = \mathsf{F}$, $\mathsf{S}(i')$ or $\mathsf{S}'$} \\
        \phi_s(\ell + 1, n + 1) & \text{if $s[\ell] = \mathsf{C}(q, i)$ or $\mathsf{M}(q, i)$},
      \end{cases}
  \end{align*}
  We also define the same function $\phi_s(\ell)$ for a run $s$ of $\textproc{\MatchLaAt}$.
\end{definition}

Lemmas similar to \cref{lem:no-eps-loop,lem:exec-once} also hold for runs of $\textproc{\MemoLa}$ because of the same proofs.

\begin{lemma}\label{lem:no-eps-loop-la}
  For a run $s$ of $\textproc{\MemoLa}$ and an index $\ell \in \{ 0, \dots, |s| - 1 \}$ such that $s[\ell] = \mathsf{C}(q, i)$, neither $\mathsf{C}(q, i)$ nor $\mathsf{M}(q, i)$ appear in the subsequence $s[\ell+1] \dots s[\phi_{s}(\ell)]$.
\end{lemma}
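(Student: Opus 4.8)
The plan is to reuse the proof of \cref{lem:no-eps-loop} almost verbatim, reducing the statement to \cref{assum:la-at-no-eps-loops} (la-NFAs have no $\varepsilon$-loops); the only new ingredient is the handling of $\mathsf{Sub}(\mathsf{pla},\ldots)$ transitions. First I would observe that the subsequence $s[\ell+1]\dots s[\phi_s(\ell)]$ is exactly the trace of the single call $\Call{\MemoLaMzAw}{q,i}$, so that an occurrence of $\mathsf{C}(q,i)$ or $\mathsf{M}(q,i)$ in it would witness a transitive recursive call of $\Call{\MemoLaMzAw}{q,i}$ to itself (note $\mathsf{M}(q,i)$ is appended precisely when $\textproc{\MemoLa}$ is invoked on the arguments $(q,i)$ with a defined memo entry, so it marks an attempted recursive call just as $\mathsf{C}(q,i)$ does). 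Such a self-call yields, reading off the sequence of states traversed, a nonempty path from $q$ back to $q$ in the transition graph.

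Next I would track the position component along this path. Among the cases of \cref{alg:la-memoization}, only $\mathsf{Char}$ changes the current position (from $i$ to $i+1$), whereas $\mathsf{Eps}$, $\mathsf{Branch}$, and $\mathsf{Sub}(\mathsf{pla},\mathcal{A}',q')$ leave it fixed; hence, since the reoccurrence happens at the same position $i$, the path from $q$ to $q$ uses no $\mathsf{Char}$ transition. For the $\mathsf{Sub}(\mathsf{pla},\mathcal{A}',q')$ case I would invoke the disjointness requirement of \cref{def:NFAwSubautom}: the state space of $\mathcal{A}'$ is disjoint from that of the ambient automaton, so once control enters the sub-simulation $\Call{\MemoLaMzAw}{q_0',i}$ it cannot return to the ambient state $q$; it can only resume at the successor $q'$, and at the unchanged position $i$ (positive look-ahead consumes no input). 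Thus, for the purpose of getting back to $q$, each $\mathsf{Sub}(\mathsf{pla},\ldots)$ transition behaves like a single $\varepsilon$-edge from $q$ to $q'$ — consistent with the classification of $\mathsf{pla}$ transitions as $\varepsilon$-transitions in \cref{ssec:sub-automata}. Collapsing the sub-simulations in this way, the path from $q$ back to $q$ is a nonempty sequence of $\varepsilon$-transitions of the ambient automaton, i.e.\ an $\varepsilon$-loop, contradicting \cref{assum:la-at-no-eps-loops}.

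The only delicate point — and it is mild — is exactly this interaction with sub-automata: one must confirm that a sub-simulation cannot re-enter an ambient state (guaranteed by disjointness in \cref{def:NFAwSubautom}) and that it hands control back at the same position (true for $\mathsf{pla}$, which does not consume input). Everything else is identical to the argument for \cref{lem:no-eps-loop}, so I expect the written proof to be only a couple of sentences longer than that of \cref{lem:no-eps-loop}.
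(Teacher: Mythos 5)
Your proposal is correct and follows the same route as the paper: the paper proves this lemma by direct appeal to \cref{assum:la-at-no-eps-loops} (``because of the same proofs'' as \cref{lem:no-eps-loop}, which itself just cites the no-$\varepsilon$-loop assumption). Your write-up merely makes explicit the details the paper leaves implicit — that a reoccurrence of $(q,i)$ inside the call's own trace yields a position-preserving cycle, and that $\mathsf{Sub}(\mathsf{pla},\ldots)$ transitions collapse to $\varepsilon$-edges by disjointness of sub-automaton state spaces — so no gap to report.
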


\begin{lemma}\label{lem:exec-once-la}
  For a run $s$ of \textproc{\MemoLa}, a state $q \in P$, and a position $i \in \{ 0, \dots, |w| \}$, $\mathsf{C}(q, i)$ appears at most once in $s$.
\end{lemma}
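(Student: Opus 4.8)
The plan is to follow the proof of \cref{lem:exec-once} almost verbatim, the only genuinely new ingredient being the third possible return value $\mathsf{Success}$ (written $\mathsf{S}'$ in a run of $\textproc{\MemoLa}$, see \cref{def:memo-la-run}). Fix an index $\ell$ with $s[\ell]=\mathsf{C}(q,i)$, i.e.\ a \emph{fresh} call with argument $(q,i)$. I would isolate two facts: (a) no call with argument $(q,i)$ is made \emph{during} the evaluation of this call, that is, within the segment $s[\ell+1]\dots s[\phi_s(\ell)]$; and (b) as soon as this call returns, at index $\phi_s(\ell)$, the memoization entry $M(q,i)$ is no longer $\bot$, so every later invocation with argument $(q,i)$ is recorded as $\mathsf{M}(q,i)$, never again as $\mathsf{C}(q,i)$. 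Together (a) and (b) force $\mathsf{C}(q,i)$ to occur only at index $\ell$.

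Fact (a) is immediate from \cref{lem:no-eps-loop-la}: by \cref{assum:la-at-no-eps-loops} the la-NFA has no $\varepsilon$-loop, and because a $\mathsf{Sub}$ transition launches a matching on a state space disjoint from that of the ambient automaton (\cref{def:NFAwSubautom}), this sub-matching cannot revisit $q$ either; hence neither $\mathsf{C}(q,i)$ nor $\mathsf{M}(q,i)$ appears in $s[\ell+1]\dots s[\phi_s(\ell)]$.

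For fact (b) I would case on $s[\phi_s(\ell)]$. If it is $\mathsf{F}$, then just before returning, \cref{alg:la-memoization} records $M(q,i)\gets\mathsf{Failure}$; if it is $\mathsf{S}(i')$ or $\mathsf{S}'$, then \cref{line:memo-la-success} records $M(q,i)\gets\mathsf{Success}$. Either way $M(q,i)\neq\bot$ from index $\phi_s(\ell)$ onward, which is exactly (b). This case split is the one point that needs care, and it is also where the extended range $\{\mathsf{Failure},\mathsf{Success}\}$ of the memoization table becomes essential: in \cref{alg:memoization} a success is not recorded, and the success sub-case of \cref{lem:exec-once} instead argues that a success propagates all the way up the call stack so the run terminates; for a success obtained \emph{inside} a sub-automaton — which, by \cref{lem:success}, is the only way $\mathsf{S}'$ can arise — that propagation argument no longer applies, so recording $\mathsf{Success}$ explicitly is exactly what makes the adaptation go through. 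Beyond getting this bookkeeping right I do not anticipate any real obstacle.
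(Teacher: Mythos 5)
Your proof is correct and follows essentially the same route as the paper, which simply asserts that the proof of \cref{lem:exec-once} carries over: a fresh call $\mathsf{C}(q,i)$ cannot recur within its own evaluation segment by \cref{lem:no-eps-loop-la}, and cannot recur afterwards because $M(q,i)$ is set before every return. Your explicit treatment of the $\mathsf{S}(i')$/$\mathsf{S}'$ cases via \cref{line:memo-la-success} is in fact the necessary repair of the original success-case argument (which relied on the run terminating immediately after a success, something false for states inside sub-automata), so your write-up is, if anything, more careful than the paper's ``same proof'' remark.
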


We define the \emph{in-degree} of the (la,at)-NFA transition function and use it to show that the number of the calls on the same $q$ and $i$ is at most the in-degree of $q$.

\begin{definition}[in-degree of (la,at)NFA]\label{def:in-deg-la-at}
  For a (la,at)-NFA $\mathcal{A} = (P, Q, q_0, F, T)$ and a state $q \in P$, the \emph{in-degree} of $q$, denoted by $\#_\mathsf{in}(q)$, is defined naturally by
  \begin{align*}
    \#_\mathsf{in}(q)
    =&\textstyle \left|\,\bigcup \{ q' \in P\ |\ T(q') = \mathsf{Eps}(q) \}\,\right|
    +            \left|\,\bigcup \{ q' \in P\ |\ T(q') = \mathsf{Char}(\sigma, q) \} \,\right| \\
    +&\textstyle \left|\,\bigcup \{ q' \in P\ |\ T(q') = \mathsf{Branch}(q, q'') \} \,\right| \\
    +&\textstyle \left|\,\bigcup \{ q' \in P\ |\ T(q') = \mathsf{Branch}(q'', q) \} \,\right| \\
    +&\textstyle \left|\,\bigcup \{ q' \in P\ |\ T(q') = \mathsf{Sub}(k, \mathcal{A}', q) \} \,\right| \\
    +&\textstyle \textnormal{(\textbf{if} $q = q_0$ \textbf{then} 1 \textbf{else} 0) } \\
    +&\textstyle \textnormal{(\textbf{if} $q = {q_0}'$ \textbf{then} 1 \textbf{else} 0,}\\
     &\qquad \textnormal{for some sub-automaton $\mathcal{A}' = (P', Q', {q_0}', F', T')$)}
  \end{align*}
  We also define the \emph{maximum in-degree} of $\mathcal{A}$ as $\#_\mathsf{in}(\mathcal{A}) = \max_{q \in P} \#_\mathsf{in}(q)$.
\end{definition}

\begin{lemma}\label{lem:memo-in-la}
  For a run $s$ of \textproc{\MemoLa}, a state $q \in P$, and a position $i \in \{ 0, \dots, |w| \}$, $\mathsf{M}(q, i)$ appears at most $\#_\mathsf{in}(q) - 1$ times in $s$.
\end{lemma}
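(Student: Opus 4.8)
The plan is to follow the proof of \cref{lem:memo-in} almost verbatim, upgrading the book-keeping to the transition structure of \cref{def:in-deg-la-at} --- in particular to $\mathsf{Sub}$ transitions and to the fact that a sub-automaton is entered at its initial state. Fix $q \in P$ and $i \in \{0,\dots,|w|\}$, and let $k$ be the number of occurrences of $\mathsf{M}(q,i)$ in $s$; we must show $k \le \#_\mathsf{in}(q) - 1$. First I would note that every occurrence of $\mathsf{M}(q,i)$ arises from a call of $\textproc{\MemoLa}$ with argument $(q,i)$ for which $M(q,i) \neq \bot$, so --- by the early-return guard of \cref{alg:la-memoization} --- such a call performs no recursion. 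Conversely, $M(q,i)$ is written only at the memoization updates near \cref{line:memo-la-success}, i.e.\ only at the end of a call whose argument is exactly $(q,i)$; hence if $k \ge 1$ then $\mathsf{C}(q,i)$ occurs, and by \cref{lem:exec-once-la} it occurs exactly once. Thus the total number of calls of $\textproc{\MemoLa}$ with argument $(q,i)$ recorded in $s$ is $k+1$, and it suffices to prove $k+1 \le \#_\mathsf{in}(q)$.

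Next I would assign to each of these $k+1$ calls the unique ``in-edge'' that triggered it. Reading off the control flow of \cref{alg:la-memoization}, a call of $\textproc{\MemoLa}$ with argument $(q,i)$ is produced in exactly one of the following ways: (i) by the external top-level call, when $(q,i)$ is the initial argument and $q = q_0$; (ii) by the sub-automaton entry, namely the unique transition $T(q'') = \mathsf{Sub}(k,\mathcal{A}',q''')$ of the parent NFA with $\mathcal{A}' = (P', Q', {q_0}', F', T')$ and $q = {q_0}'$ (uniqueness of this transition follows from the disjointness of state spaces imposed in \cref{def:NFAwSubautom}); (iii) by a caller state $q'$ with $T(q')$ of one of the forms $\mathsf{Eps}(q)$, $\mathsf{Branch}(q,q'')$, $\mathsf{Branch}(q'',q)$, $\mathsf{Sub}(k,\mathcal{A}'',q)$, running at position $i$; or (iv) by a caller state $q'$ with $T(q') = \mathsf{Char}(\sigma,q)$, $w[i-1]=\sigma$, running at position $i-1$. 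These are exactly the summands of $\#_\mathsf{in}(q)$ in \cref{def:in-deg-la-at}, counting both end-of-definition ``$+1$'' terms, and in each case the triggering call of $\textproc{\MemoLa}$ --- say with argument $(q',i')$ --- is determined by the in-edge together with $i$.

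Then I would show that each in-edge hosts at most one of the $k+1$ calls. By the first paragraph, among all executions of $\textproc{\MemoLa}$ with a fixed argument $(q',i')$ only the one producing $\mathsf{C}(q',i')$ recurses, and by \cref{lem:exec-once-la} there is at most one such; within that single execution each outgoing edge is used at most once (for a $\mathsf{Branch}$ the first and second successors are each taken once; for a $\mathsf{Sub}$ the sub-automaton entry and the continuation at the successor are each taken once; the other cases make a single recursive call). Hence if two of the $k+1$ calls to $(q,i)$ shared an in-edge, the corresponding caller would emit $\mathsf{C}(q',i')$ twice, contradicting \cref{lem:exec-once-la}. By the pigeonhole principle $k+1 \le \#_\mathsf{in}(q)$, which is the claim.

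The $\mathsf{Eps}$/$\mathsf{Branch}$/$\mathsf{Char}$ part of this argument is identical to \cref{lem:memo-in}; the only genuinely new ingredient --- and the step I expect to need the most care --- is the accounting for $\mathsf{Sub}$ transitions: one must justify that a sub-automaton's initial state is reached through exactly one edge (this is where the disjoint-state-space requirement of \cref{def:NFAwSubautom} is used), and that a single execution of a $\mathsf{Sub}$-labelled state triggers the sub-automaton entry and the follow-up continuation at most once each, so that the two new families of summands in \cref{def:in-deg-la-at} are tight rather than overcounted.
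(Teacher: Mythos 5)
Your proposal is correct and follows essentially the same route as the paper: the paper's proof is a terse pigeonhole argument (if $\mathsf{M}(q,i)$ appeared $\#_\mathsf{in}(q)$ times, some predecessor's $\mathsf{C}(q',i')$ would have to appear twice, contradicting \cref{lem:exec-once-la}), with a one-line remark that the sub-automaton initial states carry the extra $+1$ in \cref{def:in-deg-la-at}. Your write-up just makes the edge-to-call assignment and the $\mathsf{Sub}$-entry accounting explicit.
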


\begin{proof}
  This proof is the same as Lemma 5.
  Note that $\#_\mathsf{in}({q_0}')$ for the initial state ${q_0}'$ of each sub-automaton is greater than the other states by $1$.
  \qed
\end{proof}

Now, we can show the line-time complexity of \cref{alg:la-memoization} (\cref{thm:la-lin}).

\begin{proof}[for \cref{thm:la-lin}]
  For a la-NFA $\mathcal{A} = (P, Q, q_0, F, T)$, an input string $w$, and a position $i \in \{ 0, \dots, |w| \}$, let $s$ be the run of $\Call{\MemoLaMzAw}{q_0, i}$, by \cref{lem:exec-once-la,lem:memo-in-la}, $|s|$ is bounded by $O(\#_\mathsf{in}(\mathcal{A}) \times |Q| \times |w|)$ and the theorem holds.
  \qed
\end{proof}

To show \cref{thm:la-cor}, we define some functions that replace subsequences of a run of $\textproc{\MemoLa}$ with subsequences of a run of $\textproc{\MatchLaAt}$.

\begin{definition}[$\mathsf{sub\text{-}la}$]\label{def:sub-la}
  For a run $s$ of $\textproc{\MemoLa}$ and an index $\ell \in \{ 0, \dots, |s| -1 \}$ such that $s[\ell] = \mathsf{M}(q, i)$, we define $\mathsf{sub\text{-}la}(s, \ell)$ as follows
  \begin{equation*}
    \mathsf{sub\text{-}la}(s, \ell) = s[0] \dots s[\ell - 1]\,\mathsf{C}(q, i)\,s[\ell'+1] \dots s[\phi_s(\ell')] s[\ell+2] \dots s[|s| - 1]
  \end{equation*}
  where $\ell' \in \{ 0, \dots, |s| - 1 \}$ is the first index on $s$ such that $s[\ell'] = \mathsf{C}(q, i)$.
\end{definition}

\begin{definition}[$\mathsf{sub\text{-}la}^\ast$]\label{def:sub-la-star}
  For a run $s$ of $\textproc{\MemoLa}$, we define $\mathsf{sub\text{-}la}^\ast(s)$ as follows
  \begin{enumerate}
    \item Find the index of the first $\mathsf{M}(q, i)$ for some $q$ and $i$.
    \item If such an index is not found, then it returns $s$.
    \item Let $\ell$ be the index to be found in step 1.
    \item $s \gets \mathsf{sub\text{-}la}(s, \ell)$.
    \item Back to step 1.
  \end{enumerate}
\end{definition}

Similarly to \cref{def:sub,def:sub-star}, $\mathsf{sub\text{-}la}$ and $\mathsf{sub\text{-}la}^\ast$ preserve the last result of the run.
Furthermore, we can show the following lemma about the value of $\mathsf{sub\text{-}la}^\ast(s)$ for the same reason as \cref{lem:sub-star}

\begin{lemma}\label{lem:sub-la-star}
  For a la-NFA $\mathcal{A} = (P, Q, q_0, F, T)$, an input string $w$, and a position $i \in \{ 0, \dots, |w| \}$, let $s$ be a run of $\Call{\MatchLaAtAw}{q_0, i}$ and $s'$ be a run of $\Call{\MemoLaMzAw}{q_0, i}$, $s = \mathsf{sub\text{-}la}^\ast(s')$.
\end{lemma}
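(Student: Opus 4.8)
The plan is to mirror the proof of \cref{lem:sub-star}, extending it to cover the single new clause, namely the one for positive look-ahead, and the single new return value, namely $\mathsf{Success}$. As noted just before the lemma, $\mathsf{sub\text{-}la}$ and $\mathsf{sub\text{-}la}^\ast$ preserve the last element of a run, so it suffices to show that iterating $\mathsf{sub\text{-}la}$ on $s'$ terminates with $s$. First I would re-establish the ``first-call inlining'' structure from \cref{lem:sub-star}: for every $q\in P$ and $i\in\{0,\dots,|w|\}$, only the first occurrence of $(q,i)$ in $s'$ is an executed call $\mathsf{C}(q,i)$ (this is \cref{lem:exec-once-la}), every later occurrence is a memo-hit $\mathsf{M}(q,i)$ followed immediately by its result, and the body of the first $\mathsf{C}(q,i)$ call agrees clause-by-clause with the subsequence that $s$ assigns to the $(q,i)$ call, so that $\mathsf{sub\text{-}la}$ merely reinstates that body. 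The clauses for $q\in F$, $\mathsf{Eps}$, $\mathsf{Branch}$, and $\mathsf{Char}$ carry over verbatim; the only new case is $T(q)=\mathsf{Sub}(\mathsf{pla},\mathcal{A}',q')$, where I would observe that \cref{alg:la-at-matching} and \cref{alg:la-memoization} realise the same control flow --- run the sub-automaton from $q_0'$ at $i$, and on any success continue from $q'$ at the unchanged position $i$, otherwise fail --- the only difference being the memo bookkeeping around \cref{line:memo-la-success}, which emits no $\mathsf{C}$, $\mathsf{S}$, $\mathsf{S}'$, or $\mathsf{F}$ token and is precisely what $\mathsf{sub\text{-}la}$ undoes.

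The phenomenon absent from \cref{lem:sub-star} is that $\textproc{\MemoLa}$ may return $\mathsf{Success}$ and emit $\mathsf{S}'$ where $\textproc{\MatchLaAt}$ always emits a concrete $\mathsf{S}(i')$; this occurs only when a memo-hit $\mathsf{M}(q,i)$ with stored value $\mathsf{Success}$ is consulted, or when such a value is propagated up the recursion. I would handle it by tracking, for each $\mathsf{S}'$, the matching position $i'$ that $\textproc{\MatchLaAt}$ returns on $(q,i)$; the key point is that $\mathsf{Success}$ can enter a computation only by being read back from $M$, so a well-founded induction on the point in $s'$ at which a call completes shows that the first call to store $\mathsf{Success}$ in any entry of $M$ actually returned $\mathsf{SuccessAt}(i')$ --- and since the look-ahead clause treats $\mathsf{Success}$ and $\mathsf{SuccessAt}(i')$ identically and a call of $\textproc{\MatchLaAt}$ on $(q,i)$ is pure, each $\mathsf{S}'$ is faithfully ``decoded'' to the $\mathsf{S}(i')$ that appears in $s$. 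Since \cref{lem:success} guarantees that no top-level call returns $\mathsf{Success}$, the fully inlined sequence $\mathsf{sub\text{-}la}^\ast(s')$ contains no $\mathsf{S}'$ at all, so it is a sequence over the alphabet of $\textproc{\MatchLaAt}$-runs, and by the inlining structure it equals $s$.

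Termination of $\mathsf{sub\text{-}la}^\ast$ is argued exactly as in \cref{lem:sub-star}: each application of $\mathsf{sub\text{-}la}$ replaces one memo-hit by the body of a strictly earlier first call, and \cref{assum:la-at-no-eps-loops} rules out the cyclic dependency chains that could otherwise block a decreasing measure, so the iteration halts. I expect the main obstacle to be precisely the $\mathsf{S}'$-versus-$\mathsf{S}(i')$ bookkeeping --- making rigorous that inlining an $\mathsf{M}(q,i)$ whose stored value is $\mathsf{Success}$ reinstates exactly the $\mathsf{S}(i')$ that would have occurred in $s$ rather than some other accepting position --- which is where \cref{lem:success}, the purity of $\textproc{\MatchLaAt}$, and \cref{assum:la-at-no-eps-loops} together do the real work.
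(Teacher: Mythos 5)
Your overall route is the paper's own: use \cref{lem:exec-once-la} to see that each $(q,i)$ has at most one executed call, inline every memo hit $\mathsf{M}(q,i)$ with the body of that unique first call, appeal to the purity of $\textproc{\MatchLaAt}$ to argue that this body is exactly what the second and later calls would have produced, and iterate. The paper proves \cref{lem:sub-la-star} only by saying it holds ``for the same reason as \cref{lem:sub-star},'' so your clause-by-clause treatment of the $\mathsf{Sub}(\mathsf{pla},\dotsc)$ case is an elaboration of the intended argument rather than a different proof.

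The one place where your argument does not hold together is the handling of $\mathsf{S}'$. \cref{lem:success} concerns only the \emph{top-level} return value; it does not prevent intermediate calls from returning $\mathsf{Success}$. Whenever a call inside a look-ahead sub-automaton has a descendant that hits a memoized $\mathsf{Success}$, that call and all its ancestors up to the $\mathsf{pla}$ handler return $\mathsf{Success}$ and each contributes an $\mathsf{S}'$ token to $s'$. The operation $\mathsf{sub\text{-}la}$ rewrites only the two tokens $\mathsf{M}(q,i)$ and its immediate result; the propagated $\mathsf{S}'$ tokens of the enclosing calls are copied verbatim and therefore survive every iteration, so $\mathsf{sub\text{-}la}^\ast(s')$ can still contain $\mathsf{S}'$ where $s$ has concrete tokens $\mathsf{S}(i'')$. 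Hence the claim ``the fully inlined sequence contains no $\mathsf{S}'$ at all'' is false, and it sits in tension with the first half of your own argument, which (correctly) says each $\mathsf{S}'$ must be \emph{decoded} to the $\mathsf{S}(i')$ appearing in $s$. The repair is to make that decoding official: either compose $\mathsf{sub\text{-}la}^\ast$ with a relabelling of result tokens (the analogue of the function $g$ the paper introduces in the atomic-grouping case to collapse $\mathsf{F}(j)$ to $\mathsf{F}$, here made position-aware), or strengthen $\mathsf{sub\text{-}la}$ so that inlining a memo hit also rewrites the $\mathsf{S}'$ tokens of its ancestors. Your observation that every $\mathsf{Success}$ entry of $M$ originates from a first call that actually returned some $\mathsf{SuccessAt}(i')$, combined with purity of $\textproc{\MatchLaAt}$, is precisely what makes this decoding well defined, so the missing piece is bookkeeping rather than a new idea.
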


Finally, we show the correctness of $\textproc{\MemoLa}$ with respect to \cref{prob:laAtRegexMatch} (\cref{thm:la-cor}).

\begin{proof}[for \cref{thm:la-cor}]
  This proof is the same as for \cref{thm:algo-cor}.
\end{proof}

\subsection{Proofs for for the Linear-time Complexity and Correctness of $\textproc{\MemoAt}$ from \cref{alg:at-memoization} (\cref{thm:at-lin} and \cref{thm:at-cor})}\label{appendix:memo-at-proof}

First, we define a run for $\textproc{\MemoAt}$.
Note that a run of $\textproc{\MatchLaAt}$ is defined in \cref{appendix:memo-la-proof} (\cref{def:match-la-at-run}).

\begin{definition}[a run of $\textproc{\MemoAt}$]\label{def:memo-at-run}
  For an at-NFA $\mathcal{A} = (P, Q, q_0, F, T)$, an input string $w$, we fix a set $C' = \{ \mathsf{C}(q, i) \mid q \in P, i \in \{ 0, \dots, |w| \} \} \cup \{ \mathsf{S}(i') \mid i' \in \{ 0, \dots, |w| \} \} \cup \{ \mathsf{F}(j) \mid j \in \{ 0, \dots, \nu(\mathcal{A}) \} \} \cup \{ \mathsf{M}(q, i) \mid q \in P, i \in \{ 0, \dots, |w| \} \}$.
  For a position $i_0 \in \{ 0, \dots, |w| \}$, we define a run $s$ of $\Call{\MemoAtMzAAw}{q_0, i_0}$ as a sequence of $C'$, which is crafted in the following manner:
  \begin{itemize}
    \item The initial value of $s$ is the empty sequence, and it grows on the execution of $\Call{\MemoAtMzAAw}{q_0, i_0}$.
    \item When $\textproc{\MemoAt}$ is called with some $q \in P, i \in \{ 0, \dots, |w| \}$, but $\mathsf{M}(q, i)$ is not defined, $\mathsf{C}(q, i)$ is appended to $s$.
    \item When $\textproc{\MemoAt}$ is called with some $q \in P, i \in \{ 0, \dots, |w| \}$ and $\mathsf{M}(q, i)$ is defined, $\mathsf{M}(q, i)$ is appended to $s$.
    \item When $\textproc{\MemoAt}$ returns $\mathsf{SuccessAt}(i')$ for some $i' \in \{ 0, \dots, |w| \}$, $\mathsf{S}(i')$ is appended to $s$.
    \item When $\textproc{\MemoAt}$ returns $\mathsf{Failure}(j)$ for some $j \in \{ 0, \dots, \nu(\mathcal{A}) \}$, $\mathsf{F}(j)$ is appended to $s$.
  \end{itemize}
\end{definition}

Similarly to \cref{def:match-run,def:memo-run}, the sum of the numbers of $\mathsf{C}(q, i)$ and $\mathsf{M}(q, i)$ in a run is the same as the number of the recursive calls.
Also, for a run $s$ of $\textproc{\MemoAt}$, the number of the recursive calls is bounded by $O(|s|)$.

We define the similar one to $\phi_s$ (\cref{def:find-result}) for runs of $\textproc{\MemoAt}$.

\begin{definition}[a function that finds the result index of $\mathsf{C}(q, i)$]\label{def:find-result-at}
  For a run $s$ of $\textproc{\MemoAt}$ and an index $\ell \in \{ 0, \dots, |s| - 1 \}$ such that $s[\ell] = \mathsf{C}(q, i)$, a function $\phi_s$ to find the index of the result value of the $s[\ell] = C(q, i)$ call is defined as follows
  \begin{align*}
    \phi_s(\ell) &= \phi_{s}(\ell, 0) \\
    \phi_s(\ell, n) &=
      \begin{cases}
        \ell & \text{if $n \le 1$ and $s[\ell] = \mathsf{F}(j)$ or $\mathsf{S}(i')$} \\
        \phi_s(\ell + 1, n - 1) & \text{if $n > 1$ and $s[\ell] = \mathsf{F}(j)$ or $\mathsf{S}(i')$} \\
        \phi_s(\ell + 1, n + 1) & \text{if $s[\ell] = \mathsf{C}(q, i)$ or $\mathsf{M}(q, i)$},
      \end{cases}
  \end{align*}
\end{definition}


A lemma similar to \cref{lem:no-eps-loop} holds for runs of $\textproc{\MemoAt}$ because of the same proofs.

\begin{lemma}\label{lem:no-eps-loop-at}
  For a run $s$ of $\textproc{\MemoAt}$ and an index $\ell \in \{ 0, \dots, |s| - 1 \}$ such that $s[\ell] = \mathsf{C}(q, i)$, neither $\mathsf{C}(q, i)$ nor $\mathsf{M}(q, i)$ appear in the subsequence $s[\ell+1] \dots s[\phi_{s}(\ell)]$.
\end{lemma}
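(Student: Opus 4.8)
The plan is to prove \cref{lem:no-eps-loop-at} exactly as its non-extended counterpart \cref{lem:no-eps-loop}: reduce a hypothetical re-occurrence of the key $(q,i)$ inside the window $s[\ell+1]\dots s[\phi_s(\ell)]$ to an $\varepsilon$-loop in the at-NFA, which is forbidden by \cref{assum:la-at-no-eps-loops}. Concretely, I would argue by contradiction: assume $s[m]\in\{\mathsf{C}(q,i),\mathsf{M}(q,i)\}$ for some $\ell<m\le\phi_s(\ell)$, and take the innermost such $m$. By the matched-bracket structure of runs (the same structure that makes $\phi_s$ in \cref{def:find-result-at} well defined, via the counter $n$ that goes up on calls and down on returns), the prefix $s[\ell+1]\dots s[m]$ describes the recursive calls that are simultaneously active at that point, i.e.\ a chain $(q,i)=(p_0,i_0)\to(p_1,i_1)\to\cdots\to(p_k,i_k)$ of arguments passed down the call stack, with $(p_k,i_k)=(q,i)$.

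The key steps are then: (i) observe that along such a chain the position is non-decreasing — each recursive call in \cref{alg:at-memoization} either keeps $i$ unchanged ($\mathsf{Eps}$, $\mathsf{Branch}$, the entry of $\mathsf{Sub}(\mathsf{at})$ in \cref{line:at-memoization-call-1}), advances it ($\mathsf{Char}$), or resumes it at the sub-automaton's exit position $i'\ge i$ in \cref{line:at-memoization-call-2} (an at-NFA only matches forward, so a successful sub-match never moves the position backward); (ii) since $i_0=i_k=i$ and positions are non-decreasing, the entire chain stays at position $i$, so no $\mathsf{Char}$ transition is taken and every traversed $\mathsf{Sub}(\mathsf{at})$ matches $\varepsilon$; (iii) invoke the disjointness of the state spaces of nested sub-automata (\cref{def:NFAwSubautom}) to conclude that, since $q$ recurs, the chain never leaves the automaton that owns $q$, so after collapsing each traversed $\mathsf{Sub}(\mathsf{at})$ edge into the single step $p_\ell\to p_{\ell+1}$ we obtain a genuine cycle $q\to\cdots\to q$ using only $\varepsilon$-transitions of that automaton — an $\varepsilon$-loop, contradicting \cref{assum:la-at-no-eps-loops}. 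Finally, note that an $\mathsf{M}(q,i)$-symbol spawns no further calls (the corresponding call returns immediately from the memoization table), so it does not enlarge the window or the chain; combined with (iii) this disposes of the case $s[m]=\mathsf{M}(q,i)$ as well.

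The main obstacle — the only real difference from \cref{lem:no-eps-loop}, where there are no sub-automata — is handling $\mathsf{Sub}(\mathsf{at},\mathcal{A}',q')$ transitions, which both recurse into a disjoint state space and may change the position. Steps (ii) and (iii) are precisely what neutralises this: disjointness confines any re-occurrence of $(q,i)$ to the level of $q$, and monotonicity of positions forces every atomic group traversed inside the cycle to be zero-width. A residual subtlety is that a zero-width atomic group under a repetition (e.g.\ $(\texttt{?>}\varepsilon)^\ast$) formally produces such a cycle through an $\mathsf{at}$-edge that is not literally counted as an $\varepsilon$-transition; as elsewhere in the paper (cf.\ the discussion around \cref{assum:no-eps-loops,assum:la-at-no-eps-loops}), we take the $\varepsilon$-loop-handling mechanism to cover these degenerate cases, so the contradiction still goes through. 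The remaining bookkeeping — that a run of \cref{def:memo-at-run} is well bracketed and that ``active calls'' can be read off it — is routine and identical to the $\textproc{Memo}$ case.
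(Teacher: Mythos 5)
Your proof is correct and follows the same route as the paper's: the paper disposes of this lemma in one line by pointing to the no-$\varepsilon$-loop assumption (\cref{assum:la-at-no-eps-loops}), which is exactly the contradiction your chain-of-active-calls argument (non-decreasing positions, disjointness confining the chain to the automaton owning $q$, collapsing zero-width $\mathsf{Sub}$ edges) makes explicit. The one point where you go beyond the paper---observing that a zero-width atomic group under repetition yields a cycle through a $\mathsf{Sub}(\mathsf{at})$ edge, which the paper does not officially list among the $\varepsilon$-transitions---is a gap in the paper's own treatment rather than in yours, and your resolution (folding such degenerate cycles into the $\varepsilon$-loop-handling convention) is the natural reading.
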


A lemma similar to \cref{lem:exec-once} also holds for a run of $\textproc{\MemoAt}$.

\begin{lemma}\label{lem:exec-once-at}
  For a run $s$ of \textproc{\MemoAt}, a state $q \in P$, and a position $i \in \{ 0, \dots, |w| \}$, $\mathsf{C}(q, i)$ appears at most once in $s$.
\end{lemma}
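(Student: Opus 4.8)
The plan is to follow the proof of \cref{lem:exec-once} and adapt it to the extra features of \cref{alg:at-memoization}. Fix an index $\ell \in \{0,\dots,|s|-1\}$ with $s[\ell]=\mathsf{C}(q,i)$. By \cref{lem:no-eps-loop-at}, neither $\mathsf{C}(q,i)$ nor $\mathsf{M}(q,i)$ occurs in $s[\ell+1]\dots s[\phi_s(\ell)]$, so it suffices to rule out a further occurrence of $\mathsf{C}(q,i)$ strictly after $\phi_s(\ell)$. The return entry $s[\phi_s(\ell)]$ is either $\mathsf{F}(j)$ or $\mathsf{S}(i')$. The failure case is immediate and exactly as in \cref{lem:exec-once}: before returning, the call records $M(q,i)\gets\mathsf{Failure}(j)$ (see \cref{line:at-memoization-update} of \cref{alg:at-memoization}), so any later call with argument $(q,i)$ reads the table and contributes $\mathsf{M}(q,i)$, not $\mathsf{C}(q,i)$.

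The success case is where the new difficulty lies. Unlike in $\textproc{Memo}$, a $\mathsf{SuccessAt}$ return no longer necessarily propagates all the way to the top-level call: it propagates upward through the $\mathsf{Eps}$, $\mathsf{Char}$, $\mathsf{Branch}$ contexts and through the \emph{successor} call of an $\mathsf{at}$-transition (\cref{line:at-memoization-call-2}), but it stops at an $\mathsf{at}$-transition whose \emph{sub-automaton} call (\cref{line:at-memoization-call-1}) has just succeeded, after which a second recursive call (\cref{line:at-memoization-call-2}) is run at the committed position $i'$ before the deferred keys $K'$ are written to $M$ (\cref{line:at-memoization-batch}). The danger is that this second call, or one of its descendants, re-invokes the pair $(q,i)$ while $M(q,i)$ is still undefined. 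I would therefore isolate an auxiliary property---call it $(\ast)$-freeness: during the execution of the second recursive call (\cref{line:at-memoization-call-2}) triggered by an $\mathsf{at}$-transition with sub-automaton $\mathcal{A}'$ entered at position $i$, no call re-enters $\mathcal{A}'$ at position $i$. The argument for $(\ast)$-freeness uses two ingredients: positions are non-decreasing along any run, and by the disjointness of state spaces in \cref{def:NFAwSubautom} the sub-automaton $\mathcal{A}'$ is the label of a unique $\mathsf{Sub}$ transition, say $T(p)=\mathsf{Sub}(\mathsf{at},\mathcal{A}',p')$; hence re-entering $\mathcal{A}'$ requires first returning to $(p,i'')$ with $i''\ge i'\ge i$ and then taking that transition at position $i$, forcing $i''=i'=i$ (an empty sub-match) together with a character-free path from $p'$ back to $p$, i.e. a zero-width cycle, which is excluded by \cref{assum:la-at-no-eps-loops} (equivalently, by the termination of matching that this assumption ensures). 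Given $(\ast)$-freeness, I would close the success case by a case distinction / induction on the nesting depth $\nu_{\mathcal{A}_0}(q)$ (\cref{def:nesting-depth}): for states outside every atomic group the $\textproc{Memo}$ argument applies verbatim (success reaches the top and the run ends with a block of $\mathsf{S}$-entries, so no repeated $\mathsf{C}(q,i)$); for a sub-automaton's initial state $q_0'$, a call $\textproc{\MemoAt}(q_0',i)$ at a position equal to where $\mathcal{A}'$ is entered can only originate from \cref{line:at-memoization-call-1} of the unique call $\textproc{\MemoAt}(p,i)$ (any intra-$\mathcal{A}'$ incoming transition to $q_0'$ strictly advances the position, again by \cref{assum:la-at-no-eps-loops}), and since $\nu_{\mathcal{A}_0}(p)=\nu_{\mathcal{A}_0}(q_0')-1$ the induction hypothesis bounds $\mathsf{C}(p,i)$---hence $\mathsf{C}(q_0',i)$---by one.

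I expect the $\mathsf{at}$-transition case, and in particular establishing $(\ast)$-freeness and threading it through the nesting structure, to be the main obstacle: the deferred-key mechanism commits $K'$ to $M$ only \emph{after} the second recursive call returns a failure (\cref{line:at-memoization-batch}), so during that window memoization offers no protection, and one must rule out re-entry by a genuinely structural argument about positions and the uniqueness of sub-automaton entry points. I would state $(\ast)$-freeness as a separate lemma, since the same property is exactly what is needed to bound $\#\,\mathsf{M}(q,i)$ in the proof of \cref{thm:at-lin}; the rest of the proof of \cref{lem:exec-once-at} is then routine bookkeeping along the lines of \cref{lem:exec-once}.
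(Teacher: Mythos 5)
Your proposal is correct and follows essentially the same route as the paper's proof: the failure case is discharged by the memoization update at \cref{line:at-memoization-update}, and the success case isolates the danger of the second recursive call at \cref{line:at-memoization-call-2} re-entering $(q,i)$ before the deferred keys $K'$ are committed at \cref{line:at-memoization-batch}, ruling it out via \cref{assum:la-at-no-eps-loops} and then recursing up through the ambient automata. Your explicit $(\ast)$-freeness lemma is exactly the auxiliary fact the paper announces in \cref{sec:memo-at} and uses (somewhat tersely, by citing the no-$\varepsilon$-loop lemma) in its own argument, so you have if anything spelled out the key step in more detail than the paper does.
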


\begin{proof}
  We fix $\ell \in \{ 0, \dots, |s| - 1 \}$ to be an index such that $s[\ell] = C(q, i)$.
  When $s[\phi_s(\ell)] = F(j)$, the lemma holds in this case because the memoization table is updated after (Line \cref{line:at-memoization-update} of \cref{alg:at-memoization}).
  When $s[\phi_s(\ell)] = S$, we consider the case that $q$ is a state in an $\mathsf{at}$ sub-automaton because the other case is easy.
  In this case, $s[\phi_s(\ell)] = S$ means the call in \cref{line:at-memoization-call-1} returns $\mathsf{SuccessAt}(i', K')$, and $\Call{\MemoAtMAzAw}{q', i}$ is called in \cref{line:at-memoization-call-2}.
  Let $\ell'$ is an index such that $s[\ell'] = \mathsf{C}(q', i)$ or $\mathsf{M}(q', i)$ corresponding to this call in \cref{line:at-memoization-call-2}, by \cref{lem:no-eps-loop}, $C(q, i)$ cannot appear in $s[\ell'] \dots s[\phi(\ell')]$.
  When $s[\phi_s(\ell')] = \mathsf{F}$, \cref{line:at-memoization-batch} is performed and the lemma holds.
  When $s[\phi_s(\ell')] = \mathsf{S}$, because the same discussion for an ambient automaton is possible, the lemma also holds in this case.
  \qed
\end{proof}

A lemma similar to \cref{lem:memo-in-at} holds for runs of $\textproc{\MemoAt}$ because of the same proofs.

\begin{lemma}\label{lem:memo-in-at}
  For a run $s$ of \textproc{\MemoAt}, a state $q \in P$, and a position $i \in \{ 0, \dots, |w| \}$, $\mathsf{M}(q, i)$ appears at most $\#_\mathsf{in}(q) - 1$ times in $s$.
\end{lemma}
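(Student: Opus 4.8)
The plan is to run the same counting argument that establishes \cref{lem:memo-in} (and \cref{lem:memo-in-la}), now with \cref{lem:exec-once-at} in the role of \cref{lem:exec-once} and the (la,at)-NFA in-degree of \cref{def:in-deg-la-at} in the role of \cref{def:in-deg}. The first step is a bookkeeping observation: an occurrence of $\mathsf{M}(q,i)$ in a run is emitted only by a call of $\textproc{\MemoAt}$ on $(q,i)$ for which $M(q,i)$ is already defined; since $M_0$ is nowhere defined and every entry written into $M$ (at \cref{line:at-memoization-update} of \cref{alg:at-memoization}, or in the batch at \cref{line:at-memoization-batch}) has the form $\mathsf{Failure}(j)$, the entry $M(q,i)$ can only be created after some earlier call on $(q,i)$ that emitted $\mathsf{C}(q,i)$. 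Hence the number of calls to $\textproc{\MemoAt}$ with argument $(q,i)$ equals the number of $\mathsf{C}(q,i)$-occurrences plus the number of $\mathsf{M}(q,i)$-occurrences, and by \cref{lem:exec-once-at} the first summand is at most $1$ — in fact exactly $1$ whenever $\mathsf{M}(q,i)$ occurs at all.

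Next I would charge each call on $(q,i)$ to the event that produced it: it is either the top-level invocation with $q=q_0$, the call on a sub-automaton's initial state ${q_0}'$ at \cref{line:at-memoization-call-1}, or a call triggered by executing — inside the enclosing invocation — a transition that has $q$ as a successor (an $\mathsf{Eps}$, a first- or second-slot $\mathsf{Branch}$, a $\mathsf{Char}$, or a $\mathsf{Sub}$ transition into $q$). These possibilities are exactly the summands of \cref{def:in-deg-la-at}. For the $\mathsf{Eps}$, $\mathsf{Branch}$, and $\mathsf{Char}$ cases the enclosing state $q'$ and its position $i'$ are uniquely determined by $(q,i)$ and the transition's shape, so by \cref{lem:exec-once-at} applied to $\mathsf{C}(q',i')$ that transition fires at most once and triggers at most one call on $(q,i)$; the ``$q=q_0$'' and ``$q={q_0}'$'' terms likewise contribute at most one triggering event each. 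Summing over all incoming transitions one obtains that $(q,i)$ is called at most $\#_\mathsf{in}(q)$ times, whence $\mathsf{M}(q,i)$ occurs at most $\#_\mathsf{in}(q)-1$ times.

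The step that genuinely differs from \cref{lem:memo-in,lem:memo-in-la} — and the one I expect to be the main obstacle — is the $\mathsf{Sub}$-successor case: when $T(q'')=\mathsf{Sub}(\mathsf{at},\mathcal{A}',q)$, the continuation call at \cref{line:at-memoization-call-2} is on $(q,i')$ for the \emph{dynamically computed} end position $i'$ of the inner match, which is not a function of $(q,i)$ and the transition shape alone, so a priori the single incoming $\mathsf{Sub}$ transition might be charged with several calls on $(q,i)$ arising from distinct start positions $i''$ whose inner matches all end at $i$. To keep the charge injective I would combine: (i) by \cref{lem:no-eps-loop-at}, a call on $(q,i)$ never re-enters $(q,i)$ during its own execution; (ii) once the first call on $(q,i)$ returns — and it must return some $\mathsf{Failure}(\cdot)$ for a second call on $(q,i)$ to arise at all — $M(q,i)$ is recorded at \cref{line:at-memoization-update}, so every subsequent call on $(q,i)$ is a memo hit emitting $\mathsf{M}(q,i)$ without recursion; and (iii) a fresh \cref{line:at-memoization-call-2} re-entry with a different start position requires re-executing the enclosing state $q''$ there, which by \cref{lem:exec-once-at} happens at most once per position, together with the fact (established inside the proof of \cref{lem:exec-once-at}) that the \cref{line:at-memoization-call-2} recursion never loops back to its own $\mathsf{Sub}$ transition. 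Even if one only extracts from this case the coarser estimate $O(|P|\cdot|w|)$ on the number of such calls, this already suffices for the $O(|w|)$ conclusion of \cref{thm:at-lin} since $|P|$ is fixed; the stated in-degree bound is then recovered by the refinement above.
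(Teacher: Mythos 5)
Your overall strategy coincides with the paper's (which simply reruns the pigeonhole argument of \cref{lem:memo-in}: charge each call on $(q,i)$ to the incoming transition that triggered it and invoke \cref{lem:exec-once-at}), and you deserve credit for explicitly isolating the one step where that argument does not transfer verbatim, namely a successor $q$ of an $\mathsf{at}$-$\mathsf{Sub}$ transition, where the continuation call at \cref{line:at-memoization-call-2} is made at a \emph{computed} position. However, your items (i)--(iii) do not close that case. Item (iii) only shows that the enclosing state $q''$ is executed at most once \emph{per start position} $i''$, and there may be up to $|w|+1$ such positions; a priori each of them could yield an inner match ending at the same $i$ and hence a further call on $(q,i)$, so the single $\mathsf{Sub}$ edge could be charged $\Theta(|w|)$ times. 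Your fallback estimate would still give the $O(|w|)$ bound of \cref{thm:at-lin}, but it does not prove the lemma as stated, and the closing claim that ``the stated in-degree bound is then recovered by the refinement above'' points to no actual refinement. (A minor additional imprecision: a second call on $(q,i)$ can also arise after the first call returns $\mathsf{SuccessAt}$, if that success is later converted to failure by the batch at \cref{line:at-memoization-batch} of an enclosing atomic group; the conclusion that later calls are memo hits survives, but via the batch rather than \cref{line:at-memoization-update}.)

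The missing idea is to count not the executions of $q''$ but the \emph{successful returns} of \cref{line:at-memoization-call-1} with end position $i$. A call $\Call{\MemoAtMAzApw}{{q_0}',i''}$ can return $\mathsf{SuccessAt}(i,K')$ only if its call subtree contains a call on $(f',i)$, for some accepting state $f'$ of $\mathcal{A}'$, that returns $\mathsf{SuccessAt}(i,\cdot)$; such a call must be a $\mathsf{C}(f',i)$ event rather than a memo hit, because every entry ever written into $M$ has the form $\mathsf{Failure}(j)$ and so a memo hit can never produce a success. By \cref{lem:exec-once-at}, $\mathsf{C}(f',i)$ occurs at most once in the run, so (with the single accepting state produced by the paper's construction) \cref{line:at-memoization-call-2} invokes $(q,i)$ at most once over the whole run for this $\mathsf{Sub}$ edge, restoring the charge of one call per incoming transition and hence the bound $\#_\mathsf{in}(q)-1$ on occurrences of $\mathsf{M}(q,i)$. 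For comparison, the paper offers no argument here at all --- it declares the proof ``the same'' as that of \cref{lem:memo-in} --- so your proposal correctly diagnoses the obstacle but stops one lemma short of resolving it.
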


Now, we can show the line-time complexity of \cref{alg:at-memoization} (\cref{thm:at-lin}).

\begin{proof}[for \cref{thm:at-lin}]
  For an at-NFA $\mathcal{A} = (P, Q, q_0, F, T)$, an input string $w$, and a position $i \in \{ 0, \dots, |w| \}$, let $s$ be the run of $\Call{\MemoAtMzAAw}{q_0, i}$, by \cref{lem:exec-once-at,lem:memo-in-at}, $|s|$ is bounded by $O(\#_\mathsf{in}(\mathcal{A}) \times |Q| \times |w|)$ and the theorem holds.
  \qed
\end{proof}

To show \cref{thm:la-cor}, we first define a function to find an index corresponding to the result recorded in the memoization table.
That is, for an index $\ell$ such that $s[\ell] = \mathsf{C}(q, i)$, $s[\phi^\mathsf{C}_s(\ell)]$ must be $F(j)$ and $M(q, i) = \mathsf{Failure}(j)$ is recorded during the execution of $\textproc{\MemoAt}$.

\begin{definition}[a function that finds the memoized index of $\mathsf{C}(q, i)$]\label{def:find-memo-at}
  For an at-NFA $\mathcal{A} = (P, Q, q_0, F, T)$, and a run $s$ of $\textproc{\MemoAt}$ for $\mathcal{A}$ and an index $\ell \in \{ 0, \dots, |s| -1 \}$ such that $s[\ell] = \mathsf{C}(q, i)$, we define $\phi^\mathsf{C}_s(\ell)$ as follows
  \begin{equation*}
    \phi^\mathsf{C}_s(\ell) = \begin{cases*}
      \phi_s(\ell) & \text{if $s[\phi_s(\ell)] = \mathsf{F}(j)$} \\
      \phi^\mathsf{C}_s(\phi_s(\ell') + 1) & \text{if $s[\phi_s(\ell)] = \mathsf{S}$}
    \end{cases*}
  \end{equation*}
  where $\mathcal{A}' = (P', Q', {q_0}', F', T')$ such that $q \in Q'$, and $\ell' \in \{ 0, \dots, |s|-1 \}$ such that $s[\ell'] = \mathsf{C}({q_0}', i')$ and $\ell' < \ell < \phi(\ell')$.
\end{definition}

We define a function to find an index of the end of the range where a matching is skipped as a result of using the value recorded in the memoization table.
In the normal case, this is the next index, but if a failure with different nesting depths is recorded and used, sub-automata are skipped until the ambient automaton where the failure occurred.

\begin{definition}[a function that finds the skipped index of $\mathsf{M}(q, i)$]\label{def:find-skip-at}
  For an at-NFA $\mathcal{A} = (P, Q, q_0, F, T)$, and a run $s$ of $\textproc{\MemoAt}$ for $\mathcal{A}$ and an index $\ell \in \{ 0, \dots, |s| -1 \}$ such that $s[\ell] = \mathsf{M}(q, i)$ and $s[\ell + 1] = F(j)$, we define $\phi^\mathsf{M}_s(\ell)$ as follows
  \begin{align*}
    \phi^\mathsf{M}_s(\ell) &= \begin{cases*}
      \ell + 1 & \text{$j = \nu_\mathcal{A}(q)$} \\
      \downarrow^\mathsf{M}_s(\ell) & \text{$j < \nu_\mathcal{A}(q)$}
    \end{cases*} \\
    \downarrow^\mathsf{M}_s(\ell') &= \begin{cases*}
      \phi_s(\ell'') & \text{$j + 1 = \nu_\mathcal{A}({q_0}')$} \\
      \downarrow^\mathsf{M}_s(\ell'' - 1) & \text{$j + 1 < \nu_\mathcal{A}({q_0}')$}
    \end{cases*}
  \end{align*}
  where $\mathcal{A}' = (P', Q', {q_0}', F', T')$ such that $q' \in Q'$ for $s[\ell'] = \mathsf{C}(q', i')$ or $\mathsf{M}(q', i')$, and $\ell'' \in \{ 0, \dots, |s|-1 \}$ such that $s[\ell''] = \mathsf{C}({q_0}', i'')$ and $\ell'' < \ell' < \phi_s(\ell'')$.
\end{definition}

We define some functions that replace subsequences of a run of $\textproc{\MemoAt}$ with subsequences of a run of $\textproc{\MatchLaAt}$.

\begin{definition}[$\mathsf{sub\text{-}at}$]\label{def:sub-at}
  For a run $s$ of $\textproc{\MemoAt}$ and an index $\ell \in \{ 0, \dots, |s| -1 \}$ such that $s[\ell] = \mathsf{M}(q, i)$, we define $\mathsf{sub\text{-}at}(s, \ell)$ as follows
  \begin{equation*}
    \mathsf{sub\text{-}la}(s, \ell) = s[0] \dots s[\ell - 1]\,\mathsf{C}(q, i)\,s[\ell'+1] \dots s[\phi^\mathsf{C}_s(\ell')] s[\phi^\mathsf{M}_s(\ell)+1] \dots s[|s| - 1]
  \end{equation*}
  where $\ell' \in \{ 0, \dots, |s| - 1 \}$ is the first index on $s$ such that $s[\ell'] = \mathsf{C}(q, i)$.
\end{definition}

\begin{definition}[$\mathsf{sub\text{-}at}^\ast$]\label{def:sub-at-star}
  For a run $s$ of $\textproc{\MemoAt}$, we define $\mathsf{sub\text{-}at}^\ast(s)$ as follows
  \begin{enumerate}
    \item Find the index of the first $\mathsf{M}(q, i)$ for some $q$ and $i$.
    \item If such an index is not found, then it returns $s$.
    \item Let $\ell$ be the index to be found in step 1.
    \item $s \gets \mathsf{sub\text{-}at}(s, \ell)$.
    \item Back to step 1.
  \end{enumerate}
\end{definition}

Similarly to \cref{def:sub,def:sub-star}, $\mathsf{sub\text{-}at}$ and $\mathsf{sub\text{-}at}^\ast$ preserve the last result of the run.
Furthermore, we can show the following lemma about the value of $\mathsf{sub\text{-}at}^\ast(s)$ for the same reason as \cref{lem:sub-star}.
Here $g$ is a function convert on types from runs of $\textproc{Memo}$ to runs of $\textproc{Match}$, which is defined as $g(\varepsilon) = \varepsilon,\,g(F(j)\,s_1 \dots s_n) = F\,g(s_1 \dots s_n)$ and $g(s_0\,s_1 \dots s_n) = s_0\,g(s_1 \dots s_n)$.

\begin{lemma}\label{lem:sub-la-star}
  For a la-NFA $\mathcal{A} = (P, Q, q_0, F, T)$, an input string $w$, and a position $i \in \{ 0, \dots, |w| \}$, let $s$ be a run of $\Call{\MatchLaAtAw}{q_0, i}$ and $s'$ be a run of $\Call{\MemoAtMzAAw}{q_0, i}$, $s = g(\mathsf{sub\text{-}at}^\ast(s'))$.
\end{lemma}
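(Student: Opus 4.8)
The plan is to follow the template of the proof of \cref{lem:sub-star}: treat $\mathsf{sub\text{-}at}^\ast$ as a terminating rewriting on traces that repeatedly inlines memo-hits, and show that its normal form, after $g$ erases the failure-depth tags, is exactly the (unique, since \cref{alg:la-at-matching} is deterministic) run $s$ of $\Call{\MatchLaAtAw}{q_0,i}$. I would begin by establishing a structural invariant on runs of $\textproc{\MemoAt}$, by induction on run length: any entry ever written to $M$ is of the form $\mathsf{Failure}(j)$ with $j\le\nu_{\mathcal{A}_0}(q)$ for its key $(q,i)$ (the ``depth-bound lemma'' announced after \cref{def:nesting-depth}); moreover each such entry is written either directly at $q$'s own level via \cref{line:at-memoization-update}, or by the batch update of \cref{line:at-memoization-batch}, in which case there is a chain of nested $\mathsf{at}$-transitions linking $q$ to an outer automaton, at depth $j$, where the failure actually occurred. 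Using this provenance together with \cref{lem:exec-once-at} (each $\mathsf{C}(q,i)$ occurs at most once) and \cref{lem:no-eps-loop-at}, I would verify that $\phi_s,\phi^\mathsf{C}_s,\phi^\mathsf{M}_s$, and hence $\mathsf{sub\text{-}at}$ and $\mathsf{sub\text{-}at}^\ast$, are well-defined on every run of $\Call{\MemoAtMzAAw}{q_0,i}$: the descent defining $\phi^\mathsf{C}_s$ always moves to the initial state of a strictly inner $\mathsf{at}$-sub-automaton and so bottoms out, and the descent defining $\phi^\mathsf{M}_s$ follows the above chain of enclosing atomic groups down to depth $j$ in finitely many steps.

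The core is a single-step soundness claim: if $\ell$ is the first index of the current trace with $s[\ell]=\mathsf{M}(q,i)$, then $\mathsf{sub\text{-}at}(s,\ell)$ is again a trace that agrees with $\textproc{\MemoAt}$ on all of its non-$\mathsf{M}$ positions, has strictly fewer memo-hits, and has the same final element. A memo-hit $\mathsf{M}(q,i)$, immediately followed by $\mathsf{F}(j)$, corresponds by the invariant to the first $\mathsf{C}(q,i)$ call whose result-subsequence — located by $\phi^\mathsf{C}_s$, possibly several automaton levels outward when $j$ is shallow — ends in $\mathsf{F}(j)$; inlining that subsequence is precisely what $\textproc{\MatchLaAt}$ would recompute. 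The delicate point is the case $j<\nu_{\mathcal{A}_0}(q)$: here $\textproc{\MemoAt}$, receiving at \cref{line:at-memoization-call-2} a failure shallower than the current level inside one or more enclosing $\mathsf{Sub}(\mathsf{at},\dots)$ clauses, does \emph{not} backtrack into those atomic groups but discards the pending sub-automaton re-exploration and propagates the failure outward; the index $\phi^\mathsf{M}_s(\ell)$ is defined to point just past that discarded region. I would show that this ``skip to depth $j$'' coincides with the control flow of \cref{alg:la-at-matching}, which on an $\mathsf{at}$-transition never re-enters the sub-automaton once its successor has failed and simply fails the transition — so what $\mathsf{sub\text{-}at}$ drops is exactly a region $\textproc{\MatchLaAt}$ never executes. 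Reconciling, at this very point, the failure-depth bookkeeping of \cref{alg:at-memoization} with the depth-oblivious \cref{alg:la-at-matching}, and checking that the batch recording of \cref{line:at-memoization-batch} is what endows every later memo-hit with the correct depth so that $\phi^\mathsf{M}_s$ stays well-defined and accurate, is where I expect the bulk of the difficulty.

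Finally I would argue termination of $\mathsf{sub\text{-}at}^\ast$: an entry of $M$ for $(q,i)$ is written only after the first $\mathsf{C}(q,i)$ call has fully returned, so the memo-dependency relation among calls is acyclic, and each inlining replaces a hit by the trace of a dependency-earlier call; a concrete measure, such as the multiset of call-tree depths of the remaining $\mathsf{M}$ symbols under a suitable order, strictly decreases at every step, so the rewriting halts. In the normal form there are no $\mathsf{M}$ symbols, i.e., it is the trace $\textproc{\MemoAt}$ produces when no memo lookup ever succeeds; comparing \cref{alg:at-memoization} with \cref{alg:la-at-matching} clause by clause, the only remaining difference is the depth tag on $\mathsf{Failure}(j)$ and the inert set component of $\mathsf{SuccessAt}(i',K)$, and $g$ erases exactly these. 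Hence $g(\mathsf{sub\text{-}at}^\ast(s'))$ equals the run $s$ of $\Call{\MatchLaAtAw}{q_0,i}$; together with the already-noted fact that $\mathsf{sub\text{-}at}$ and $g$ preserve the run's final element (up to $f$), this also yields \cref{thm:at-cor}.
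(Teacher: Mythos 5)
Your proposal is correct and follows essentially the same route as the paper: the paper proves this lemma by the same inlining argument as \cref{lem:sub-star} (replace each memo-hit with the subsequence of the first call located via $\phi^\mathsf{C}_s$, use $\phi^\mathsf{M}_s$ to drop the region that \textproc{\MatchLaAt} never executes when a shallower-depth failure is reused, and iterate until no $\mathsf{M}$ symbols remain), relying on \cref{lem:exec-once-at} and the purity of \cref{alg:la-at-matching}. Your write-up simply supplies the details—well-definedness of the $\phi$ functions, provenance of batch-recorded entries, and termination—that the paper leaves implicit.
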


Finally, we show the correctness of $\textproc{\MemoAt}$ with respect to \cref{prob:laAtRegexMatch} (\cref{thm:at-cor}).

\begin{proof}[for \cref{thm:at-cor}]
  This proof is the same as for \cref{thm:algo-cor}.
\end{proof}

\section{Memoization for Regexes with Look-around and Atomic Grouping}\label{appendix:la-at-memoization}

\cref{alg:la-at-memoization} is our matching algorithm for (la, at)-NFAs; it is a combination of \cref{alg:la-memoization} and \cref{alg:at-memoization}.

\cref{alg:la-at-memoization} exhibits the desired properties, namely correctness (with respect to \cref{prob:laAtRegexMatch}) and linear-time complexity.

\begin{theorem}[linear-time complexity of \cref{alg:la-at-memoization}]\label{thm:la-at-lin}
  For an NFA $\mathcal{A} = (Q, q_0, F, T)$, an input string $w$, and an position $i \in \{ 0, \dots, |w| \}$, $\Call{\MemoLaAtMzAAw}{q_0, i}$ terminates with $O(|w|)$ recursive calls.
\end{theorem}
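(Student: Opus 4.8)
The plan is to reuse, almost verbatim, the machinery developed for \cref{thm:algo-lin}, \cref{thm:la-lin}, and \cref{thm:at-lin}. Concretely, I would introduce the notion of a \emph{run} of the algorithm in \cref{alg:la-at-memoization}, prove that the length of any such run is bounded by $O(\#_\mathsf{in}(\mathcal{A})\times|P|\times|w|)$, and then observe that the number of recursive calls performed by $\Call{\MemoLaAtMzAAw}{q_0,i}$ is bounded by the run length and hence is $O(|w|)$, since $|P|$ and $\#_\mathsf{in}(\mathcal{A})$ depend only on $\mathcal{A}$. Termination follows once the run is shown to be finite.

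First I would fix the run alphabet by merging the ones used in \cref{def:memo-la-run} and \cref{def:memo-at-run}: call symbols $\mathsf{C}(q,i)$ and memo-hit symbols $\mathsf{M}(q,i)$ for $q\in P$, together with the result symbols $\mathsf{S}(i')$, the bare-success symbol $\mathsf{S}'$ (inherited from the look-around case, appended when $\mathsf{Success}$ is returned), and the depth-annotated failure symbols $\mathsf{F}(j)$ for $j\in\{0,\dots,\nu(\mathcal{A})\}$ (inherited from the atomic-grouping case). A run is built as in those definitions, and the result-locating function $\phi_s$ is that of \cref{def:find-result-at} with $\mathsf{S}'$ treated like $\mathsf{S}(i')$. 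As before, the number of recursive calls of the combined algorithm equals the total number of $\mathsf{C}$- and $\mathsf{M}$-symbols in its run, hence is $O(|s|)$.

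The core of the argument is two structural lemmas. The first, combining \cref{lem:no-eps-loop-at}, \cref{lem:exec-once-la}, and \cref{lem:exec-once-at}, states that whenever $s[\ell]=\mathsf{C}(q,i)$ no symbol $\mathsf{C}(q,i)$ or $\mathsf{M}(q,i)$ occurs inside $s[\ell+1]\dots s[\phi_s(\ell)]$ (immediate from \cref{assum:la-at-no-eps-loops}) and, moreover, that $\mathsf{C}(q,i)$ occurs \emph{at most once} in the whole run. I would prove the second part by case analysis on the return symbol $s[\phi_s(\ell)]$. If it is a failure $\mathsf{F}(j)$, then the entry $M(q,i)\gets\mathsf{Failure}(j)$ is recorded immediately afterwards (as in \cref{line:at-memoization-update}), so any subsequent visit to $(q,i)$ produces an $\mathsf{M}$-symbol, not a $\mathsf{C}$-symbol. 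If it is $\mathsf{S}(i')$, I split on the location of $q$: if $q$ is a top-level state then only a global success can follow, so no later $\mathsf{C}(q,i)$ appears; if $q$ lies inside a positive look-around, the entry $M(q,i)\gets\mathsf{Success}$ is recorded on success (cf.\ \cref{line:memo-la-success}), as in the look-around proof; and if $q$ lies inside an atomic grouping, I would re-run the argument of \cref{lem:exec-once-at}, tracing the enclosing $\mathsf{at}$-transition, its successor call (\cref{line:at-memoization-call-2}), and the batch recording of $\mathsf{Failure}(j)$ over the delayed key-set (\cref{line:at-memoization-batch}). The second structural lemma, the analogue of \cref{lem:memo-in-at}, states that $\mathsf{M}(q,i)$ occurs at most $\#_\mathsf{in}(q)-1$ times, where $\#_\mathsf{in}$ is the in-degree for (la, at)-NFAs of \cref{def:in-deg-la-at}; its proof is unchanged, since $\#_\mathsf{in}(q)$ or more occurrences would force some predecessor of $q$ to be visited twice, contradicting the first lemma.

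Combining the two lemmas gives $|s|=O(\#_\mathsf{in}(\mathcal{A})\times|P|\times|w|)$, from which the theorem follows. I expect the main obstacle to be the at-most-once lemma in the combined setting: one has to re-verify, now in the presence of the look-around kind labels $\mathsf{pla},\mathsf{nla},\mathsf{plb},\mathsf{nlb}$, the delicate point noted after \cref{thm:at-cor} that the second recursive call made inside an $\mathsf{at}$-clause before its result is memoized never itself leads back to the same $(q,i)$ — using that look-around resets the nesting depth of atomic grouping (cf.\ \cref{def:nesting-depth}) and does not alter the return position — and also to check that the two kinds of memo entries, $\mathsf{Success}$ coming from look-around and $\mathsf{Failure}(j)$ coming from atomic grouping, never interfere, so that a once-visited $(q,i)$ is always intercepted by the memo table on any later visit.
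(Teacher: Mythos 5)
Your proposal is correct and follows essentially the same route as the paper, which proves this theorem by combining the run-based counting arguments of \cref{thm:la-lin} and \cref{thm:at-lin}: the at-most-once lemma for $\mathsf{C}(q,i)$, the in-degree bound for $\mathsf{M}(q,i)$, and the resulting bound $O(\#_\mathsf{in}(\mathcal{A})\times|P|\times|w|)$ on the run length. You also correctly flag the two delicate points the paper itself singles out, namely the resetting of the atomic-grouping nesting depth across look-around boundaries and the non-interference of the $\mathsf{Success}$ and $\mathsf{Failure}(j)$ memo entries.
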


\begin{theorem}[correctness of \cref{alg:la-at-memoization}]\label{thm:la-at-cor}
  For an NFA $\mathcal{A} = (Q, q_0, F, T)$, an input string $w$, and an position $i \in \{ 0, \dots, |w| \}$, $\Call{\MatchLaAtAw}{q_0, i} = f(\Call{\MemoLaAtMzAAw}{q_0, i})$.
\end{theorem}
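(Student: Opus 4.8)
\textbf{Proof proposal for \cref{thm:la-at-lin,thm:la-at-cor}.}
The plan is to follow, almost verbatim, the run-based proof template used for \cref{thm:algo-lin,thm:algo-cor} and already adapted to look-around (\cref{thm:la-lin,thm:la-cor}) and to atomic grouping (\cref{thm:at-lin,thm:at-cor}), now over a result alphabet that merges the symbols of both extensions. Concretely, I would first define a \emph{run} of \cref{alg:la-at-memoization} as a sequence over
\[
  C' \;=\; \{\mathsf{C}(q,i)\} \cup \{\mathsf{M}(q,i)\} \cup \{\mathsf{S}(i')\} \cup \{\mathsf{S}'\} \cup \{\mathsf{F}(j)\mid 0 \le j \le \nu(\mathcal{A}_0)\},
\]
where, exactly as in \cref{def:memo-la-run,def:memo-at-run}, $\mathsf{C}(q,i)$ marks a fresh recursive call, $\mathsf{M}(q,i)$ a memo hit, $\mathsf{S}(i')$ and $\mathsf{S}'$ the two flavours of success ($\mathsf{SuccessAt}$ and the look-around-only $\mathsf{Success}$), and $\mathsf{F}(j)$ a failure tagged with its atomic-grouping depth. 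The auxiliary functions $\phi_s$, $\phi^\mathsf{C}_s$, $\phi^\mathsf{M}_s$, $\downarrow^\mathsf{M}_s$ of \cref{appendix:memo-la-proof,appendix:memo-at-proof} extend immediately once $\mathsf{S}'$ is added to the ``return-value'' clause.

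For linear-time complexity (\cref{thm:la-at-lin}) the core lemma to re-establish is that each $\mathsf{C}(q,i)$ occurs at most once in a run. Its proof is the case split from the proofs of \cref{lem:exec-once-la,lem:exec-once-at}: if $q$ lies inside a $\mathsf{pla}$ sub-automaton, a $\mathsf{Success}$ entry is written before the call can recur; if $q$ lies inside an $\mathsf{at}$ sub-automaton, the batch update over the delayed key set (the analogue of \cref{line:at-memoization-batch}) does the job; in all cases \cref{assum:la-at-no-eps-loops} forbids a re-entrant call during one pending call. The only genuinely new ingredient is the non-reentrancy lemma needed because, in an $\mathsf{at}$ transition, the second recursive call on the successor runs before the first call's result is memoized: this second call must never reach the same argument as the sub-automaton-entry call, and I would prove this exactly as in the pure-$\mathsf{at}$ case, using that $\mathsf{pla}$/$\mathsf{nla}$/$\mathsf{plb}$/$\mathsf{nlb}$ transitions are $\varepsilon$-transitions that leave the position unchanged and therefore create no new cycle. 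Given uniqueness of $\mathsf{C}(q,i)$, each $\mathsf{M}(q,i)$ then appears at most $\#_\mathsf{in}(q)-1$ times (as in \cref{lem:memo-in-la,lem:memo-in-at}), so every run has length $O(\#_\mathsf{in}(\mathcal{A})\cdot|P|\cdot|w|)=O(|w|)$ and the bound on recursive calls follows.

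For correctness (\cref{thm:la-at-cor}) I would combine the substitution operators of \cref{def:sub-la,def:sub-at}: a memo hit $\mathsf{M}(q,i)$ returning $\mathsf{S}'$ or $\mathsf{S}(i')$ is replaced by the subsequence of the first $\mathsf{C}(q,i)$ call up to $\phi^\mathsf{C}_s$, while a memo hit returning $\mathsf{F}(j)$ is replaced by that subsequence followed by the skipped block delimited by $\phi^\mathsf{M}_s$ when $j < \nu_{\mathcal{A}_0}(q)$, so that the intervening nested atomic groupings (and look-arounds) are correctly discarded. Because \cref{alg:la-at-matching} is side-effect free and the subsequence for the first call on a given $(q,i)$ is reused verbatim by the non-memoized run for every later call on $(q,i)$, iterating these substitutions terminates and produces a genuine run of \cref{alg:la-at-matching} after erasing the depth tags via the coercion $g$; since every substitution preserves the final element of the run, applying $f$ to the result of \cref{alg:la-at-memoization} yields the result of \cref{alg:la-at-matching}, which is the assertion of \cref{thm:la-at-cor}.

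The step I expect to be the main obstacle is controlling the interaction of the two extensions: a look-around boundary resets the atomic-grouping nesting depth, so an $\mathsf{at}$-group sitting inside a look-around sub-automaton must receive its depth relative to that sub-automaton, and a memo entry $\mathsf{Failure}(j)$ produced outside a look-around must not be consulted to govern backtracking inside it (and conversely). Checking that the depth resets performed in \cref{alg:la-at-memoization} make the non-reentrancy lemma and the ``$\mathsf{C}(q,i)$ is unique'' lemma hold simultaneously --- while the memoization range now carries both $\mathsf{Success}$ and the depth-tagged failures --- is where the delicate bookkeeping lies; the remainder is a routine merge of the three earlier arguments in \cref{appendix:memo-proof,appendix:memo-la-proof,appendix:memo-at-proof}.
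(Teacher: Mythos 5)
Your proposal is correct and follows essentially the same route as the paper: the paper's own argument for \cref{thm:la-at-cor} is precisely to merge the run-based conversion machinery of \cref{thm:la-cor} (the $\mathsf{sub\text{-}la}$ substitution) with that of \cref{thm:at-cor} (the $\mathsf{sub\text{-}at}$ substitution with $\phi^\mathsf{C}_s$, $\phi^\mathsf{M}_s$ and the coercion $g$), over the combined result alphabet. You also correctly single out the one point the paper flags as needing extra care, namely that nesting depths of atomic grouping are reset at look-around boundaries.
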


We can also show \cref{thm:la-at-lin,thm:la-at-cor} by using the strategy combined with \cref{thm:la-lin,thm:at-lin} (for linear-time complexity, \cref{thm:la-at-lin}) or \cref{thm:la-cor,thm:at-cor} (for correctness, \cref{thm:la-at-cor}).
It should be noted that the nesting depth of atomic grouping is reset on an $\mathsf{at}$ transition (Line 17).

\begin{algorithm}[tbp]
  \caption{our matching algorithm with memoization for (la,at)-NFAs}\label{alg:la-at-memoization}
  \begin{algorithmic}[1]
    \Function{\MemoLaAtMAzAw}{$q, i$}
      \Statex\hspace*{\algorithmicindent}
      \begin{tabular}{rl} 
       \textbf{Parameters}: & a (la,at)-NFA $\mathcal{A}_0$, a sub-automaton $\mathcal{A}$ of $\mathcal{A}_0$ (it can be $\mathcal{A}_0$ itself), \\
                            &an input string $w$, and a memoization table $M\colon P \times \mathbb{N} \rightharpoonup$ \\
                            & \quad\qquad $\{ \mathsf{Failure}(j) \mid j \in \{ 0, \dots, \nu(\mathcal{A}_0) \} \} \cup \{ \mathsf{Success} \}$ \\
            \textbf{Input}: & a current state $q$, and a current position $i$ \\
           \textbf{Output}: & returns $\mathsf{SuccessAt}(i', K)$ if the matching succeeds, \\
                            & returns $\mathsf{Success}$ if the matching succeeds with memoization, or \\
                            & returns $\mathsf{Failure}(j)$ if the matching fails
      \end{tabular}
      \State$(P, Q, q_0, F, T) = \mathcal{A}$
      \If{$M(q, i) \ne \bot$}
        \Return$M(q, i)$
      \EndIf%
      \State$\mathsf{result} \gets \bot$
      \If{$q \in F$}
        $\mathsf{result} \gets \mathsf{Success}(i, \emptyset)$
      \ElsIf{$T(q) = \mathsf{Eps}(q')$}
        $\mathsf{result} \gets \Call{\MemoLaAtMAzAw}{q', i}$
      \ElsIf{$T(q) = \mathsf{Branch}(q', q'')$}
        \State$\mathsf{result} \gets \Call{\MemoLaAtMAzAw}{q', i}$
        \If{$\mathsf{result} = \mathsf{Failure}(j)$ \textbf{and} $j = \nu_{\mathcal{A}_0}(q)$}
          \State$\mathsf{result} \gets \Call{\MemoLaAtMAzAw}{q'', i}$
          \If{$\mathsf{result} = \mathsf{Failure}(j')$}
            $\mathsf{result} \gets \mathsf{Failure}(\min(j, j'))$
          \EndIf%
        \EndIf%
      \ElsIf{$T(q) = \mathsf{Char}(\sigma, q')$}
        \If{$i < |w|$ \textbf{and} $w[i] = \sigma$}
          $\mathsf{result} \gets \Call{\MemoLaAtMAzAw}{q', i + 1}$
        \Else\ 
          $\mathsf{result} \gets \mathsf{Failure}(\nu_{\mathcal{A}_0}(q))$
        \EndIf%
      \ElsIf{$T(q) = \mathsf{Sub}(\mathsf{pla}, \mathcal{A}', q')$}
        \State$(P', Q', {q_0}', F', T') = \mathcal{A}'$
        \State$\mathsf{result} \gets \Call{\MemoLaAtMApApw}{q_0', i}$
        \If{$\mathsf{result} = \mathsf{SuccessAt}(i', K')$}
          \For{$k \in K$'}
            $M(k) \gets \mathsf{Success}$
          \EndFor%
          \State$\mathsf{result} \gets \Call{\MemoLaAtMAzAw}{q', i}$
        \ElsIf{$\mathsf{result} = \mathsf{Success}$}
          $\mathsf{result} \gets \Call{\MemoLaAtMAzAw}{q', i}$
        \ElsIf{$\mathsf{result} = \mathsf{Failure}(j)$}
          $\mathsf{result} \gets \mathsf{Failure}(\nu_{\mathcal{A}_0}(q))$
        \EndIf%
      \ElsIf{$T(q) = \mathsf{Sub}(\mathsf{at}, \mathcal{A}', q')$}
        \State$(P', Q', {q_0}', F', T') = \mathcal{A}'$
        \State$\mathsf{result} \gets \Call{\MemoLaAtMAzApw}{q_0', i}$
        \If{$\mathsf{result} = \mathsf{SuccessAt}(i', K')$}
          \State$\mathsf{result} \gets \Call{\MemoLaAtMAzAw}{q', i'}$
          \If{$\mathsf{result} = \mathsf{SuccessAt}(i'', K'')$}
            $\mathsf{result} \gets \mathsf{SuccessAt}(i'', K' \cup K'')$
          \ElsIf{$\mathsf{result} = \mathsf{Success}$}
            \For{$k \in K'$}
              $M(k) \gets \mathsf{Success}$
            \EndFor%
          \ElsIf{$\mathsf{result} = \mathsf{Failure}(j)$}
            \For{$k \in K'$}
              $M(k) \gets \mathsf{Failure}(j)$
            \EndFor%
          \EndIf%
        \ElsIf{$\mathsf{result} = \mathsf{Failure}(j)$ \textbf{and} $j > \nu_{\mathcal{A}_0}(q)$}
          $\mathsf{result} \gets \mathsf{Failure}(\nu_{\mathcal{A}_0}(q))$
        \EndIf%
      \EndIf%
      \If{$\mathsf{result} = \mathsf{SuccessAt}(i', K)$}
        $\mathsf{result} \gets \mathsf{SuccessAt}(i', K \cup \{ (q, i) \})$
      \ElsIf{$\mathsf{result} = \mathsf{Success}$}\ 
        $M(q, i) \gets \mathsf{Success}$
      \ElsIf{$\mathsf{result} = \mathsf{Failure}(j)$}\ 
        $M(q, i) \gets \mathsf{Failure}(j)$
      \EndIf%
      \State\Return$\mathsf{result}$
    \EndFunction%
  \end{algorithmic}
\end{algorithm}

\end{ArxivVersion}
\end{document}